\pgfplotsset{compat=1.14}
\tikzset{carrollian/.style={draw,regular polygon,regular polygon sides=3,fill=DarkOrange,color=DarkOrange,minimum width=1pt,scale=0.25}}
\tikzset{lorentzian/.style={draw,regular polygon,regular polygon
    sides=4,fill=red!70!black,color=red!70!black,minimum width=1pt,scale=0.3}}
\theoremstyle{plain}
\newtheorem{lemma}{Lemma}
\theoremstyle{definition}
\newtheorem{definition}{Definition}
\newcommand{\g}{\mathfrak{g}}
\newcommand{\h}{\mathfrak{h}}
\renewcommand{\a}{\mathfrak{a}}
\renewcommand{\r}{\mathfrak{r}}
\newcommand{\z}{\mathfrak{z}}
\newcommand{\so}{\mathfrak{so}}
\renewcommand{\sp}{\mathfrak{sp}}
\newcommand{\osp}{\mathfrak{osp}}
\renewcommand{\k}{\mathfrak{k}}
\newcommand{\s}{\mathfrak{s}}
\newcommand{\sa}{\mathfrak{sa}}
\renewcommand{\v}{\mathfrak{v}}
\newcommand{\m}{\mathfrak{m}}
\renewcommand{\H}{H}
\newcommand{\be}{\boldsymbol{e}}
\renewcommand{\v}{\boldsymbol{v}}
\newcommand{\J}{\boldsymbol{J}}
\newcommand{\B}{\boldsymbol{B}}
\renewcommand{\P}{\boldsymbol{P}}
\newcommand{\Q}{\boldsymbol{Q}}
\newcommand{\bc}{\mathbb{c}}
\newcommand{\sJ}{\mathsf{J}}
\newcommand{\sB}{\mathsf{B}}
\newcommand{\sP}{\mathsf{P}}
\newcommand{\sQ}{\mathsf{Q}}
\newcommand{\sV}{\mathsf{V}}
\newcommand{\cV}{\mathscr{V}}
\newcommand{\cJ}{\mathscr{J}}
\newcommand{\cS}{\mathscr{S}}
\newcommand{\Agr}{\mathcal{A}}
\newcommand{\Kgr}{\mathcal{K}}
\newcommand{\Hgr}{\mathcal{H}}
\newcommand{\eE}{\mathcal{C}^\infty}
\newcommand{\eN}{\mathcal{N}}
\newcommand{\eO}{\mathcal{O}}
\newcommand{\Cl}{C\ell}
\renewcommand{\Re}{\operatorname{Re}}
\renewcommand{\Im}{\operatorname{Im}}
\newcommand{\Ad}{\operatorname{Ad}}
\newcommand{\vol}{\operatorname{vol}}
\newcommand{\id}{\mathbb{1}}
\newcommand{\BB}{\mathbb{B}}
\newcommand{\PP}{\mathbb{P}}
\newcommand{\JJ}{\mathbb{J}}
\newcommand{\ii}{\mathbb{i}}
\newcommand{\jj}{\mathbb{j}}
\newcommand{\kk}{\mathbb{k}}
\newcommand{\hh}{\mathbb{h}}
\newcommand{\hhbar}{\overline{\mathbb{h}}}
\newcommand{\bb}{\mathbb{b}}
\newcommand{\pp}{\mathbb{p}}
\newcommand{\ppbar}{\overline{\mathbb{p}}}
\newcommand{\qq}{\mathbb{q}}
\newcommand{\uu}{\mathbb{u}}
\newcommand{\xx}{\mathbb{x}}
\newcommand{\qqbar}{\overline{\mathbb{q}}}
\newcommand{\sbar}{\overline{s}}
\newcommand{\ubar}{\overline{u}}
\newcommand{\HH}{\mathbb{H}}
\newcommand{\RP}{\mathbb{RP}}
\newcommand{\RR}{\mathbb{R}}
\newcommand{\ZZ}{\mathbb{Z}}
\newcommand{\Aut}{\operatorname{Aut}}
\newcommand{\GL}{\operatorname{GL}}
\newcommand{\SO}{\operatorname{SO}}
\newcommand{\SU}{\operatorname{SU}}
\newcommand{\Sp}{\operatorname{Sp}}
\newcommand{\Spin}{\operatorname{Spin}}
\newcommand{\End}{\operatorname{End}}
\newcommand{\zLC}{\mathsf{LC}}
\newcommand{\zAdSC}{\mathsf{AdSC}}
\newcommand{\zdSC}{\mathsf{dSC}}
\newcommand{\zdS}{\mathsf{dS}}
\newcommand{\zAdS}{\mathsf{AdS}}
\newcommand{\zC}{\mathsf{C}}
\newcommand{\zG}{\mathsf{G}}
\newcommand{\zS}{\mathsf{S}}
\newcommand{\zTS}{\mathsf{TS}}
\newcommand{\zAdSG}{\mathsf{AdSG}}
\newcommand{\zdSG}{\mathsf{dSG}}
\newcommand{\ztAdSG}{\mathsf{AdSG}}
\newcommand{\ztdSG}{\mathsf{dSG}}
\newcommand{\EE}{\mathbb{E}}
\newcommand{\MM}{\mathbb{M}}
\newcommand{\spn}[1]{\operatorname{span}_\RR\left\{#1\right\}}
\newcommand{\eff}[1]{\colorbox{blue!30}{$#1$}}
\newcommand{\ari}[1]{\colorbox{green!30}{$#1$}}
\newcommand{\non}[1]{\colorbox{gris!20}{\textcolor{gris!50}{$#1$}}}
\definecolor{gris}{rgb}{0.5,0.5,0.5}
\newcommand{\zero}{{\color{gris}0}}
\numberwithin{equation}{section}
\begin{document}

\title{Kinematical superspaces}

\author[Figueroa-O'Farrill]{José Figueroa-O'Farrill}
\author[Grassie]{Ross Grassie}
\address{Maxwell Institute and School of Mathematics, The University
  of Edinburgh, James Clerk Maxwell Building, Peter Guthrie Tait Road,
  Edinburgh EH9 3FD, Scotland, United Kingdom}
\email[JMF]{\href{mailto:j.m.figueroa@ed.ac.uk}{j.m.figueroa@ed.ac.uk}}
\email[RG]{\href{mailto:s1131494@sms.ed.ac.uk}{s1131494@sms.ed.ac.uk}}
\begin{abstract}
  We classify $N{=}1$ $d=4$ kinematical and aristotelian Lie
  superalgebras with spatial isotropy, but not necessarily parity nor
  time-reversal invariance. Employing a quaternionic formalism which
  makes rotational covariance manifest and simplifies many of the
  calculations, we find a list of $43$ isomorphism classes of Lie
  superalgebras, some with parameters, whose (nontrivial) central
  extensions are also determined.  We then classify their
  corresponding simply-connected homogeneous $(4|4)$-dimensional
  superspaces, resulting in a list of $27$ homogeneous superspaces,
  some with parameters, all of which are reductive.  We determine the
  invariants of low rank and explore how these superspaces are related
  via geometric limits.
\end{abstract}
\thanks{EMPG-19-19}
\maketitle
\tableofcontents

\section{Introduction}
\label{sec:introduction}

Four-dimensional rigid supersymmetry first appeared in 1971 in a paper
\cite{Golfand:1971iw} of Golfand and Likhtman, which is to our
knowledge the first appearance of what is now known as the $N{=}1$
$d{=}4$ Poincaré superalgebra. A few years later, Zumino
\cite{Zumino:1977av} studied rigid supersymmetry in $\zAdS_4$, based
on the simple Lie superalgebra $\osp(1|4)$. For many years these two
were the only known $N{=}1$ $d{=}4$ Lie superalgebras. They are both
$(10|4)$-dimensional, and, in fact, the Poincaré superalgebra can be
exhibited as a contraction of $\osp(1|4)$ à la Inönü--Wigner. If we
wish to extend ($N{=}1$ $d{=}4$) supersymmetry beyond Minkowski and
anti~de~Sitter spacetimes, we are faced with a choice. One can study
$N{=}1$ supersymmetry algebras associated to other four-dimensional
lorentzian manifolds, as in the Lie algebraic approach of
\cite{deMedeiros:2016srz}, which results in Lie superalgebras which
are filtered deformations of subalgebras of the Poincaré superalgebra.
These filtered deformations have dimension $(n|4)$ for $n \leq 10$, and
hence, in most cases, some of the spacetime symmetry is broken. A
second approach, which is the one taken here, is to keep the
dimension of the superalgebra fixed at $(10|4)$, but sacrificing the
existence of a lorentzian metric.

In short, the present paper extends (in dimension four) the recent
classification \cite{Figueroa-OFarrill:2018ilb} of spatially-isotropic
homogeneous spacetimes, whose geometric properties were further
studied in \cite{Figueroa-OFarrill:2019sex}, to a classification of
$(4|4)$-dimensional simply-connected spatially-isotropic homogeneous
superspaces.  In particular, we classify the $(10|4)$-dimensional Lie
superalgebras with spatial isotropy.  (See later for a precise
definition.)

It is a natural question to ask, as Bacry and Lévy-Leblond did half a
century ago \cite{MR0238545}, what the possible kinematics are.  This
question translates into the geometric problem of classifying the
spacetimes which admit a transitive action of a kinematical Lie group.  To
answer this question, one first needs to classify kinematical Lie groups
and then study their possible homogeneous spaces.  If we allow the
ambiguity of classifying homogeneous spaces up to coverings (or,
equivalently, classifying the simply-connected homogeneous spaces),
this problem has a largely algebraic solution: namely, the
classification of pairs $(\k,\h)$, where $\k$ is a kinematical Lie
algebra and $\h$ a suitable subalgebra.  With every such pair $(\k,\h)$
(subject to some mild conditions) there is associated a unique
simply-connected homogeneous space $M = \Kgr/\Hgr$, where $\Kgr$ is a
simply-connected (and connected) kinematical Lie group with Lie
algebra $\k$ and $\Hgr$ is the connected subgroup generated by $\h$.
On $M$, the generators of $\k$ act as infinitesimal rotations, boosts
and spatio-temporal translations, whereas the generators of $\h$ act
as infinitesimal rotations and boosts about a choice of ``origin''
determined by the subgroup $\Hgr$ itself.

Let us restrict ourselves to the case of four spacetime dimensions. In
their pioneering paper \cite{MR0238545}, Bacry and Lévy-Leblond
presented a classification of kinematical Lie algebras subject to the
assumptions of the existence of automorphisms interpretable as parity
and time-reversal. These ``by no means compelling'' assumptions were
removed in \cite{MR857383}, resulting in the classification of
kinematical Lie algebras (with spatial isotropy) up to
isomorphism. Already in these papers, the observation was made that
every such kinematical Lie algebra $\k$ (of dimension 10) admits a
six-dimensional subalgebra $\h$ so that the pair $(\k,\h)$, \emph{if
  geometrically realisable as a homogeneous space}, is a
four-dimensional spatially isotropic homogeneous spacetime of a
kinematical Lie group.  The precise relation between pairs $(\k,\h)$
and homogeneous spacetimes is a little subtle, and this problem was
revisited in \cite{Figueroa-OFarrill:2018ilb}, arriving at the
classification of simply-connected spatially-isotropic homogeneous
spacetimes which is summarised in Table~\ref{tab:spacetimes} below.
(The results in \cite{Figueroa-OFarrill:2018ilb} are not restricted to
four spacetime dimensions, but already in the four-dimensional case
they refine and slightly correct the list in \cite{MR857383}.)  We
choose a basis where $\k$ is spanned\footnote{The boosts generators
  $B_i$ are absent in the aristotelian spacetimes.} by
$\{J_i, B_i, P_i, H\}$ and $\h$ is spanned by $\{J_i, B_i\}$, so that
the pair $(\k,\h)$ is uniquely determined by specifying the Lie
brackets in this basis.  We use a standard shorthand notation for the
Lie brackets, where $[H,\B] = \B$ stands for $[H,B_i] = B_i$,
$[\J,\B] = \B$ stands for $[J_i,B_j] = \epsilon_{ijk} B_k$ and
$[\B,\P] = H$ stands for $[B_i,P_j]= \delta_{ij} H$, et cetera.  As
already discussed in the original papers \cite{MR0238545,MR857383},
the set of isomorphism classes of kinematical Lie algebras is
partially ordered by contractions, which manifest themselves
geometrically as limits between the homogeneous spacetimes.  Such
limits are discussed at length in \cite{Figueroa-OFarrill:2018ilb}.

\begin{table}[h!]
  \centering
  \caption{Simply-connected spatially-isotropic homogeneous spacetimes}
  \label{tab:spacetimes}
  \rowcolors{2}{blue!10}{white}
  \resizebox{\textwidth}{!}{
    \begin{tabular}{l|*{5}{>{$}l<{$}}|l}\toprule
      \multicolumn{1}{c|}{Label} & \multicolumn{5}{c|}{Nonzero Lie brackets in addition to $[\J,\J] = \J$, $[\J, \B] = \B$, $[\J,\P] = \P$} & \multicolumn{1}{c}{Comments}\\\midrule
      \hypertarget{S1}{$\MM^4$} & [H,\B] = -\P & & [\B,\B] = -\J & [\B,\P] = H & & Minkowski\\
      \hypertarget{S2}{$\zdS_4$} & [H,\B] = -\P & [H,\P] = -\B & [\B,\B]= -\J & [\B,\P] = H & [\P,\P]= \J & de~Sitter\\
      \hypertarget{S3}{$\zAdS_4$} & [H,\B] = -\P & [H,\P] = \B & [\B,\B]= -\J & [\B,\P] = H & [\P,\P] = -\J & anti~de~Sitter\\
      \midrule
      \hypertarget{S4}{$\EE^4$} &  [H,\B] = \P & & [\B,\B] = \J & [\B,\P] = H & & euclidean\\
      \hypertarget{S5}{$S^4$} &  [H,\B] = \P & [H,\P] = -\B & [\B,\B]= \J & [\B,\P] = H & [\P,\P]= \J & sphere\\
      \hypertarget{S6}{$H^4$} &  [H,\B] = \P & [H,\P] = \B & [\B,\B]= \J & [\B,\P] = H & [\P,\P] = -\J & hyperbolic space\\
      \midrule
      \hypertarget{S7}{$\zG$} & [H,\B] = -\P & & & & & galilean spacetime\\
      \hypertarget{S8}{$\zdSG$} & [H,\B] = -\P & [H,\P] = -\B & & & & galilean de~Sitter ($\zdSG= \ztdSG_{\gamma=-1}$)\\
      \hypertarget{S9}{$\ztdSG_\gamma$} & [H,\B] = -\P & [H,\P] = \gamma\B + (1+\gamma)\P & & & & torsional galilean de~Sitter ($\gamma\in (-1,1]$) \\
      \hypertarget{S10}{$\zAdSG$} & [H,\B] =  -\P & [H,\P] = \B & & & & galilean anti~de~Sitter  ($\zAdSG = \ztAdSG_{\chi=0}$)\\
      \hypertarget{S11}{$\zAdSG_\chi$} & [H,\B] = -\P & [H,\P] = (1+\chi^2) \B + 2\chi \P & & & & torsional galilean anti~de~Sitter ($\chi>0$) \\
      \midrule
      \hypertarget{S13}{$\zC$} & & & & [\B,\P] = H & & carrollian spacetime\\
      \hypertarget{S14}{$\zdSC$} & & [H,\P] = -\B & & [\B,\P] = H & [\P,\P] = \J & carrollian de~Sitter\\
      \hypertarget{S15}{$\zAdSC$} & & [H,\P] = \B & & [\B,\P] = H & [\P,\P] = -\J & carrollian anti~de~Sitter\\
      \hypertarget{S16}{$\zLC$} & [H,\B] = \B & [H,\P] = -\P & & [\B,\P] = H - \J & & carrollian light cone\\
      \midrule
      \hypertarget{A21}{$\zS$} & & & & & & aristotelian static \\
      \hypertarget{A22}{$\zTS$} & & [H,\P] = \P & & & & torsional aristotelian static\\
      \hypertarget{A23p}{$\RR\times S^3$} & & & & & [\P,\P] = \J & Einstein static universe\\
      \hypertarget{A23m}{$\RR\times H^3$} & & & & & [\P,\P] = - \J & hyperbolic Einstein static universe\\
      \bottomrule
    \end{tabular}
  }
\end{table}

The homogeneous spacetimes in Table~\ref{tab:spacetimes} fall into
different classes, depending on the invariant structures that they
possess.  From top to bottom, we have the lorentzian spacetimes, the
riemannian spaces, the galilean spacetimes, the carrollian spacetimes
and finally the aristotelian spacetimes.  Aristotelian spacetimes are
homogeneous spaces of aristotelian Lie groups, where the boosts are
absent.  Many aristotelian Lie algebras arise as quotients of
kinematical Lie algebras by the ideal generated by the boosts, when
the boosts do generate an ideal.  However not all aristotelian Lie
algebras arise in this way, which motivated their classification in
\cite{Figueroa-OFarrill:2018ilb}.

The lorentzian spaces in the table ($\hyperlink{S1}{\MM^4}$,
$\hyperlink{S2}{\zdS_4}$ and $\hyperlink{S3}{\zAdS_4}$) are maximally
symmetric and homogeneous spaces of the Poincaré group,
$\Spin(4,1) \cong \Sp(1,1)$ and $\Spin(3,2) \cong \Sp(4,\RR)$,
respectively.  The last two isomorphisms are the spin representations,
which shows that whereas the irreducible spinor representation of
$\Spin(3,2)$ is real and four-dimensional, that of $\Spin(4,1)$ is
quaternionic and two-dimensional.

This paper aims to answer the question of what are the possible
``super-kinematics'' (in four spacetime dimensions).  We will give a
full answer for the case of $N{=}1$ supersymmetry or, equivalently,
for the case of four real supercharges.  In other words, we classify
the superspaces which superise the homogeneous spacetimes in
Table~\ref{tab:spacetimes}.  More precisely, we classify
(simply-connected, spatially-isotropic) $(4|4)$-dimensional
homogeneous superspaces of kinematical Lie supergroups.  As in the
classical (i.e., non-supersymmetric) case, we will work at the
algebraic level and will classify pairs $(\s,\h)$, where $\s$ is a
kinematical Lie superalgebra and $\h$ an admissible subalgebra,
concepts which will be defined carefully in the paper.

In a way, the superspaces in this paper belong to the same family as
the well-known Minkowski and AdS superspaces, which are recalled in
Appendix~\ref{app:lorentzian}.  Two features shared by these two
superspaces is that their corresponding Lie superalgebras
$\s = \s_{\bar 0} \oplus \s_{\bar 1}$ are such that $\s_{\bar 0}$ is a
kinematical Lie algebra (Poincaré and $\so(3,2)$, respectively) and
the odd subspace $\s_{\bar 1}$ is a four-dimensional real
representation of $\s_{\bar 0}$ whose restriction to the rotational
subalgebra $\r \subset \s_{\bar 0}$ is the four-dimensional real
spinor representation of $\r \cong \sp(1)$; that is, it is the
one-dimensional quaternionic representation of $\sp(1)$ but thought of
as a real vector space.  We will say that $\s$ is an \emph{$N{=}1$
  supersymmetric extension} of the kinematical Lie algebra
$\s_{\bar 0}$ or a \emph{kinematical Lie superalgebra}, for short.
One of the main results in this paper is the classification of
kinematical Lie superalgebras up to isomorphism.

We are certainly not the first authors to ask what are the possible
``super-kinematics'' and indeed there are papers \cite{MR769149,
  MR1723340,CampoamorStursberg:2008hm, Huang:2014ega} which give
partial answers to that question. In \cite{MR769149} the authors
depart from the list of kinematical Lie algebras in \cite{MR0238545}
and consider their $N{=}1$ supersymmetric extensions while still
requiring the existence of automorphisms corresponding to parity and
time-reversal. They do this by solving the Jacobi identities for the
superalgebra, having fixed the action of the rotational generators ab
initio. Their list consists of those Lie superalgebras which can be
obtained by contraction from the anti de Sitter superalgebra
$\osp(1|4)$. In \cite{MR1723340}, among other results unrelated to the
present paper, the authors study some of the contractions of the anti
de Sitter superalgebra $\osp(1|4)$, paying particular attention to
(para-)Poincaré, galilean and Newton--Hooke superalgebras. In
\cite{CampoamorStursberg:2008hm}, the authors classify the kinematical
contractions of $\osp(1|4)$ (and also of the corresponding anti de
Sitter Lie algebras of order $3$) and arrive at supersymmetric
extensions of the Poincaré, galilean, Carroll and Newton--Hooke Lie
algebras. Finally, in \cite{Huang:2014ega} the authors classify the
contractions of $\osp(1|4)$ and in addition contract the associated
superspaces. Some of these contractions, particularly those which
result in a galilean superalgebra have also been studied by other
authors (see, e.g.,
\cite{Puzalowski:1978rv,Palumbo:1978gx,Clark:1983ne,deAzcarraga:1991fa});
although in some cases the resulting superalgebra is not an extension
of the galilean algebra but its universal central extension, the
Bargmann algebra.

In this paper, we give a fuller answer to the question, in that we do
not require the existence of parity nor time-reversal automorphisms
and hence we depart not from the kinematical Lie algebras in
\cite{MR0238545}, but from those in
\cite{MR857383,Figueroa-OFarrill:2017ycu}.  In particular, our list of
Lie superalgebras includes, but substantially extends, the Lie
superalgebras which can be constructed as contractions of $\osp(1|4)$.
Our approach is as follows.  We will first classify (up to
isomorphism) the $N{=}1$ supersymmetric extensions of the kinematical
Lie algebras (with three-dimensional space isotropy) listed, for
convenience, in Table~\ref{tab:kla}.  We work in full generality, but
in the end restrict attention to those superalgebras where the bracket
$[\s_{\bar 1},\s_{\bar 1}]$ is nonzero.  We solve this problem by
employing a uniform quaternionic formalism for all kinematical Lie
algebras and solving the Jacobi identities.  The isomorphism classes
of kinematical Lie superalgebras are listed in Table~\ref{tab:klsa},
which is the first main result in this paper.  We then classify the
(effective, geometrically realisable) pairs $(\s,\h)$ where $\s$ is a
kinematical Lie superalgebra and $\h \subset \s_{\bar 0}$ an
admissible subalgebra. As we will show, the pair $(\s, \h)$ defines a
homogeneous supermanifold which ``superises'' the homogeneous
spacetime described by $(\s_{\bar 0}, \h)$.  The list of homogeneous
superspaces is contained in Table~\ref{tab:superspaces}, which is the
ultimate goal of this paper and contains our answer to the question of
what are the possible $(4|4)$-dimensional ``super-kinematics''.
Figure~\ref{fig:super-limits} illustrates the different limits which
relate these superspaces.

\subsection*{Reader's guide} The paper is organised as follows.  In
Section~\ref{sec:formalism}, we define the objects of interest and
state the classification problems that we will solve in this paper.
In Section~\ref{sec:quat-form}, we define kinematical Lie superalgebras
and set up the quaternionic formalism we will employ throughout the
paper.  In Section~\ref{sec:lie-super-brack}, we set out the strategy
we shall follow in classifying kinematical superalgebras.  In
Section~\ref{sec:some-preliminary-results}, we collect some useful
preliminary results we will use often and in
Section~\ref{sec:automorphisms} we discuss the nature of the
automorphisms of kinematical superalgebras.

In Section~\ref{sec:kinem-lie-super}, we classify the kinematical and
aristotelian Lie superalgebras, arriving at Tables~\ref{tab:klsa} and
\ref{tab:alsa}.  In doing so, we had to determine the automorphisms of
the kinematical Lie algebras, which are summarised in
Table~\ref{tab:aut-kla}.  Once having classified the kinematical and
aristotelian Lie superalgebras, we determine their (nontrivial)
central extensions in Section~\ref{sec:central-extensions}.  For later
use, we need to determine the automorphisms of the Lie superalgebras
(which fix the rotational subalgebra) and this is done in
Section~\ref{sec:autom-kinem-lie}.

In Section~\ref{sec:homog-supersp}, we classify the pairs $(\s,\h)$ and
hence the simply-connected homogeneous superspaces, which are listed
in Table~\ref{tab:superspaces}. In that table we list, in particular,
the underlying homogeneous kinematical or aristotelian spacetime for
each of our superspaces.  In Section~\ref{sec:low-rank-invariants}, we
determine the invariant tensors of low rank in each of the superspaces
in Table~\ref{tab:superspaces}.  In Section~\ref{sec:limits-betw-supersp},
we explore how the superspaces in Table~\ref{tab:superspaces} are
related via geometric limits, arriving at the picture in
Figure~\ref{fig:super-limits}, which is to be contrasted with the
similar picture (see Figure~\ref{fig:sub-limits}) for the homogeneous
spacetimes.  Finally, in Section~\ref{sec:conclusions}, we offer some
conclusions and point to possible extensions of this work.

This paper contains the details of two classifications: kinematical
Lie superalgebras and their associated superspaces.  As such it is
somewhat lengthy and somewhat technical.  Readers who are pressed for
time might benefit from some hints about navigating the paper towards
the main results.  In order to arrive at these results we reformulated
the problem in terms of quaternions and this formalism is described in
Section~\ref{sec:quat-form}.  The Lie algebraic classifications are
the subject of Section~\ref{sec:kinem-lie-super}, but the main results
are Table~\ref{tab:klsa} for the kinematical superalgebras and
Table~\ref{tab:alsa} for the aristotelian superalgebras.  The
(nontrivial) central extensions are tabulated in
Table~\ref{tab:central-ext}.  Section~\ref{sec:homog-supersp} contains
the classification of the superspaces, starting with
Section~\ref{sec:homog-superm}, which explains the infinitesimal
description of the superspaces in terms of super Lie pairs $(\s,\h)$,
and ending with Table~\ref{tab:superspaces}, which lists the
superspaces together with the underlying spacetime and a description
of the corresponding Lie superalgebra.
Section~\ref{sec:limits-betw-supersp} discusses how these superspaces
relate to each other via geometric limits, leading to
Figure~\ref{fig:super-limits}.  The figures and the tables are
hyperlinked for ease of navigation.

\section{Basic definitions and the statement of the problem}
\label{sec:formalism}

In this section, we set up the classification problems of kinematical
Lie superalgebras and homogeneous kinematical superspaces and
introduce the quaternionic formalism we shall employ in the rest of
the paper.

\subsection{Kinematical Lie algebras}
\label{sec:kinem-lie-algebr}

Let $\k$ be a kinematical Lie algebra (with
three-dimensional spatial isotropy).  It is a real 10-dimensional Lie
algebra with a subalgebra $\r \cong \so(3)$ and such that under the
adjoint action of $\r$, $\k$ decomposes as $\k = \r \oplus 2 V \oplus
\RR$, where $V$ is the three-dimensional irreducible vector
representation of $\r$ and $\RR$ is the trivial one-dimensional scalar
representation.  A real basis for $\k$ is given by $J_i$, $B_i$, $P_i$
and $H$, where $i =1,2,3$, where $J_i$ span $\r$, $B_i$ and $P_i$ span
the two copies of $V$ and $H$ is a scalar generator.  The Lie brackets
common to all kinematical Lie algebras are (using summation
convention):
\begin{equation}\label{eq:klb}
  [J_i, J_j] = \epsilon_{ijk} J_k \qquad [J_i, B_j] = \epsilon_{ijk}
  B_k \qquad [J_i, P_j] = \epsilon_{ijk} P_k \qquad\text{and}\qquad
  [J_i, H] = 0.
\end{equation}
Such kinematical Lie algebras were classified up to isomorphism
by Bacry and Nuyts\cite{MR857383} (see also
\cite{Figueroa-OFarrill:2017ycu}) completing the earlier
classification of Bacry and Lévy-Leblond \cite{MR0238545} of
kinematical Lie algebras admitting time-reversal and parity
automorphisms.  Table~\ref{tab:kla} summarises the classification by
listing the Lie brackets, in addition to the ones in
equation~\eqref{eq:klb}.  We use the by now standard abbreviated
notation, where the vector indices are not explicitly written down, so
that, for instance,
\begin{equation}
  \begin{split}
    [H,\B] = -\P &\qquad\text{stands for}\qquad [H,B_i] = - P_i \\
    [\B, \P] = H &\qquad\text{stands for}\qquad [B_i, P_j] = \delta_{ij} H\\
    [\P,\P] = \J &\qquad\text{stands for}\qquad [P_i, P_j] = \epsilon_{ijk} J_k,
  \end{split}
\end{equation}
et cetera.  In this abbreviated notation, the brackets in
equation~\eqref{eq:klb} are written as
\begin{equation}
  [\J, \J] = \J \qquad [\J, \B] = \B \qquad [\J, \P] = \P \qquad\text{and}\qquad
  [\J, H] = 0.  
\end{equation}
The one-parameter families $3_\gamma$ and $4_\chi$ of kinematical Lie
algebras extend the lorentzian and euclidean Newton--Hooke Lie
algebras, which correspond to $3_{\gamma=-1}$ and $4_{\chi = 0}$,
respectively.

It should be remarked that the correspondence between kinematical Lie
algebras and their (simply-connected) homogeneous spacetimes is not
bijective: there are kinematical Lie algebras with no associated
homogeneous spacetimes and, conversely, there are kinematical Lie
algebras with which there are associated more than one homogeneous
spacetime.  In describing the spacetimes in
Table~\ref{tab:spacetimes}, we have changed basis in the kinematical
Lie algebra $\k$ to ensure that the stabiliser subalgebra $\h$ is
always spanned by $J_i$ and $B_i$.  This explains any perceived
discrepancy between Tables~\ref{tab:spacetimes} and \ref{tab:kla}.

\begin{table}[h!]
  \centering
  \caption{Kinematical Lie algebras}
  \label{tab:kla}
  \setlength{\extrarowheight}{2pt}
  \rowcolors{2}{blue!10}{white}
  \begin{tabular}{l|*{5}{>{$}l<{$}}|l}\toprule
    \multicolumn{1}{c|}{K\#} & \multicolumn{5}{c|}{Nonzero Lie brackets (besides $[\J,-]$ brackets)} & \multicolumn{1}{c}{Comment}\\\toprule
    \hypertarget{KLA1}{1} & & & & & & static \\
    \hypertarget{KLA2}{2} & [H ,\B] = -\P & & & & & galilean \\
    \hypertarget{KLA3}{3$_{\gamma\in[-1,1]}$} & [H ,\B] = \gamma \B & [H ,\P] = \P & & & & \\
    \hypertarget{KLA4}{4$_{\chi\geq0}$} & [H ,\B] = \chi \B + \P & [H ,\P] = \chi \P - \B & & & & \\
    \hypertarget{KLA5}{5} & [H ,\B] = \B + \P & [H , \P] = \P & & & &  \\
    \hypertarget{KLA6}{6} & & & [\B,\P] = H  & & & Carroll \\
    \hypertarget{KLA7}{7} & [H ,\B] = \P & & [\B,\P] = H  & [\B,\B] = \J & & euclidean \\
    \hypertarget{KLA8}{8}& [H ,\B] = - \P & & [\B,\P] = H  & [\B,\B] = - \J & & Poincaré \\
    \hypertarget{KLA9}{9}& [H ,\B] = \B & [H ,\P] = -\P &  [\B,\P] = H  - \J & & & $\so(4,1)$ \\
    \hypertarget{KLA10}{10}& [H ,\B] = \P & [H ,\P] = -\B & [\B,\P] = H  &  [\B,\B]= \J &  [\P,\P] = \J & $\so(5)$ \\
    \hypertarget{KLA11}{11}& [H ,\B] = -\P & [H ,\P] = \B & [\B,\P] = H  &  [\B,\B]= -\J &  [\P,\P] = -\J & $\so(3,2)$ \\
    \hypertarget{KLA12}{12}& & & & [\B,\B]= \B &  [\P,\P] = \B-\J & \\
    \hypertarget{KLA13}{13}& & & & [\B,\B]= \B & [\P,\P] = \J-\B & \\
    \hypertarget{KLA14}{14}& & & & [\B,\B] = \B & &  \\
    \hypertarget{KLA15}{15}& & & & [\B, \B] = \P & &  \\
    \hypertarget{KLA16}{16}& & [H ,\P] = \P & & [\B,\B] = \B & &  \\
    \hypertarget{KLA17}{17}& [H ,\B] = -\P & & & [\B,\B] = \P & &  \\
    \hypertarget{KLA18}{18}& [H ,\B] = \B & [H ,\P] = 2\P & & [\B,\B] = \P & & \\\bottomrule
  \end{tabular}
\end{table}

\subsection{Kinematical Lie superalgebras}
\label{sec:quat-form}

We start by defining the objects of interest.

\begin{definition}
  An ($N{=}1$) \textbf{kinematical Lie superalgebra} (with
  three-dimensional space isotropy) is a real Lie superalgebra
  $\s = \s_{\bar 0} \oplus \s_{\bar 1}$, where $\s_{\bar 0} = \k$ is a
  kinematical Lie algebra (with three-dimensional space isotropy) and
  $\s_{\bar 1} = S$, where $S$ is a representation of $\k$ which
  extends the four-dimensional real spinor representation of the
  rotational subalgebra $\r$.
\end{definition}

Under the isomorphism $\r \cong \sp(1)$, we may take $S$ to be a copy
of the quaternions and the action of $\r$ on $S$ is essentially given
by left quaternion multiplication.  Let us be more precise.  We shall
denote the quaternions by $\HH$ and the quaternion units by
$\ii,\jj,\kk$, where $\ii^2 = -1$, $\jj^2 = -1$ and
$\ii\jj= \kk = - \jj\ii$.

Let us define the following injective real linear maps (again using
the summation convention):
\begin{equation}\label{eq:quat-basis-s}
  \begin{split}
    \sJ : \Im\HH \to \s &\qquad\text{given by}\qquad \sJ(\omega) =
    \omega_i J_i \qquad\text{for}\qquad \omega = \omega_1 \ii +
    \omega_2 \jj + \omega_3 \kk \in \Im\HH\\
    \sB : \Im\HH \to \s &\qquad\text{given by}\qquad \sB(\beta) =
    \beta_i B_i \qquad\text{for}\qquad \beta = \beta_1 \ii +
    \beta_2 \jj + \beta_3 \kk \in \Im\HH\\
    \sP : \Im\HH \to \s &\qquad\text{given by}\qquad \sP(\pi) =
    \pi_i P_i \qquad\text{for}\qquad \pi = \pi_1 \ii +
    \pi_2 \jj + \pi_3 \kk \in \Im\HH\\
    \sQ : \HH \to \s &\qquad\text{given by}\qquad \sQ(s) =
    s_a Q_a \qquad\text{for}\qquad s = s_1 \ii +
    s_2 \jj + s_3 \kk + s_4 \in \HH,
  \end{split}
\end{equation}
where $(Q_1,Q_2,Q_3,Q_4)$ is a real basis for $\s_{\bar 1}$.  The
(nonzero) Lie brackets common to all kinematical Lie superalgebras are
then given in terms of quaternion multiplication as follows:
\begin{equation}\label{eq:klsa-brackets-quat}
  \begin{split}
    [\sJ(\omega),\sJ(\omega')] &= \tfrac12 \sJ([\omega,\omega'])\\
    [\sJ(\omega),\sB(\beta)] &= \tfrac12 \sB([\omega,\beta])\\
    [\sJ(\omega),\sP(\pi)] &= \tfrac12 \sP([\omega,\pi])\\
    [\sJ(\omega),\sQ(s)] &= \tfrac12 \sQ(\omega s)\\
  \end{split}
\end{equation}
where $\omega,\omega',\beta,\pi \in \Im\HH$ and $s \in \HH$ and where
$\omega s$ is the quaternion product and
$[\omega,\beta] := \omega \beta - \beta \omega$, et cetera, are
quaternion commutators.  One can easily check that the Jacobi
identities involving at least two vectors in $\r$ are satisfied by
virtue of the associativity of quaternion multiplication.  For each
kinematical Lie algebra $\k$, the additional Lie brackets can also be
written quaternionically.  For example,
\begin{equation}
  \begin{split}
    [H, \B] = - \P &\qquad\text{becomes}\qquad [H, \sB(\beta)] = -\sP(\beta)\\
    [\B, \P] = H &\qquad\text{becomes}\qquad [\sB(\beta), \sP(\pi)] =
    \Re(\bar\beta\pi) H = - \Re(\beta\pi) H\\
    [\P,\P] = \J &\qquad\text{becomes}\qquad [\sP(\pi), \sP(\pi')] =
    \tfrac12 \sJ([\pi,\pi']),
  \end{split}
\end{equation}
et cetera.

\subsection{Lie superalgebra brackets}
\label{sec:lie-super-brack}

Let $\s$ be a kinematical Lie superalgebra where $\s_{\bar 0} = \k$ is
a kinematical Lie algebra from Table~\ref{tab:kla}.  To determine $\s$
we need to specify the additional Lie brackets: $[H,\Q]$, $[\B,\Q]$,
$[\P,\Q]$ and $[\Q,\Q]$, subject to the Jacobi identity.  There are
four components to the Jacobi identity in a Lie superalgebra $\s =
\s_{\bar 0} \oplus \s_{\bar 1}$:
\begin{enumerate}
\item The $(\s_{\bar 0}, \s_{\bar 0},\s_{\bar 0})$ Jacobi identity simply
  says that $\s_{\bar 0}$ is a Lie algebra, which in our case is one
  of the kinematical Lie algebras $\k$ in Table~\ref{tab:kla}.
\item The $(\s_{\bar 0}, \s_{\bar 0},\s_{\bar 1})$ Jacobi identity says that
  $\s_{\bar 1}$ is a representation of $\s_{\bar 0}$ and, by
  restriction, also a representation of any Lie subalgebra of
  $\s_{\bar 0}$: for example, $\r$ in our case.
\item The $(\s_{\bar 0}, \s_{\bar 1},\s_{\bar 1})$ Jacobi identity says that
  the component of the Lie bracket $\bigodot^2 \s_{\bar 1} \to
  \s_{\bar 0}$ is $\s_{\bar 0}$-equivariant.  In particular, in our
  case, it is $\r$-equivariant.
\item The $(\s_{\bar 1}, \s_{\bar 1},\s_{\bar 1})$ component does not
  seem to have any representation-theoretic reformulation and needs to
  be checked explicitly.
\end{enumerate}

Our strategy will be the following.  We shall first determine the
space of $\r$-equivariant brackets $[H,\Q]$, $[\B,\Q]$, $[\P,\Q]$ and
$[\Q,\Q]$, which will turn out to be a $22$-dimensional real vector
space $\cV$. For each kinematical Lie algebra $\k = \s_{\bar 0}$ in
Table~\ref{tab:kla}, we then determine the algebraic variety
$\cJ \subset \cV$ cut out by the Jacobi identity.  We are eventually
interested in \emph{supersymmetry} algebras and hence we will restrict
attention to Lie superalgebras $\s$ for which $[\Q,\Q] \neq 0$, which
define a sub-variety $\cS \subset \cJ$.  The isomorphism classes of
kinematical Lie superalgebras (with $[\Q,\Q]\neq 0)$ are in one-to-one
correspondence with the orbits of $\cS$ under the subgroup
$G \subset \GL(\s_{\bar 0}) \times \GL(\s_{\bar 1})$ which acts by
automorphisms of $\k = \s_{\bar 0}$, since we have fixed $\k$ from the
start.  The group $G$ contains not just the automorphisms of the
kinematical Lie algebra $\k$ which act trivially on $\r$, but also
automorphisms which are induced by automorphisms of the quaternion
algebra.  We shall return to an explicit description of such
automorphisms below.

Let us start by determining the $\r$-equivariant brackets: $[H,\Q]$,
$[\B,\Q]$, $[\P,\Q]$ and $[\Q,\Q]$.  The bracket $[H,\Q]$ is an
$\r$-equivariant endomorphism of the spinor module $\Q$.  If we
identify $\r$ with the imaginary quaternions and $\Q$ with the
quaternions, the action of $\r$ on $\Q$ is via left quaternion
multiplication.  The endomorphisms of the representation $S$ which
commute with the action of $\r$ consist of left multiplication by
reals and right multiplication by quaternions, but for real numbers,
left and right multiplications agree, since the reals are central in
the quaternion algebra.  Hence the most general $\r$-equivariant
$[H,\Q]$ bracket takes the form
\begin{equation}\label{eq:hq-bracket}
  [H,\sQ(s)] = \sQ(s \hh) \qquad\text{for some}\qquad \hh = h_1 \ii + h_2
  \jj + h_3 \kk + h_4 \in \HH.
\end{equation}

The brackets $[\B,\Q]$ and $[\P,\Q]$ are $\r$-equivariant
homomorphisms $V \otimes S \to S$, where $V$ and $S$ are the vector
and spinor modules of $\so(3)$.  There is an $\r$-equivariant map
$V \otimes S \to S$ given by the ``Clifford action'', which in this
language is left multiplication by $\Im \HH$ on $\HH$.  Its kernel is
the 8-dimensional real representation $W$ of $\r$ with spin
$\frac32$.
Therefore, the space of
$\r$-equivariant homomorphisms $V \otimes S \to S$ is isomorphic to
the space of $\r$-equivariant endomorphisms of $S$, which, as we saw
before, is a copy of the quaternions.  In summary, the $[\B,\Q]$ and
$[\P,\Q]$ brackets take the form
\begin{equation}\label{eq:bq-pq-brackets}
  \begin{split}
    [\sB(\beta), \sQ(s)] &= \sQ(\beta s \bb) \qquad\text{for some}\qquad \bb = b_1 \ii + b_2
    \jj + b_3 \kk + b_4 \in \HH\\
    [\sP(\pi), \sQ(s)] &= \sQ(\pi s \pp) \qquad\text{for some}\qquad \pp = p_1 \ii + p_2
    \jj + p_3 \kk + p_4 \in \HH,
  \end{split}
\end{equation}
for all $\beta,\pi \in \Im\HH$ and $s \in \HH$.

Finally, we look at the $[\Q,\Q]$ bracket, which is an
$\r$-equivariant linear map $\bigodot^2 S \to \k = \RR \oplus 3 V$.
The symmetric square $\bigodot^2S$ is a 10-dimensional representation
of $\r$ which decomposes as $\RR \oplus 3 V$.  Indeed, on $S$ we have
an $\r$-invariant inner product given by
\begin{equation}
  \left<s_1,s_2\right> = \Re (\sbar_1 s_2)\qquad\text{where}\qquad
  s_1,s_2 \in \HH.
\end{equation}
It is clearly invariant under left multiplication by unit quaternions:
$\left<u s_1, u s_2\right> = \left<s_1,s_2\right>$ for all $u \in
\Sp(1)$.  We can use this inner product to identify $\bigodot^2 S$ with the
symmetric endomorphisms of $S$: linear maps $\lambda : S \to S$ such
that $\left<\lambda(s_1), s_2\right> = \left<s_1, \lambda
  (s_2)\right>$.  Letting $L_\qq$ and $R_\qq$ denote left and
right quaternion multiplication by $\qq \in \HH$, the space of symmetric
endomorphisms of $S$ is spanned by the identity endomorphism and
$L_\ii R_\ii$, $L_\ii R_\jj$, $L_\ii R_\kk$, $L_\jj R_\ii$, $L_\jj
R_\jj$, $L_\jj R_\kk$, $L_\kk R_\ii$, $L_\kk R_\jj$ and $L_\kk
R_\kk$.  The nine non-identity symmetric endomorphisms transform
under $\r$ according to three copies of $V$.  Since $\r$ acts on $S$
via left multiplication, it commutes with the $R_\qq$ and hence the
three copies of $V$ are
\begin{equation}
  \spn{L_\ii R_\ii, L_\jj R_\ii, L_\kk R_\ii} \oplus \spn{L_\ii R_\jj,
    L_\jj R_\jj, L_\kk R_\jj} \oplus \spn{L_\ii R_\kk, L_\jj R_\kk,
    L_\kk R_\kk}.
\end{equation}
The space of $\r$-equivariant linear maps $\bigodot^2 S \to 3 V
\oplus \RR$ is thus isomorphic to the space of $\r$-equivariant
endomorphisms of $\RR \oplus 3 V = \RR \oplus (\RR^3 \otimes V)$,
which is given by
\begin{equation}
  \End_\r\left(\RR \oplus (\RR^3 \otimes V) \right) \cong \End(\RR) \oplus
  \left(\End(\RR^3) \otimes \id_V\right).
\end{equation}
In summary, the $\r$-equivariant $[\Q,\Q]$ bracket is given by
polarisation from the following
\begin{equation}
  [\sQ(s), \sQ(s)] = c_0 |s|^2 H + \Re(\sbar \JJ s \bc_1)
  + \Re(\sbar \BB s  \bc_2) + \Re(\sbar \PP s
  \bc_3),
\end{equation}
where $c_0 \in \RR$,
$\bc_1, \bc_2, \bc_3 \in \Im\HH$ and
where we have introduced the shorthands
\begin{equation}
  \JJ = J_1 \ii + J_2 \jj + J_3 \kk, \quad   \BB = B_1 \ii + B_2 \jj +
  B_3 \kk, \quad\text{and}\quad \PP = P_1 \ii + P_2 \jj + P_3 \kk.
\end{equation}
Notice that if $\omega \in \Im\HH$, then $\sJ(\omega) = \Re(\bar\omega
\JJ)$, and similarly $\sB(\beta) = \Re(\bar\beta \BB)$ and $\sP(\pi) =
\Re(\bar\pi \PP)$, for $\beta,\pi \in \Im \HH$, so that we can rewrite
the $[\Q,\Q]$ bracket as
\begin{equation}\label{eq:QQdiagonal}
  [\sQ(s), \sQ(s)] =c_0 |s|^2 H - \sJ(s \bc_1 \sbar) - \sB(s \bc_2
  \sbar) - \sP(s \bc_3 \sbar),
\end{equation}
which polarises to give
\begin{equation}\label{eq:QQ}
  [\sQ(s), \sQ(s')] = c_0 \Re(\sbar s') H - \tfrac12 \sJ(s'
  \bc_1 \sbar + s \bc_1 \sbar') - \tfrac12
  \sB(s' \bc_2 \sbar + s \bc_2 \sbar') -
  \tfrac12 \sP(s' \bc_3 \sbar + s \bc_3 \sbar').
\end{equation}

In summary, we have that the $\r$-equivariant brackets by which we
extend the kinematical Lie algebra $\k$ live in a 22-dimensional real
vector space of parameters $\hh,\bb,\pp \in \HH$, $\bc_1,
\bc_2, \bc_3 \in \Im \HH$ and $c_0 \in \RR$.

\subsection{Some preliminary results}
\label{sec:some-preliminary-results}

As mentioned above, one of the components of the Jacobi identity for
the Lie superalgebra $\s$ says that $\s_{\bar 1}$ is an $\s_{\bar 0}$-module,
where $\s_{\bar 0}=\k$ is the underlying kinematical Lie algebra.  The
Jacobi identity
\begin{equation}
  [X,[Y,\sQ(s)]] - [Y,[X,\sQ(s)]] = [[X,Y],\sQ(s)] \qquad\text{for all
    $X,Y\in\k$}
\end{equation}
gives relations between the parameters $\hh,\bb,\pp \in \HH$ appearing in
the Lie brackets.

\begin{lemma}\label{lem:kmod}
  The following relations between $\hh,\bb,\pp \in \HH$ are implied by the
  corresponding $\k$-brackets:
  \begin{equation}
    \begin{split}
      [H,\B] = \lambda \B + \mu \P & \implies [\bb,\hh] = \lambda \bb + \mu \pp\\
      [H,\P] = \lambda \B + \mu \P & \implies [\pp,\hh] = \lambda \bb + \mu \pp\\
      [\B,\B] = \lambda \B + \mu \P + \nu \J & \implies \bb^2 = \tfrac12 \lambda \bb + \tfrac12 \mu \pp + \tfrac14 \nu\\
      [\P,\P] = \lambda \B + \mu \P + \nu \J & \implies \pp^2 = \tfrac12 \lambda \bb + \tfrac12 \mu \pp + \tfrac14 \nu\\
      [\B,\P] = \lambda H & \implies \bb \pp + \pp \bb = 0\quad\text{and}\quad [\bb,\pp] = \lambda \hh.
    \end{split}
  \end{equation}
\end{lemma}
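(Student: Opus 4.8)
The plan is to evaluate the module Jacobi identity
\[
  [X,[Y,\sQ(s)]] - [Y,[X,\sQ(s)]] = [[X,Y],\sQ(s)]
\]
on the pairs $X,Y \in \{H,\sB(\beta),\sP(\pi)\}$ corresponding to each of the five $\k$-brackets, using the explicit quaternionic action on $\s_{\bar 1}$ recorded in \eqref{eq:klsa-brackets-quat}, \eqref{eq:hq-bracket} and \eqref{eq:bq-pq-brackets}. In each case the left-hand side reduces to $\sQ$ of a word in which $s$ is multiplied on the left by an expression built from $\beta,\pi$ and on the right by a monomial in $\hh,\bb,\pp$, while the right-hand side is $\sQ$ of $s$ acted on by the right-multiplication dictated by the structure constants $\lambda,\mu,\nu$. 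Since $\sQ$ is injective and the left factors (such as $\beta s$, $[\beta,\beta']\,s$, or $\beta\pi\,s$) sweep out all of $\HH$ as $\beta,\pi\in\Im\HH$ and $s\in\HH$ vary, I can cancel the common left factor and read off an identity among the quaternions $\hh,\bb,\pp$.

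First I would dispatch the four ``vector'' cases. For $[H,\B]=\lambda\B+\mu\P$ take $X=H$, $Y=\sB(\beta)$: the left-hand side collapses to $\sQ(\beta s(\bb\hh-\hh\bb))$ because $[H,-]$ and $[\sB(\beta),-]$ act by right multiplication by $\hh$ and $\bb$, while the right-hand side is $\sQ(\beta s(\lambda\bb+\mu\pp))$, so cancelling $\beta s$ gives $[\bb,\hh]=\lambda\bb+\mu\pp$; the $[H,\P]$ case is identical with $\pp$ in place of $\bb$ in the left factor. For $[\B,\B]=\lambda\B+\mu\P+\nu\J$ take $X=\sB(\beta)$, $Y=\sB(\beta')$: the nested brackets combine to $\sQ([\beta,\beta']\,s\,\bb^2)$, whereas the right-hand side, after writing $[\sB(\beta),\sB(\beta')]$ quaternionically and applying \eqref{eq:klsa-brackets-quat}, produces $\sQ\big([\beta,\beta']\,s(\tfrac12\lambda\bb+\tfrac12\mu\pp+\tfrac14\nu)\big)$, the $\tfrac14$ arising from the factor $\tfrac12$ in the quaternionic $[\sB,\sB]$ bracket together with the factor $\tfrac12$ in the $[\sJ,\sQ]$ bracket. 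Cancelling $[\beta,\beta']\,s$ (legitimate since $[\beta,\beta']$ runs over all of $\Im\HH$ and left multiplication by a nonzero quaternion is bijective) yields $\bb^2=\tfrac12\lambda\bb+\tfrac12\mu\pp+\tfrac14\nu$, and $[\P,\P]$ is treated the same way.

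The one case requiring genuine care, and hence the main obstacle, is $[\B,\P]=\lambda H$. Taking $X=\sB(\beta)$, $Y=\sP(\pi)$, the left-hand side becomes $\sQ(\beta\pi\,s\,\pp\bb)-\sQ(\pi\beta\,s\,\bb\pp)$, where the order of the left multipliers $\beta\pi$ versus $\pi\beta$ differs between the two terms, so the naive cancellation of the previous cases is unavailable. The device is to split each product into its real and imaginary parts, $\beta\pi=-\beta\cdot\pi+\beta\times\pi$ and $\pi\beta=-\beta\cdot\pi-\beta\times\pi$, whereupon the left-hand side regroups as $(\beta\cdot\pi)\,\sQ(s[\bb,\pp])+\sQ\big((\beta\times\pi)\,s\,(\bb\pp+\pp\bb)\big)$; meanwhile the right-hand side is $\lambda(\beta\cdot\pi)\,\sQ(s\hh)$, since $[\sB(\beta),\sP(\pi)]=-\lambda\Re(\beta\pi)H=\lambda(\beta\cdot\pi)H$. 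Because the scalar $\beta\cdot\pi$ and the vector $\beta\times\pi$ can be varied independently (choose $\beta\perp\pi$ to isolate the latter and $\beta\parallel\pi$ for the former), the single identity separates into the two stated relations: the $\beta\times\pi$ part forces $\bb\pp+\pp\bb=0$, and the $\beta\cdot\pi$ part, after cancelling $s$, gives $[\bb,\pp]=\lambda\hh$. This completes the derivation of all five relations.
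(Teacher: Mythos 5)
Your proposal is correct and follows essentially the same route as the paper: impose the $(\s_{\bar 0},\s_{\bar 0},\s_{\bar 1})$ Jacobi identity on the relevant pairs of even generators, cancel the common left factors using injectivity of $\sQ$ and bijectivity of left quaternion multiplication, and in the $[\B,\P]$ case split $\beta\pi$ into its scalar and vector parts (your $-\beta\cdot\pi+\beta\times\pi$ is exactly the paper's $\Re(\beta\pi)+\tfrac12[\beta,\pi]$) before separating the two relations by choosing $\beta\parallel\pi$ and $\beta\perp\pi$. All signs and the $\tfrac12$, $\tfrac14$ normalisations check out against the paper's computation.
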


\begin{proof}
  The $[\H,\B,\Q]$ Jacobi identity says for all $\beta \in \Im\HH$ and
  $s \in\HH$,
  \begin{equation}
    [[H,\sB(\beta)],\sQ(s)] = [H, [\sB(\beta),\sQ(s)]] - [\sB(\beta),
    [H, \sQ(s)]],
  \end{equation}
  which becomes
  \begin{equation}
    \lambda \sQ(\beta s \bb) + \mu \sQ(\beta s \pp) = \sQ(\beta s \bb \hh) -
    \sQ(\beta s \hh \bb).
  \end{equation}
  Since $\sQ$ is real linear and injective, it follows that
  \begin{equation}
    \lambda \beta s \bb + \mu \beta s \pp = \beta s [\bb,\hh],
  \end{equation}
  which, since it must hold for all $\beta \in \Im\HH$ and $s \in
  \HH$, becomes
  \begin{equation}
    [\bb,\hh] = \lambda \bb + \mu \pp,
  \end{equation}
  as desired.  Similarly, the $[\H,\P,\Q]$ Jacobi identity gives the
  second equation in the lemma.  The third equation follows from the
  $[\B,\B,\Q]$ Jacobi identity, which says that for all
  $\beta,\beta'\in\Im\HH$ and $s\in\HH$,
  \begin{equation}
    [[\sB(\beta),\sB(\beta')],\sQ(s)] =
    [\sB(\beta),[\sB(\beta'),\sQ(s)]] -  [\sB(\beta'),[\sB(\beta),\sQ(s)]],
  \end{equation}
  which becomes
  \begin{equation}
    \tfrac12 \lambda \sQ([\beta,\beta'] s \bb) + \tfrac12 \mu
    \sQ([\beta,\beta'] s \pp) + \tfrac14 \nu \sQ([\beta,\beta']s) =
    \sQ(\beta\beta's \bb^2) - \sQ(\beta'\beta s \bb^2).
  \end{equation}
  Again by linearity and injectivity of $\sQ$, this is equivalent to
  \begin{equation}
    \tfrac12 \lambda [\beta,\beta'] s \bb + \tfrac12 \mu
    [\beta,\beta'] s \pp + \tfrac14 \nu [\beta,\beta']s =
    [\beta,\beta']s \bb^2,
  \end{equation}
  which, being true for all $\beta,\beta'\in\Im\HH$ and $s \in \HH$, gives
  \begin{equation}
    \tfrac12 \lambda \bb + \tfrac12 \mu \pp + \tfrac14 \nu = \bb^2,
  \end{equation}
  as desired.  The fourth identity in the lemma follows similarly from
  the $[\P,\P,\Q]$ Jacobi identity.  Finally, we consider the
  $[\B,\P,\Q]$ Jacobi identity, which says that for all
  $\beta,\pi\in\Im\HH$ and $s \in\HH$,
  \begin{equation}
    [[\sB(\beta),\sP(\pi)],\sQ(s)] = [\sB(\beta),[\sP(\pi),\sQ(s)]] -
    [\sP(\pi),[\sB(\beta),\sQ(s)]],
  \end{equation}
  which expands to
  \begin{equation}
    -\lambda \Re(\beta\pi) \sQ(s\hh) = \sQ(\beta\pi s \pp \bb) - \sQ(\pi\beta s \bb \pp)
  \end{equation}
  or, equivalently,
  \begin{equation}\label{eq:BPQ-aux}
    -\lambda \Re(\beta\pi) s \hh = \beta\pi s \pp \bb - \pi \beta s \bb \pp,
  \end{equation}
  for all $\beta,\pi \in \Im\HH$ and $s \in \HH$.  For any two
  imaginary quaternions $\beta,\pi$, we have that
  \begin{equation}
    \beta\pi = \tfrac12 [\beta,\pi] + \Re(\beta\pi),
  \end{equation}
  which allows us to rewrite equation~\eqref{eq:BPQ-aux} as
  \begin{equation}
    \Re(\beta\pi) s (\lambda \hh - \bb \pp + \pp \bb) + \tfrac12 [\beta,\pi] s (
    \pp \bb + \bb \pp) = 0.
  \end{equation}
  Taking $\beta = \pi$ and $s=1$ we see that $\lambda \hh = [\bb,\pp]$ and taking
  $\beta$ and $\pi$ to be orthogonal and $s=1$, that $\pp \bb + \bb \pp = 0$, as
  desired.
\end{proof}

The components $[H,\Q,\Q]$, $[\B,\Q,\Q]$ and $[\P,\Q,\Q]$ of the
Jacobi identity are best studied on a case-by-case basis, but the
$[\Q,\Q,\Q]$ component gives a universal condition.

\begin{lemma}\label{lem:qqq}
  The $[\Q,\Q,\Q]$ component of the Jacobi identity implies
  \begin{equation}
    c_0 \hh = \tfrac12 \bc_1  + \bc_2 \bb + \bc_3 \pp.
  \end{equation}
\end{lemma}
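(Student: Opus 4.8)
The plan is to write out the $[\sQ(s),\sQ(s),\sQ(s)]$ Jacobi identity, which for a Lie superalgebra involves the graded-cyclic sum over three odd elements. Taking all three odd arguments equal to $\sQ(s)$ (and recovering the general statement by polarisation, though equality of arguments already suffices to extract the universal condition because both sides are built from $\r$-equivariant data), the identity reads
\begin{equation}
  [\sQ(s),[\sQ(s),\sQ(s)]] = 0,
\end{equation}
since the three cyclic terms coincide. The strategy is therefore to compute the inner bracket $[\sQ(s),\sQ(s)]$ using the diagonal form~\eqref{eq:QQdiagonal}, then apply $[\sQ(s),-]$ to the resulting even element, and finally collect the coefficient of $\sQ$ to read off the quaternionic identity.

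First I would substitute~\eqref{eq:QQdiagonal} to get $[\sQ(s),\sQ(s)] = c_0|s|^2 H - \sJ(s\bc_1\sbar) - \sB(s\bc_2\sbar) - \sP(s\bc_3\sbar)$, an element of $\k = \s_{\bar 0}$. Then I would act on $\sQ(s)$ with each of the four pieces, using the already-established brackets: $[H,\sQ(s)] = \sQ(s\hh)$ from~\eqref{eq:hq-bracket}; $[\sJ(\omega),\sQ(s)] = \tfrac12\sQ(\omega s)$ from~\eqref{eq:klsa-brackets-quat} with $\omega = s\bc_1\sbar \in \Im\HH$; and $[\sB(\beta),\sQ(s)] = \sQ(\beta s\bb)$, $[\sP(\pi),\sQ(s)] = \sQ(\pi s\pp)$ from~\eqref{eq:bq-pq-brackets} with $\beta = s\bc_2\sbar$ and $\pi = s\bc_3\sbar$. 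The key point is that each term produces $\sQ$ of an explicit quaternion, so by injectivity of $\sQ$ the vanishing of the sum becomes a single quaternionic equation.

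Assembling the pieces, the $H$-term contributes $c_0|s|^2\, s\hh$, the $\sJ$-term contributes $-\tfrac12\,(s\bc_1\sbar)\,s$, the $\sB$-term contributes $-(s\bc_2\sbar)\,s\,\bb$, and the $\sP$-term contributes $-(s\bc_3\sbar)\,s\,\pp$; the total must vanish for all $s \in \HH$. Here I would use that $\sbar s = |s|^2$ is real and central, so that $\sbar\, s = |s|^2$ can be cancelled: writing $\sbar s = |s|^2$, each term $s\,\bc_k\,\sbar\,s$ simplifies to $|s|^2\, s\,\bc_k$, and likewise $|s|^2 s\hh$ already carries the factor $|s|^2$. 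Dividing through by the common nonzero factor $|s|^2$ and then by $s$ on the left (valid for $s\neq 0$, hence by continuity for all $s$), the equation collapses to $c_0\hh - \tfrac12\bc_1 - \bc_2\bb - \bc_3\pp = 0$, which is the claimed relation. The main obstacle I anticipate is purely bookkeeping: keeping the factors $s,\sbar$ in the correct (noncommutative) order while repeatedly invoking centrality of $\sbar s = |s|^2$, and confirming the sign and the factor of $\tfrac12$ on the $\bc_1$ term, which trace back to the $\tfrac12$ in the rotational action~\eqref{eq:klsa-brackets-quat} versus its absence in the boost and translation actions~\eqref{eq:bq-pq-brackets}.
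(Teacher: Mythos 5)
Your proposal is correct and follows essentially the same route as the paper's own proof: reduce the $[\Q,\Q,\Q]$ Jacobi identity to its diagonal $[[\sQ(s),\sQ(s)],\sQ(s)]=0$ by total symmetry/polarisation, expand using~\eqref{eq:QQdiagonal} and the brackets~\eqref{eq:hq-bracket}, \eqref{eq:klsa-brackets-quat}, \eqref{eq:bq-pq-brackets}, use injectivity of $\sQ$ together with the centrality of $\sbar s = |s|^2$ to obtain $|s|^2 s\,(c_0\hh - \tfrac12\bc_1 - \bc_2\bb - \bc_3\pp)=0$, and evaluate at $s=1$ (equivalently, cancel the invertible factor $s$). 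The bookkeeping of the factor $\tfrac12$ and the quaternion orderings is exactly as you describe, so there is no gap.
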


\begin{proof}
  The $[\Q,\Q,\Q]$ component of the Jacobi identity is totally
  symmetric and hence, by polarisation, it is uniquely determined by
  its value on the diagonal.  In other words, it is equivalent to
  \begin{equation}
    [[\sQ(s),\sQ(s)], \sQ(s)] \stackrel{!}{=} 0 \qquad\text{for all $s \in \HH$.}
  \end{equation}
  Using equation~\eqref{eq:QQdiagonal}, this becomes
  \begin{equation}
    [c_0 |s|^2 H - \sJ(s \bc_1 \sbar) - \sB(s
    \bc_2 \sbar) - \sP(s \bc_3 \sbar), \sQ(s)]
    \stackrel{!}{=} 0,
  \end{equation}
  which expands to
  \begin{equation}
    c_0 |s|^2 \sQ(sh) - \tfrac12 \sQ(s \bc_1 \sbar s) - \sQ(s
    \bc_2 \sbar s \bb) - \sQ(s
    \bc_3 \sbar s \pp) \stackrel{!}{=} 0.
  \end{equation}
  Since $\sQ$ is injective, this becomes
  \begin{equation*}
    |s|^2 s (c_0 \hh - \tfrac12 \bc_1 -  \bc_2 \bb -
    \bc_3 \pp)  \stackrel{!}{=} 0.
  \end{equation*}
  This must hold for all $s \in \HH$, so in particular for $s=1$,
  proving the lemma.
\end{proof}

\subsection{Automorphisms}
\label{sec:automorphisms}

As mentioned above, once we determine the sub-variety $\cS$ cut out by
the Jacobi identity, we need to quotient by the action of the subgroup
$G \subset \GL(\s_{\bar 0}) \times \GL(\s_{\bar 1})$ which acts by
automorphisms of $\s_{\bar 0} = \k$ in order to arrive at the
isomorphism classes of Lie superalgebras.  In this section, we describe
the subgroup $G$ in more detail.  There are two kinds of elements of
$G$, those which act trivially on the rotational subalgebra $\r$ and
those which do not.  The latter consist of inner automorphisms of
$\k$ which are generated infinitesimally by the adjoint action of
$\J$, $\B$ and $\P$.  The ones generated by $\J$ are particularly easy
to describe in the quaternionic formulation, and we shall do so now in
more detail.

Let $u \in \Sp(1)$ be a unit norm quaternion.  Conjugation by $u$
defines a homomorphism $\Ad : \Sp(1) \to \Aut(\HH)$ whose kernel is
the central subgroup of $\Sp(1)$ consisting of $\pm 1$.  It is a
classical result that these are all the automorphisms of $\HH$.  Hence
$\Aut(\HH) \cong \SO(3)$, acting trivially on the real quaternions and
rotating the imaginary quaternions.  The action of $\Aut(\HH)$ on $\s$
leaves $H$ invariant and acts on the remaining generators by
pre-composing the linear maps $\sJ$, $\sB$, $\sP$ and $\sQ$ with
$\Ad_u$.  More precisely, let $\widetilde H = H$,
$\widetilde\sJ = \sJ \circ \Ad_u$, $\widetilde\sB = \sB \circ \Ad_u$,
$\widetilde\sP = \sP \circ \Ad_u$ and $\widetilde\sQ = \sQ \circ \Ad_u$.
Since the Lie brackets of $\k$ are given in terms of quaternion
multiplication, this transformation is an automorphism of $\k$, and we
have a group homomorphism $\Aut(\HH) \to \Aut(\k)$.  The action on the
remaining brackets (those involving $\Q$) is as follows.
The Lie brackets of $\s$ which involve $\Q$ are given by
\begin{equation}
  \begin{split}
    [H, \sQ(s)] &= \sQ(sh)\\
    [\sJ(\omega), \sQ(s)] &= \tfrac12 \sQ(\omega s)\\
    [\sB(\beta), \sQ(s)] &= \sQ(\beta s \bb)\\
    [\sP(\pi), \sQ(s)] &= \sQ(\pi s \pp)\\
    [\sQ(s), \sQ(s)] &= c_0 |s|^2 H - \sJ(s \bc_1 \sbar) -
    \sB(s \bc_2 \sbar) - \sP(s \bc_3 \sbar),
  \end{split}
\end{equation}
and hence under conjugation by $u \in \Sp(1)$,
\begin{equation}
  \begin{split}
    [\widetilde H, \widetilde \sQ(s)] &= \widetilde\sQ(s\widetilde \hh)\\
    [\widetilde \sJ(\omega), \widetilde\sQ(s)] &= \tfrac12 \widetilde\sQ(\omega s)\\
    [\widetilde \sB(\beta), \widetilde\sQ(s)] &= \widetilde\sQ(\beta s \widetilde \bb)\\
    [\widetilde\sP(\pi), \widetilde\sQ(s)] &= \widetilde\sQ(\pi s \widetilde \pp)\\
    [\widetilde\sQ(s), \widetilde\sQ(s)] &= c_0 |s|^2 \widetilde H - \widetilde\sJ(s \widetilde{\bc}_1 \sbar) -
    \widetilde\sB(s \widetilde{\bc}_2 \sbar) - \widetilde\sP(s \widetilde{\bc}_3 \sbar),
  \end{split}
\end{equation}
where $\widetilde \hh = \bar u \hh u$, $\widetilde \bb = \bar u \bb
u$, $\widetilde \pp = \bar u \pp u$, and $\widetilde{\bc}_i = \bar u
\bc_i u$ for $i=1,2,3$.  In other words, the scalar parameters $c_0$,
$\Re \hh$, $\Re \bb$ and $\Re \pp$ remain inert, but the imaginary
quaternion parameters $\Im \hh, \Im \bb, \Im \pp, \bc_{1,2,3}$ are
simultaneously rotated.  We will use these automorphisms very often in
the sequel.

There are other automorphisms of $\k$ which do transform $\r$: those
are the inner automorphisms generated by $\B$ and $\P$. Their
description depends on the precise form of $\k$ but they will not play
a rôle in our discussion.

In addition to these, $G$ also consists of automorphisms of $\k$
which leave $\r$ intact.  If a linear map $\Phi: \s \to \s$ restricts to an
automorphism of $\k$, then it is in particular $\r$-equivariant.  The
most general $\r$-equivariant linear map $\Phi : \s \to \s$ sends
$(\J,H,\B,\P,\Q) \mapsto (\J, \widetilde H, \widetilde \B, \widetilde
\P, \widetilde \Q)$, where
\begin{equation}
  \begin{split}
    \widetilde H &= \mu H\\
    \widetilde \sB(\beta) &= a\sB(\beta) + c\sP(\beta) + e \sJ(\beta)\\
    \widetilde \sB(\beta) &= b\sB(\beta) + d\sP(\beta) + f \sJ(\beta)\\
    \widetilde \sQ(s) &= \sQ(s\qq)
  \end{split}
\end{equation}
where $\mu \in \GL(1,\RR) = \RR^\times$, $\qq \in \GL(1,\HH) =
\HH^\times$ and $\begin{pmatrix} \zero & a & b \\ \zero & c & d \\ 1 &
  e & f \end{pmatrix} \in \GL(3,\RR)$.  In
\cite[§§3.1]{Figueroa-OFarrill:2018ilb} we worked out the
automorphisms (which fix $\r$) of $\k$ a kinematical Lie algebra
isomorphic to one of \hyperlink{KLA1}{$\mathsf{K1}$}-\hyperlink{KLA11}{$\mathsf{K11}$} 
in Table~\ref{tab:kla}.  The
automorphisms of the remaining kinematical Lie algebras in the table
are listed below (see Table~\ref{tab:aut-kla}).  In particular, we
find that, although the precise form of the automorphisms depends on $\k$, a
common feature is that the coefficients $e,f$ are always zero, so we
will set them to zero from now on without loss of generality.

Assuming that the pair $(A=\begin{pmatrix}a & b \\ c & d
\end{pmatrix}, \mu) \in \GL(2,\RR) \times \RR^\times$ is an
automorphism of $\k = \s_{\bar 0}$, the brackets involving $\Q$ change
as follows:
\begin{equation}
  \begin{split}
    [\widetilde H, \widetilde\sQ(s)] &= \widetilde\sQ(s\widetilde\hh)\\
    [\widetilde \sB(\beta), \widetilde\sQ(s)] &= \widetilde\sQ(\beta s \widetilde\bb)\\
    [\widetilde \sP(\pi), \widetilde\sQ(s)] &= \widetilde\sQ(\pi s \widetilde\pp)\\
    [\widetilde\sQ(s), \widetilde\sQ(s)] &= \widetilde c_0 |s|^2
    \widetilde H - \widetilde\sJ(s \widetilde\bc_1 \sbar) - \widetilde
    \sB(s \widetilde \bc_2 \sbar) - \widetilde \sP(s \widetilde
    \bc_3\sbar),
  \end{split}
\end{equation}
where $\widetilde \sJ(\omega) = \sJ(\omega)$ and
\begin{equation}\label{eq:autk-on-params}
  \begin{aligned}[m]
    \widetilde\hh &= \mu \qq \hh \qq^{-1}\\
    \widetilde\bb &= \qq (a \bb + c \pp) \qq^{-1}\\
    \widetilde\pp &=  \qq (b \bb + d \pp) \qq^{-1}\\
    \widetilde c_0 &= c_0 \frac{|\qq|^2}{\mu}
  \end{aligned}
  \qquad\qquad
  \begin{aligned}[m]
    \widetilde \bc_1 &= \qq \bc_1 \qqbar\\
    \widetilde \bc_2 &= \frac{1}{ad - bc} \qq (d \bc_2 - b \bc_3) \qqbar\\
    \widetilde \bc_3 &= \frac{1}{ad - bc} \qq (a \bc_3 - c \bc_2) \qqbar.\\
  \end{aligned}
\end{equation}

In summary, the group $G$ by which we must quotient the sub-variety
$\cS$ cut out by the Jacobi identity (and $[\Q,\Q]\neq 0$) acts as
follows on the generators:
\begin{equation}
  \begin{split}
    \sJ &\mapsto \sJ \circ \Ad_u\\
    \sB &\mapsto a \sB \circ \Ad_u + c \sP \circ \Ad_u\\
    \sP &\mapsto b \sB \circ \Ad_u + d \sP \circ \Ad_u\\
    H &\mapsto \mu H\\
    \sQ &\mapsto \sQ \circ \Ad_u \circ R_\qq
  \end{split}
\end{equation}
where $\mu\in \RR$ and $\qq \in \HH$ are nonzero, $u \in \Sp(1)$  and
$A:=\begin{pmatrix}a & b \\ c & d \end{pmatrix} \in \GL(2,\RR)$ with
$(A,\mu)$ an automorphism of $\k$.

Let $\Aut_\r(\k)$ denote the subgroup of $\GL(2,\RR) \times \RR^\times$
consisting of such $(A,\mu)$.  These subgroups are listed in
\cite[§3.1]{Figueroa-OFarrill:2018ilb} for the kinematical Lie
algebras \hyperlink{KLA1}{$\mathsf{K1}$}-\hyperlink{KLA11}{$\mathsf{K11}$} 
in Table~\ref{tab:kla}.  We will collect them in
Table~\ref{tab:aut-kla} for convenience and in addition also record
them for the remaining kinematical Lie algebras 
\hyperlink{KLA12}{$\mathsf{K12}$}-\hyperlink{KLA18}{$\mathsf{K18}$} in
Table~\ref{tab:kla}.

\begin{table}[h!]
  \centering
  \caption{Automorphisms of kinematical Lie algebras (acting trivially
  on $\r$)}
  \label{tab:aut-kla}
  \begin{tabular}{l|>{$}l<{$}}\toprule
    \multicolumn{1}{c|}{K\#} & \multicolumn{1}{c}{Typical $(A,\mu) \in \GL(2,\RR) \times \RR^\times$}\\
    \toprule
    \hyperlink{KLA1}{1} & \left(\begin{pmatrix} a & b \\ c & d \end{pmatrix}, \mu\right) \\[10pt]
    \hyperlink{KLA2}{2} & \left(\begin{pmatrix} a & \zero \\ c & d \end{pmatrix}, \frac{d}{a}\right) \\[10pt]
    \hyperlink{KLA3}{3$_{\gamma\in(-1,1)}$} & \left(\begin{pmatrix} a & \zero \\ \zero &
        d \end{pmatrix}, 1\right) \\[10pt]
    \hyperlink{KLA3}{3$_{-1}$} & \left(\begin{pmatrix} a & \zero \\ \zero & d \end{pmatrix}, 1\right), \left(\begin{pmatrix} \zero & b \\ c & \zero \end{pmatrix}, -1\right) \\[10pt]
    \hyperlink{KLA3}{3$_{1}$} & \left(\begin{pmatrix} a & b \\ c & d \end{pmatrix}, 1\right) \\[10pt]
    \hyperlink{KLA4}{4$_{\chi>0}$} & \left(\begin{pmatrix} a & b \\ -b & a \end{pmatrix}, 1\right) \\[10pt]
    \hyperlink{KLA4}{4$_0$} & \left(\begin{pmatrix} a & b \\ -b & a \end{pmatrix}, 1\right), \left(\begin{pmatrix} a & b \\ b & -a \end{pmatrix}, -1\right) \\[10pt]
    \hyperlink{KLA5}{5} & \left(\begin{pmatrix} a & \zero \\ c & a \end{pmatrix}, 1\right)  \\[10pt]
    \hyperlink{KLA6}{6} & \left(\begin{pmatrix} a & b \\ c & d \end{pmatrix}, ad-bc\right) \\[10pt]
    \hyperlink{KLA7}{7},\hyperlink{KLA8}{8} & \left(\begin{pmatrix} 1 & \zero \\ c & d \end{pmatrix}, d \right), \left(\begin{pmatrix} -1 & \zero \\ c & d \end{pmatrix}, -d\right)  \\[10pt]
    \hyperlink{KLA9}{9} & \left(\begin{pmatrix} a & \zero \\ \zero & a^{-1} \end{pmatrix}, 1\right),  \left(\begin{pmatrix} \zero & b \\ b^{-1} & \zero \end{pmatrix}, -1\right) \\[10pt]
    \hyperlink{KLA10}{10},\hyperlink{KLA11}{11} & \left(\begin{pmatrix} a & b \\ -b & a \end{pmatrix}, 1\right), \left(\begin{pmatrix} a & b \\ b & -a \end{pmatrix}, -1\right),\quad a^2 + b^2 = 1\\[10pt]
    \hyperlink{KLA12}{12},\hyperlink{KLA13}{13} & \left(\begin{pmatrix} 1 & \zero \\ \zero & 1 \end{pmatrix}, \mu\right) , \left(\begin{pmatrix} 1 & \zero \\ \zero & -1 \end{pmatrix}, \mu\right) \\[10pt]
    \hyperlink{KLA14}{14} & \left(\begin{pmatrix} 1 & \zero \\ \zero & d \end{pmatrix}, \mu\right) \\[10pt]
    \hyperlink{KLA15}{15} & \left(\begin{pmatrix} a & \zero \\ c & a^2 \end{pmatrix}, \mu\right) \\[10pt]
    \hyperlink{KLA16}{16} & \left(\begin{pmatrix} 1 & \zero \\ \zero & d \end{pmatrix}, 1\right) \\[10pt]
    \hyperlink{KLA17}{17} & \left(\begin{pmatrix} a & \zero \\ c & a^2 \end{pmatrix}, a\right) \\[10pt]
    \hyperlink{KLA18}{18} & \left(\begin{pmatrix} a & \zero \\ \zero & a^2 \end{pmatrix}, 1\right) \\[10pt]
    \bottomrule
    \end{tabular}
\end{table}

\section{The classifications of kinematical and aristotelian Lie superalgebras}
\label{sec:kinem-lie-super}

In this section, we classify the supersymmetric extensions of the
kinematical Lie algebras in Table~\ref{tab:kla}.  In addition, we will
also classify aristotelian Lie superalgebras, as some of the
homogeneous supermanifolds we will encounter later on will turn out to
be superisations of the aristotelian homogeneous spacetimes classified
in \cite[App.~A]{Figueroa-OFarrill:2018ilb}.

\subsection{Classification of kinematical Lie superalgebras}
\label{sec:class-kinem-lie}

We now proceed to analyse each kinematical Lie algebra
$\k$ in Table~\ref{tab:kla} in turn and impose the Jacobi identity for
the corresponding Lie superalgebras extending $\k$.  We recall that we
are only interested in those Lie superalgebras where $[\Q,\Q] \neq 0$,
so $c_0,  \bc_1, \bc_2, \bc_3$ cannot all simultaneously vanish.

\subsubsection{Kinematical Lie algebras without supersymmetric extensions}
\label{sec:kinem-lie-algebr-1}

There are three kinematical Lie algebras which cannot be extended to a
kinematical superalgebra: \hyperlink{KLA9}{$\so(4,1)$},
\hyperlink{KLA10}{$\so(5)$} and the euclidean algebra
(\hyperlink{KLA7}{$\mathsf{K7}$} in Table~\ref{tab:kla}).

\subsubsection*{The euclidean algebra}
\label{sec:euclidean-algebra}

From Lemma~\ref{lem:kmod} we find that $\pp = \hh = 0$ and $\bb^2 =
\frac14$, so in particular $\bb \in \RR$, and from Lemma~\ref{lem:qqq}
we find that $\bc_2 \bb + \frac12 \bc_1 = 0$.  The
$[H,\Q,\Q]$ component of the Jacobi identity shows that
$\bc_2 = 0$, so that also $\bc_1 = 0$.  The
$[\P,\Q,\Q]$ component of the Jacobi identity is trivially satisfied,
whereas the $[\B,\Q,\Q]$ component shows that $\bc_3 = 0$
and also that $c_0 = 0$.  In summary, there is no kinematical
superalgebra extending the euclidean algebra for which $[\Q,\Q] \neq
0$; although there is a kinematical superalgebra where $[\sB(\beta),
\sQ(s)] = \pm \frac12 \sQ(\beta s)$, where both choices of sign
are related by an automorphism of $\k$: e.g., time reversal
$(\J,\B,\P,H) \mapsto (\J, -\B, \P, -H)$ or parity $(\J,\B,\P,H)
\mapsto (\J, -\B, -\P, H)$.

\subsubsection*{$\so(4,1)$}
\label{sec:so4-1}

In this case, Lemma~\ref{lem:kmod} gives that $\pp = \bb = 0$, but then
the $[\B,\P,\Q]$ component of the Jacobi identity cannot be satisfied,
showing that the $\so(3)$ representation on the spinor module $S$
cannot be extended to a representation of $\so(4,1)$.  The result
would be different for $N=2$ extensions, since $\so(4,1) \cong
\sp(1,1)$ does have an irreducible spinorial representation of
quaternionic dimension $2$.

\subsubsection*{$\so(5)$}
\label{sec:so5}

From Lemma~\ref{lem:kmod} we find from $[H,\B]=\P$ that $\pp = [\bb,\hh]$
and, in particular, $\pp \in \Im \HH$.  But then $[\P,\P] = \J$ says
that $\pp^2 = \frac14$, so that in particular $\pp \in \RR$ and nonzero,
which is a contradiction.  Again this shows that the spinor
representation $S$ of $\so(3)$ does not extend to a representation of
$\so(5)$ and again the conclusion would be different for $N=2$
extensions, since $\so(5) \cong \sp(2)$ does admit a quaternionic
representation of quaternionic dimension $2$.

\subsubsection{Lorentzian kinematical superalgebras}
\label{sec:lorentz-kinem-super}

The Poincaré Lie algebra (\hyperlink{KLA8}{$\mathsf{K8}$}) and
\hyperlink{KLA11}{$\so(3,2)$} are lorentzian isometry Lie algebras: of
Minkowski and anti~de~Sitter spacetimes, respectively.  It is of
course well known that such spacetimes admit $N{=}1$ superalgebras of
maximal dimension.  We treat them in this section for completeness.

\subsubsection*{The Poincaré superalgebra}
\label{sec:poinc-super}

From Lemma~\ref{lem:kmod} we see that $\pp = \hh = 0$ and that $\bb^2 =
-\tfrac14$, so that in particular $\bb \in \Im \HH$.  From
Lemma~\ref{lem:qqq} we see that $\tfrac12 \bc_1 + \bc_2 \bb = 0$.  The
$[\P,\Q,\Q]$ component of the Jacobi identity is trivially satisfied,
whereas the $[H,\Q,\Q]$ component forces $\bc_1 = \bc_2 = 0$ and the
$[\B,\Q,\Q]$ component says $\bc_3 = 2 c_0 \bb$.  Demanding $[\Q,\Q]
\neq 0$ requires $c_0 \neq 0$.

Using the quaternion automorphism, we can rotate $\bb$ so that $\bb =
\frac12 \kk$ and via the automorphism of the Poincaré Lie algebra
which rescales $H$ and $P$ by the same amount, we can bring $c_0 =
1$.  In summary, we have a unique isomorphism class of kinematical Lie
superalgebras extending the Poincaré Lie algebra and consisting in the
additional Lie brackets
\begin{equation}
  [\sB(\beta), \sQ(s)] = \tfrac12 \sQ(\beta s \kk)
  \qquad\text{and}\qquad [\sQ(s), \sQ(s)] = |s|^2 H - \sP(s\kk \sbar).
\end{equation}
We will show below in Section~\ref{sec:unpack-quat-notat} that $\s$ is
isomorphic to the Poincaré superalgebra defined in the Introduction.

\subsubsection*{The AdS superalgebra}
\label{sec:ads-superalgebra}

Here Lemma~\ref{lem:kmod} and Lemma~\ref{lem:qqq} give the following
relations:
\begin{equation}
  \pp = [\hh,\bb],\quad \bb=[\pp,\hh],\quad \hh = [\bb,\pp],\quad \bb^2 = -\tfrac14,\quad
  \pp^2 = -\tfrac14\quad\text{and}\quad c_0 \hh = \tfrac12 \bc_1 + \bc_2 \bb
  + \bc_3 \pp,
\end{equation}
and in addition $\bb \pp + \pp \bb = 0$, which simply states that $\bb \perp
\pp$.  These relations imply that $\bb,\pp,\hh \in \Im\HH$ and that $(2\bb, 2\pp,
2\hh)$ is an oriented orthonormal basis for $\Im\HH$.  The remaining
Jacobi identities give
\begin{equation}
  \bc_2 = - 2 c_0 \pp, \quad \bc_3 = 2 c_0 \bb \implies \bc_1 = -2 c_0 \hh,
\end{equation}
and some other relations which are identically satisfied.  If $c_0=0$
then $[\Q,\Q]=0$, so we requires $c_0\neq 0$ and hence
$(\frac{\bc_1}{c_0}, \frac{\bc_2}{c_0}, \frac{\bc_3}{c_0})$ defines a
negatively oriented, orthonormal basis for $\Im\HH$.  The automorphism
group of $\HH$ acts transitively on the space of orthonormal oriented
bases, so we can choose $(2\bb, 2\pp, 2\hh) = (\ii,\jj,\kk)$ without loss of
generality.

The resulting Lie superalgebra becomes
\begin{equation}
  \begin{split}
    [H,\sQ(s)] &= \tfrac12 \sQ(s\kk)\\
    [\sB(\beta),\sQ(s)] &= \tfrac12 \sQ(\beta s\ii)\\
    [\sP(\pi),\sQ(s)] &= \tfrac12 \sQ(\pi s\jj)\\
    [\sQ(s),\sQ(s)] &= c_0 \left(|s|^2 H + \sJ(s\kk \sbar) + \sB(s\jj\bar
    s) - \sP(s\ii\sbar)\right).
  \end{split}
\end{equation}
We may rescale $\Q$ to bring $c_0$ to a sign, but we can then change
the sign via the automorphism of $\k$ which sends $(\J,\B,\P,H)
\mapsto (\J,\P,\B,-H)$ and the inner automorphism induced by the
automorphism of $\HH$ which sends $(\ii,\jj,\kk) \mapsto
(\jj,\ii,-\kk)$.  In summary, there is a unique kinematical Lie
superalgebra with $[\Q,\Q]\neq 0$ extending $\k = \so(3,2)$: namely,
\begin{equation}
  \begin{split}
    [H,\sQ(s)] &= \tfrac12 \sQ(s\kk)\\
    [\sB(\beta),\sQ(s)] &= \tfrac12 \sQ(\beta s\ii)\\
    [\sP(\pi),\sQ(s)] &= \tfrac12 \sQ(\pi s\jj)\\
    [\sQ(s),\sQ(s)] &= |s|^2 H + \sJ(s\kk \sbar) + \sB(s\jj\sbar) -
    \sP(s\ii\sbar).
  \end{split}
\end{equation}
To show that this Lie superalgebra is isomorphic to $\osp(1|4)$ we may
argue as follows.  We first prove that $\s_{\bar 0}$ leaves
invariant a symplectic form on $\s_{\bar 1}$.  The most general
rotationally invariant bilinear form on $\s_{\bar 1}$ is given by
\begin{equation}
  \omega(\sQ(s_1), \sQ(s_2)) := \Re (s_1 \qq \sbar_2) \qquad\text{for
    some $\qq \in \HH$.}
\end{equation}
Indeed, if $u \in \Sp(1)$ then
\begin{equation}
  \begin{split}
    (u \cdot \omega)(\sQ(s_1), \sQ(s_2)) &= \omega(u^{-1} \cdot
    \sQ(s_1), u^{-1} \cdot \sQ(s_2))\\
    &= \omega (\sQ(\ubar s_1),  \sQ(\ubar s_2))\\
    &= \Re(\ubar s_1 \qq \sbar_2 u)\\
    &= \Re(s_1 \qq \sbar_2)\\
    &= \omega(\sQ(s_1), \sQ(s_2)).
  \end{split}
\end{equation}
Demanding that $\omega$ be invariant under the other generators $H,
\B, \P$, we find that $\qq = \mu \kk$ for some $\mu \in \RR$.  
Acting infinitesimally now,
\begin{equation}
  \begin{split}
    (H \cdot \omega)(\sQ(s_1), \sQ(s_2)) &= -\omega([H, \sQ(s_1)], \sQ(s_2)) - \omega(\sQ(s_1), [H, \sQ(s_2)])\\
    &= - \tfrac12 \omega(\sQ(s_1\kk), \sQ(s_2)) - \tfrac12\omega(\sQ(s_1), \sQ(s_2\kk))\\
    &= -\tfrac12 \Re(s_1\kk \qq \sbar_2) + \tfrac12 \Re(s_1 \qq \kk \sbar_2)\\
    &= \tfrac12 \Re(s_1 [\qq,\kk] \sbar_2),
  \end{split}
\end{equation}
which must vanish for all $s_1,s_2 \in S$, so that $[\qq,\kk] = 0$ and
hence $\qq = \lambda \id + \mu\kk$ for some $\lambda,\mu \in \RR$. A
similar calculation with $\B$ and $\P$ shows that $\qq$ must
anticommute with $\ii$ and $\jj$ and thus $\qq = \mu \kk$. So the
action of $\s_{\bar 0} \cong \so(3,2)$ on $\s_{\bar 1}$ defines a Lie
algebra homomorphism $\so(3,2) \to \sp(4,\RR)$, which is clearly
nontrivial. Since $\so(3,2)$ is simple, it is injective and a
dimension count shows that this is an isomorphism. But as
representations of $\so(3,2)$, $\odot^2\s_{\bar 1} \cong \wedge^2V$,
where $V$ is the 5-dimensional vector representation of $\s_{\bar 0}$,
and since $\wedge^2 V \cong \so(V) \cong \s_{\bar 0}$ we have that
there is one-dimensional space of $\s_{\bar 0}$-equivariant maps
$\odot^2 \s_{\bar 1} \to \s_{\bar 0}$. Since $[\Q,\Q] \neq 0$ the
bracket $\odot^2\s_{\bar 1} \to \s_{\bar 0}$ is an isomorphism. This
then shows that $\s$ is, by definition, isomorphic to $\osp(1|4)$.

\subsubsection{The Carroll superalgebra}
\label{sec:carroll-superalgebra}

For $\k$ the Carroll Lie algebra (\hyperlink{KLA6}{$\mathsf{K6}$} in
Table~\ref{tab:kla}), Lemma~\ref{lem:kmod} implies that
$\pp = \bb = \hh = 0$ and then Lemma~\ref{lem:qqq} says that
$\bc_1 = 0$.  The $[\B,\Q,\Q]$ Jacobi says that $\bc_3 =0$ and the
$[\P,\Q,\Q]$ Jacobi says that $\bc_2 = 0$.  The only nonzero bracket
involving $\Q$ is
\begin{equation}
  [\sQ(s), \sQ(s)] = c_0 |s|^2 H,
\end{equation}
which is nonzero for $c_0 \neq 0$.  If so, we can set $c_0 = 1$ via an
automorphism of $\k$ which rescales $H$ and $\P$, say, by $c_0$.  In
summary, there is a unique Carroll superalgebra with brackets
\begin{equation}
  [\sQ(s), \sQ(s)] = |s|^2 H,
\end{equation}
in addition to those of the Carroll Lie algebra itself.

\subsubsection{The galilean superalgebras}
\label{sec:galil-super}

For $\k$ the galilean Lie algebra (\hyperlink{KLA2}{$\mathsf{K2}$} in
Table~\ref{tab:kla}), Lemma~\ref{lem:kmod} says that $\bb=\pp = 0$ and
Lemma~\ref{lem:qqq} says that $\bc_1 = 2 c_0 \hh$.  The $[\B,\Q,\Q]$
Jacobi identity says that $\bc_1 = 0$ and $c_0 = 0$.  The $[\P,\Q,\Q]$
Jacobi identity is now identically satisfied, whereas the $[H,\Q,\Q]$
Jacobi identity gives
\begin{equation}
  \hh \bc_2 + \bc_2 \bar \hh = 0 \qquad\text{and}\qquad \bc_2 + \hh \bc_3 +
  \bc_3 \bar \hh = 0.
\end{equation}
Since $\bc_2$ and $\bc_3$ cannot both vanish, we see that this is only
possible if $\hh \in \Im\HH$ and hence these equations become $[\hh,\bc_2]
=0$ and $\bc_2 = [\bc_3,\hh]$.  There are two cases to consider,
depending on whether or not $\hh$ vanishes.  If $\hh=0$, then $\bc_2 = 0$
and $\bc_3$ is arbitrary.  If $\hh\neq 0$, then on the one hand $\bc_2$
is collinear with $\hh$, but also $\bc_2 = [\bc_3,\hh]$, which means that
$\bc_2=0$ so that $\bc_3 \neq 0$ is collinear with $\hh$.  In either
case, $\bc_3 \neq 0$ and $\hh = \psi \bc_3$, where $\psi \in \RR$
can be zero.

This gives rise to the following additional brackets
\begin{equation}
  [H, \sQ(s)] = \psi \sQ(s\bc_3) \qquad\text{and}\qquad [\sQ(s),
  \sQ(s)] = - \sP(s\bc_3 \sbar).
\end{equation}
We may use the automorphisms of $\HH$ to bring $\bc_3 = \phi \kk$, for
some nonzero $\phi \in \RR$.  We can set $\phi = 1$ by an automorphism
of $\k$ which rescales $\P$ and also $\B$ and $\H$ suitably.  This
still leaves the freedom to set $\psi = 1$ if $\psi \neq 0$.
In summary, we have two galilean superalgebras:
\begin{equation}
  [H, \sQ(s)] =
  \begin{cases}
    0\\
    \sQ(s \kk)
  \end{cases}
  \qquad\text{and}\qquad [\sQ(s),\sQ(s)] = - \sP(s\kk \sbar).
\end{equation}
The first one (where $[H,\Q] = 0$) is a contraction of the Poincaré
superalgebra, whereas the second (where $[H,\Q] \neq 0$) is not.

\subsubsection{Lie superalgebras associated with the static kinematical Lie algebra}
\label{sec:lie-super-assoc-1}

This is \hyperlink{KLA1}{$\mathsf{K1}$} in Table~\ref{tab:kla}.  In this
case, Lemma~\ref{lem:kmod} says that $\bb=\pp=0$ and
Lemma~\ref{lem:qqq} says that $\bc_1 = 2 c_0 \hh$.  The $[H,\Q,\Q]$
Jacobi identity says that $\hh \in \Im \HH$ and that $[\hh,\bc_i] = 0$
for $i=1,2,3$. Finally either the $[\B,\Q,\Q]$ or $[\P,\Q,\Q]$ Jacobi
identities say that $\bc_1 = 0$, so that $\hh c_0 = 0$.  This means
that either $\hh=0$ or else $c_0 = 0$ (or both).

There are several branches:
\begin{enumerate}
\item If $c_0 = 0$ and $\hh \neq 0$, $\bc_2$ and $\bc_3$ are collinear
  with $\hh$, but cannot both be zero.  Using automorphisms of the
  static kinematical Lie algebra and the ability to rotate vectors, we
  can bring $\hh = \tfrac12 \kk$, $\bc_2 = 0$ and $\bc_3 = \kk$, so that
  we have a unique Lie superalgebra in this case, with additional
  brackets
  \begin{equation}
    [H,\sQ(s)] = \tfrac12 \sQ(s\kk) \qquad\text{and}\qquad
    [\sQ(s),\sQ(s)] = - \sP(s\kk\sbar).
  \end{equation}
\item If $c_0 = 0$ and $\hh=0$, $\bc_2$ and $\bc_3$ are unconstrained,
  but not both zero.  We distinguish two cases, depending on whether
  or not they are linearly independent:
  \begin{enumerate}
  \item If they are linearly dependent, so that they are collinear,
    then we can use automorphisms to set $\bc_2$, say, to zero and
    $\bc_3 = \kk$.  This results in the Lie superalgebra
    \begin{equation}
      [\sQ(s),\sQ(s)] = - \sP(s\kk\sbar).
    \end{equation}
  \item If they are linearly independent, we can bring them to $\bc_2
    = \jj$ and $\bc_3 = \kk$, resulting in the Lie superalgebra
    \begin{equation}
      [\sQ(s),\sQ(s)] = - \sB(s\jj\sbar) - \sP(s\kk\sbar).
    \end{equation}
  \end{enumerate}
\item Finally, if $c_0\neq 0$, then $\hh=0$ and again $\bc_2$ and
  $\bc_3$ are unconstrained, but can now be zero.  Moreover we can
  rescale $H$ so that $c_0 = 1$.  We have three cases to consider,
  depending on whether they span a zero-, one- or two-dimensional real
  subspace of $\Im \HH$:
  \begin{enumerate}
  \item If $\bc_2 = \bc_3 = 0$ we have the Lie superalgebra
    \begin{equation}
      [\sQ(s),\sQ(s)] = |s|^2 H.
    \end{equation}
  \item If $\bc_2$ and $\bc_3$ span a line, then we may use the
    automorphisms to set $\bc_2 = 0$ and $\bc_3 = \kk$, resulting in
    the Lie superalgebra
    \begin{equation}
      [\sQ(s),\sQ(s)] = |s|^2 H - \sP(s\kk\sbar).
    \end{equation}
  \item Finally, if $\bc_2$ and $\bc_3$ are linearly independent, we
    may use the automorphisms to set $\bc_2 = \jj$ and $\bc_3 = \kk$,
    resulting in the Lie superalgebra
    \begin{equation}
      [\sQ(s),\sQ(s)] = |s|^2 H - \sB(s\jj\sbar) - \sP(s\kk\sbar).
    \end{equation}
  \end{enumerate}
\end{enumerate}

\subsubsection{Lie superalgebras associated with kinematical Lie algebra $\mathsf{K3}_\gamma$}
\label{sec:lie-super-assoc-3}

Here Lemma~\ref{lem:kmod} says that $\bb = \pp = 0$ and
Lemma~\ref{lem:qqq} says that $\bc_1 = 2 c_0 \hh$.  The $[\B,\Q,\Q]$
Jacobi identity says that $\bc_1 = 0$ and $c_0 = 0$, whereas the
$[\P,\Q,\Q]$ Jacobi identity offers no further conditions.  Finally,
the $[H,\Q,\Q]$ Jacobi identity gives two conditions
\begin{equation}
  \gamma \bc_2 = \hh \bc_2 + \bc_2 \bar \hh \qquad\text{and}\qquad
  \bc_3 = \hh \bc_3 + \bc_3 \bar \hh,
\end{equation}
which are equivalent to
\begin{equation}
  (\gamma - 2 \Re \hh) \bc_2 = [\Im \hh, \bc_2] \qquad\text{and}\qquad
  (1 - 2 \Re \hh) \bc_3 = [\Im \hh, \bc_3].
\end{equation}
We see that we must distinguish two cases: $\gamma = 1$ and $\gamma
\in [-1,1)$.

If $\gamma \neq 1$, then we have two cases, depending on whether $\Re
\hh = \frac12$ or $\Re \hh = \frac12 \gamma$.  In the former case, $\bc_2
= 0$ and $\Im \hh$ is collinear with $\bc_3 \neq 0$, whereas in the
latter, $\bc_3 = 0$ and $\Im \hh$ is collinear with $\bc_2 \neq 0$.

If $\gamma = 1$, then $\Re \hh = \frac12$ and $\bc_2$, $\Im \hh$ and
$\bc_3$ are all collinear, with at least one of $\bc_2$ and $\bc_3$
nonzero.  When $\gamma =1$, the automorphisms of $\kk$ include the
general linear group $\GL(2,\RR)$ acting on the two copies of the
vector representation.  Using this we can always assume that $\bc_2 =
0$ and $\bc_3 \neq 0$.

In either case, all nonzero vectors are collinear and we can rotate
them to lie along the $\kk$ axis.  In the case $\gamma = 1$, we have
a one-parameter family of Lie superalgebras:
\begin{equation}
  [H,\sQ(s)] = \tfrac12 \sQ(s(1+\lambda \kk)) \qquad\text{and}\qquad
  [\sQ(s),\sQ(s)] = - \sP(s\kk \sbar),
\end{equation}
where we have used the freedom to rescale $\P$ in order to set
$\bc_3 = \kk$.  This is also a Lie superalgebra for $\gamma \neq 1$.

If $\gamma \neq 1$, we have an additional one-parameter family of Lie
superalgebras:
\begin{equation}
  [H,\sQ(s)] = \tfrac12 \sQ(s(\gamma +\lambda \kk)) \qquad\text{and}\qquad
 [\sQ(s),\sQ(s)] = - \sB(s\kk \sbar).
\end{equation}

The parameter $\lambda$ is essential; that is, Lie superalgebras with
different values of $\lambda$ are not isomorphic.  One way to test
this is the following.  Let $[-,-]_\lambda$ denote the above Lie
bracket.  This satisfies the Jacobi identity for all $\lambda \in
\RR$.  Write it as $[-,-]_\lambda = (1-\lambda) [-,-]_0 + \lambda
[-,-]_1$.  The difference $[-,-]_1 - [-,-]_0$ is a cocycle of the Lie
superalgebra with $\lambda = 0$.  The parameter would be inessential
if and only if it is a coboundary.  One can check that this is not the
case.  This same argument shows that the parameters appearing in other
Lie superalgebras are essential as well.

\subsubsection{Lie superalgebras associated with kinematical Lie algebra $\mathsf{K4}_\chi$}
\label{sec:lie-super-assoc-4}

Here Lemma~\ref{lem:kmod} says $\bb = \pp = 0$ and Lemma~\ref{lem:qqq}
says that $\bc_1 = 2 c_0 \hh$.  Then either the $[\B,\Q,\Q]$ or
$[\P,\Q,\Q]$ Jacobi identities force $\bc_1=0$ and $c_0= 0$.  The
$[H,\Q,\Q]$ Jacobi identity results in the following two equations:
\begin{equation}
  \chi \bc_2 - \bc_3 = \hh \bc_2 + \bc_2 \bar \hh \qquad\text{and}\qquad
  \chi \bc_3 + \bc_2 = \hh \bc_3 + \bc_3 \bar \hh,
\end{equation}
or equivalently,
\begin{equation}
  (\chi - 2 \Re \hh) \bc_2 - \bc_3 = [\Im \hh, \bc_2] \qquad\text{and}\qquad
  (\chi - 2 \Re \hh) \bc_3 + \bc_2 = [\Im \hh, \bc_3].
\end{equation}
Taking the inner product of the first equation with $\bc_2$ and of the
second equation with $\bc_3$ and adding, we find
\begin{equation}
  (\chi - 2 \Re \hh) (|\bc_2|^2 + |\bc_3|^2) = 0, 
\end{equation}
and since $\bc_2$ and $\bc_3$ cannot both be zero, we see that $\Re \hh
= \frac\chi2$, and hence that
\begin{equation}
  [\Im \hh, \bc_2] = - \bc_3 \qquad\text{and}\qquad [\Im \hh, \bc_3] = \bc_2,
\end{equation}
so that $\bc_3 \perp \bc_2$.  This shows that $(\Im \hh, \bc_3, \bc_2)$
is an oriented orthogonal (but not necessarily orthonormal) basis.  We
can rotate them so that $\Im \hh = \phi \jj$, $\bc_3 = \psi \kk$ and
$\bc_2 = 2\phi\psi \ii$, but then we see that $\phi^2 =
\tfrac14$.  Using the automorphism of $\k$ which rescales $\B$ and
$\P$ simultaneously by the same amount we can assume that $\bc_3 =
\kk$ and hence if $\Im \hh = \pm \tfrac12 \jj$ then $\bc_2 = \pm \ii$.
But the two signs are related by the automorphism of $\HH$ which sends
$(\ii,\jj,\kk) \mapsto (-\ii,-\jj,\kk)$.  In summary, we have a
unique Lie superalgebra associated with this kinematical Lie algebra:
\begin{equation}
  [H,\sQ(s)] = \tfrac12 \sQ(s(\chi + \jj)) \qquad\text{and}\qquad
  [\sQ(s), \sQ(s)] = - \sB(s\ii\sbar) - \sP(s\kk\sbar).
\end{equation}

\subsubsection{Lie superalgebras associated with kinematical Lie algebra $\mathsf{K5}$}
\label{sec:lie-super-assoc}

Here Lemma~\ref{lem:kmod} says that $\bb=\pp=0$ and Lemma~\ref{lem:qqq}
says that $\bc_1 = 2 c_0 \hh$.  The $[\B,\Q,\Q]$ Jacobi identity forces
$c_0 = \bc_1 = 0$, which then makes the $[\P,\Q,\Q]$ Jacobi identity
be satisfied identically.  The $[H,\Q,\Q]$ Jacobi identity gives two
further equations
\begin{equation}
  \bc_2 = \hh \bc_2 + \bc_2 \bar \hh \qquad\text{and}\qquad \bc_2 + \bc_3
  = \hh \bc_3 + \bc_3 \bar \hh.
\end{equation}
The first equation is equivalent to
\begin{equation}
  (1 - 2 \Re(\hh)) \bc_2 = [\Im \hh, \bc_2].
\end{equation}
If $\bc_2 \neq 0$, then $\Re \hh = \frac12$ and $\Im \hh$ is collinear
with $\bc_2$. But then the second equation says that $\bc_2 = [\Im \hh,
\bc_3]$, which is incompatible with $\bc_2$ and $\Im \hh$ being
collinear.  Therefore $\bc_2 = 0$ and the second equation then says
that $\Re \hh = \frac12$  and $\Im \hh$ collinear with $\bc_3 \neq 0$.
We have the following additional brackets
\begin{equation}
  [H, \sQ(s)] = \tfrac12 \sQ(s (1 + \lambda \bc_3))
  \qquad\text{and}\qquad [\sQ(s),\sQ(s)] = - \sP(s \bc_3 \sbar),
\end{equation}
where $\lambda \in \RR$.  We may rotate $\bc_3$ to $\psi \kk$, for some
nonzero $\psi \in \RR$.  We can then rescale $\P$ and $\B$
simultaneously by the same amount to set $\psi = 1$.  In summary, we
are left with the following one-parameter family of Lie superalgebras:
\begin{equation}
  [H, \sQ(s)] = \tfrac12 \sQ(s (1 + \lambda \kk))
  \qquad\text{and}\qquad [\sQ(s),\sQ(s)] = - \sP(s \kk \sbar).
\end{equation}

As in the case of the Lie superalgebras associated with Lie algebra
\hyperlink{KLA3}{$\mathsf{K3}_\gamma$}, the parameter $\lambda$ is essential and Lie
superalgebras with different values of $\lambda$ are not isomorphic.

\subsubsection{Lie superalgebras associated with kinematical Lie algebra $\mathsf{K12}$}
\label{sec:lie-super-assoc-12}

Lemma~\ref{lem:kmod} says that $\bb^2 = \frac12 \bb$, so that
$\bb\in \RR$, $[\hh,\pp]=0$ and $\pp^2= \frac12 (\bb-\frac12)$, so
that $\pp \in \Im\HH$.  (In particular, $\bb\pp= 0$.)
Lemma~\ref{lem:qqq} does not simplify at this stage.  The $[H,\Q,\Q]$
Jacobi identity says that $c_0 \Re \hh = 0$ and that
$\hh \bc_i + \bc_i \bar \hh = 0$ for $i=1,2,3$.  The $[\B,\Q,\Q]$
Jacobi identity says that $\bb\bc_1 = 0$, $\bb\bc_3 = 0$ and
$\bc_1 = (2\bb -1) \bc_2$.  Finally, the $[\P,\Q,\Q]$ Jacobi identity
says that $c_0 \pp = 0$, among other conditions that will turn out not
to play a rôle.

We have two branches depending on the value of $\bb$:
\begin{enumerate}
\item If $\bb=0$, $\pp^2= -\frac14$, so that $c_0 = 0$.  This means $\bc_1
  + \bc_2 =0$ and $\bc_3 = 2 \bc_1 \pp$ and none of $\bc_{1,2,3}$ can
  vanish.  This means that $\Re \hh = 0$ and that $\hh$ and $\bc_i$ are
  collinear for all $i=1,2,3$.  Also $\hh$ and $\pp$ are collinear and
  this is inconsistent, unless $\hh = 0$: indeed, if $\pp$ and $\bc_i$ are
  collinear with $\hh \neq 0$, then $\bc_3 = 2\bc_1 \pp$ cannot be
  satisfied, since the LHS is imaginary but the RHS is real and both
  are nonzero.  Therefore we conclude that $\hh=0$.  The condition
  $\bc_3 = 2 \bc_1 \pp$ says that there exists $\psi > 0$ such that
  $(\psi^{-1} \bc_1, 2\pp, \psi^{-1} \bc_3)$ is an oriented
  orthonormal basis, which can be rotated to $(\ii,\jj,\kk)$.  In
  other words, we can write $\bc_1 = \psi \ii$, $\pp = \frac12\jj$
  and $\bc_3 = \psi \kk$, so that $\bc_2 = - \psi \ii$.  We may
  rescale $\sQ$ to bring $\psi =1$ and we may rotate $(\ii,\jj,\kk)
  \mapsto (-\ii,\jj,-\kk)$ to arrive at the following Lie
  superalgebra:
  \begin{equation}
    [\sP(\pi), \sQ(s)] = \tfrac12 \sQ(s\jj) \qquad\text{and}\qquad
    [\sQ(s),\sQ(s)] = \sJ(s\ii\sbar) - \sB(s\ii\sbar) + \sP(s\kk\bar
    s).
  \end{equation}
\item If $\bb=\frac12$, then $\pp=0$ and also $\bc_1= \bc_3 = 0$ and
  $\bc_2 = 2 c_0 \hh$ with $c_0 \neq 0$.  We have two sub-branches,
  depending on whether or not $\hh=0$.
  \begin{enumerate}
  \item If $\hh=0$ we have the following Lie superalgebra, after
    rescaling $H$ to set $c_0 = 1$:
    \begin{equation}
      [\sB(\beta),\sQ(s)] = \tfrac12 \sQ(\beta s) \qquad\text{and}\qquad
      [\sQ(s), \sQ(s)] = |s|^2 H.
    \end{equation}
  \item On the other hand, if $\hh \neq 0$, we may rotate it so that
    $2h = \psi \kk$ for some $\psi$ such that $\psi c_0 > 0$.  Then we
    may rescale $H$ and $\sQ$ in such that a way that we bring
    $\psi c_0 =1$, thus arriving at the following Lie superalgebra:
    \begin{equation}
      [\sB(\beta),\sQ(s)] = \tfrac12 \sQ(\beta s), \qquad [H,\sQ(s)] =
      \tfrac12 \sQ(s\kk) \qquad\text{and}\qquad
      [\sQ(s), \sQ(s)] = |s|^2 H - \sB(s\kk\sbar).
    \end{equation}
  \end{enumerate}
\end{enumerate}

\subsubsection{Lie superalgebras associated with kinematical Lie algebra $\mathsf{K13}$}
\label{sec:lie-super-assoc-13}

Here Lemma~\ref{lem:kmod} says that $\bb^2= \frac12 \bb$, so that $\bb \in
\RR$ and $\pp^2=-\tfrac12(\bb-\frac12) \in \RR$.  Lemma~\ref{lem:qqq} does
not simplify further at this stage.  The $[H,\Q,\Q]$ Jacobi identity
says that $c_0 \Re \hh = 0$ and $\hh\bc_i + \bc_i \bar \hh = 0$ for
$i=1,2,3$.  The $[\B,\Q,\Q]$ Jacobi identity says that $\bb\bc_1 =
\bb\bc_3 = 0$, whereas $(\bb-\frac12)\bc_2 = \frac12 \bc_1$.  Finally, the
$[\P,\Q,\Q]$ Jacobi identity says that $\bc_1 = 2\pp \bc_3$, $\bc_3 = -2
\pp \bc_2$ and $\bc_3 = 2 \pp \bc_1$.

As usual we have two branches depending on the value of $\bb$:
\begin{enumerate}
\item If $\bb=0$, then $\pp^2= \frac14$.  Due to the automorphism of $\k$ which
  changes the sign of $\P$, we may assume $\pp = \frac12$ without loss
  of generality.  It follows that $\bc_1 = c_0 \hh$ and that $\bc_2 = -
  \bc_1 = - c_0 \hh$ and that $\bc_3 = \bc_1 = c_0 \hh$.  If $c_0 = 0$
  then $\bc_i = 0$ for all $i$, so we must have $c_0 \neq 0$.  In that
  case, $\hh \in \Im \HH$ and $\hh$ is collinear with all $\bc_i$ for
  $i=1,2,3$.  We distinguish two cases, depending on whether or not
  $\hh=0$:
  \begin{enumerate}
  \item If $\hh\neq 0$, we may rotate it so that $\hh = \psi \kk$ where
    $\psi c_0 > 0$.  We may rescale $H \mapsto \psi^{-1} H$
    (which is an automorphism of $\k$) and rescale $\Q$ to bring
    $\psi c_0 = 1$.  In summary, we arrive at the following Lie
    superalgebra:
    \begin{equation}
      [H,\sQ(s)] = \sQ(s\kk),\qquad [\sP(\pi), \sQ(s)] = \tfrac12
      \sQ(\pi s) \qquad\text{and}\qquad [\sQ(s),\sQ(s)] = |s|^2 H -
      \sJ(s\kk\sbar) + \sB(s\kk\sbar) - \sP(s\kk\sbar).
    \end{equation}
  \item If $\hh = 0$, then we have the Lie superalgebra
    \begin{equation}
      [\sP(\pi), \sQ(s)] = \tfrac12
      \sQ(\pi s) \qquad\text{and}\qquad [\sQ(s),\sQ(s)] = |s|^2 H.
    \end{equation}
  \end{enumerate}
  
\item If $\bb=\frac12$, then $\pp=0$ and $\bc_1 = \bc_3 = 0$ with $\bc_2 =
  2 c_0 \hh$ with $c_0 \neq 0$ and $\hh \in \Im\HH$.  Again we distinguish
  between vanishing and nonvanishing $\hh$:
  \begin{enumerate}
  \item If $\hh \neq 0$, we may rotate it so that $2h = \psi \kk$
    with $\psi c_0 > 0$.  We apply the $\k$-automorphism $H \mapsto
    \psi^{-1} H$ and rescale $\Q$ to bring $\psi c_0 = 1$, thus
    resulting in the Lie superalgebra
    \begin{equation}
      [H,\sQ(s)] = \tfrac12 \sQ(s\kk),\qquad [\sB(\beta), \sQ(s)] = \tfrac12
      \sQ(\beta s) \qquad\text{and}\qquad [\sQ(s),\sQ(s)] = |s|^2 H -
      \sB(s\kk\sbar).      
    \end{equation}
  \item If $\hh = 0$, we arrive at the Lie superalgebra
    \begin{equation}
      [\sB(\beta), \sQ(s)] = \tfrac12
      \sQ(\beta s) \qquad\text{and}\qquad [\sQ(s),\sQ(s)] = |s|^2 H.
    \end{equation}
  \end{enumerate}
\end{enumerate}

\subsubsection{Lie superalgebras associated with kinematical Lie algebra $\mathsf{K14}$}
\label{sec:lie-super-assoc-14}

Here Lemma~\ref{lem:kmod} says that $\pp=0$ and $2 \bb^2 = \bb$, so that $\bb
\in \RR$.  Lemma~\ref{lem:qqq} says that $\tfrac12 \bc_1 + \bc_2 \bb = c_0 \hh$.  The
$[\P,\Q,\Q]$ Jacobi identity says that $\bc_1=0$, so that $c_0 \hh =
\bc_2 \bb$.  The $[\B,\Q,\Q]$ Jacobi identity says that $(2\bb-1) \bc_2 = 0$ and
$\bb\bc_3=0$, whereas the $[\B,\Q,\Q]$ Jacobi identity says that $\hh
\bc_i + \bc_i \bar \hh = 0$ for $i=2,3$.

We have two branches, depending on the value of $\bb$:
\begin{enumerate}
\item If $\bb=0$ then $\bc_2 = 0$ and we have two sub-branches depending
  on whether or not $c_0 = 0$:
  \begin{enumerate}
  \item If $c_0 = 0$ then $\bc_3 \neq 0$, so that $\Re \hh = 0$ and $\hh$
    is collinear with $\bc_3$.  We may rotate $\bc_3$ to lie along
    $\kk$, say, and then use automorphisms of $\k$ to set $\bc_3 =
    \kk$.  If $\hh \neq 0$, we may also set it equal to $\kk$.  In
    summary, we have two isomorphism classes of Lie superalgebras
    here:
    \begin{equation}
      [H,\sQ(s)] =
      \begin{cases}
        0 \\
        \sQ(s\kk)
      \end{cases}
      \qquad\text{and}\qquad
      [\sQ(s),\sQ(s)] = - \sP(s\kk\sbar).
    \end{equation}
  \item If $c_0 \neq 0$, then $\hh=0$ and $\bc_3$ is free: if nonzero we
    may rotate it to $\kk$ and rescaling $\P$, which is an
    automorphism of $\k$, we can bring it to $\kk$.  Rescaling $H$ we
    can bring $c_0 = 1$.  This gives two isomorphism classes of Lie
    superalgebras:
    \begin{equation}
      [\sQ(s),\sQ(s)] = |s|^2 H \qquad\text{and}\qquad
      [\sQ(s),\sQ(s)] = |s|^2 H - \sP(s\kk\sbar).
    \end{equation}
  \end{enumerate}
\item If $\bb=\tfrac12$, then $\bc_3 = 0$ and $\bc_2 = 2 c_0 \hh$, and we
  have two cases, depending on whether or not $\hh=0$.
  \begin{enumerate}
  \item If $\hh=0$ then $\bc_2 = 0$, and then $c_0 \neq 0$.  Rescaling
    $H$ we can set $c_0=1$ and we arrive at the Lie superalgebra
    \begin{equation}
      [\sB(\beta),\sQ(s)] = \tfrac12 \sQ(\beta s)
      \qquad\text{and}\qquad
      [\sQ(s),\sQ(s)] = |s|^2 H.
    \end{equation}
  \item If $\hh\neq 0$ we can rotate and rescale $\Q$ such that $\bc_2 =
    2 c_0 \hh = \kk$ and then we can rescale $H$ so that $c_0 = 1$.
    The resulting Lie superalgebra is now
    \begin{equation}
      [H,\sQ(s)] = \tfrac12 \sQ(s\kk),\qquad [\sB(\beta),\sQ(s)] =
      \tfrac12 \sQ(\beta s) \qquad\text{and}\qquad [\sQ(s),\sQ(s)] =
      |s|^2 H - \sB(s\kk\sbar).
    \end{equation}
  \end{enumerate}
\end{enumerate}

\subsubsection{Lie superalgebras associated with kinematical Lie algebra $\mathsf{K15}$}
\label{sec:lie-super-assoc-15}

Here Lemma~\ref{lem:kmod} says that $\bb = \pp = 0$, whereas
Lemma~\ref{lem:qqq} says that $\bc_1 = 2 c_0 \hh$.  The $[\B,\Q,\Q]$
Jacobi identity says that $\bc_1 = \bc_2 = 0$, and hence the
$[\P,\Q,\Q]$ component is identically satisfied.  Finally, the
$[H,\Q,\Q]$ Jacobi identity says that $\hh \bc_3 + \bc_3 \bar \hh = 0$,
which expands to
\begin{equation}
  2 \Re(\hh) \bc_3 + [\Im \hh, \bc_3] = 0.
\end{equation}

We have two branches of solutions:
\begin{enumerate}
\item If $c_0 = 0$, then $\bc_3 \neq 0$ and hence $\Re \hh = 0$ and $\hh$
  is collinear with $\bc_3$.  We may rotate $\bc_3$ to lie along $\kk$
  and then rescale $\Q$ so that $\bc_3 = \kk$.  If $\hh \neq 0$, we may
  use automorphisms of $\k$ to set $\hh = \kk$ as well.  In summary, we
  have two isomorphism classes of Lie superalgebras:
  \begin{equation}
    [H, \sQ(s)] =
    \begin{cases}
      \sQ(s\kk)\\
      0
    \end{cases} \qquad\text{and}\qquad
    [\sQ(s), \sQ(s)] = - \sP(s\kk\sbar).
  \end{equation}
\item If $c_0 \neq 0$, then $\hh = 0$ and $\bc_3$ is unconstrained.  If
  nonzero, we may rotate it to lie along $\kk$, rescale $\Q$ so that
  $\bc_3 = \kk$ and then use automorphisms of $\k$ to set $c_0 = 1$.
  In summary, we have two isomorphism classes of Lie superalgebras:
  \begin{equation}
    [\sQ(s), \sQ(s)] = |s|^2 H \qquad\text{or}\qquad
    [\sQ(s), \sQ(s)] = |s|^2 H - \sP(s\kk\sbar).
  \end{equation}
\end{enumerate}

\subsubsection{Lie superalgebras associated with kinematical Lie algebra $\mathsf{K16}$}
\label{sec:lie-super-assoc-16}

Here Lemma~\ref{lem:kmod} says that $\pp=0$ and $\bb(\bb-\frac12) =0$, so
that $\bb \in \RR$.  Lemma~\ref{lem:qqq} then says that $c_0 \hh = \frac12
\bc_1 + \bc_2 \bb$.  Now the $[\P,\Q,\Q]$ Jacobi identity says that
$c_0=0$ and $\bc_1= 0$, so that $\bc_2 \bb = 0$.  The $[H,\Q,\Q]$ Jacobi
identity says that $\hh\bc_2 + \bc_2 \bar \hh = 0$ and $\hh\bc_3 + \bc_3
\bar \hh = \bc_3$.  Finally the $[\B,\Q,\Q]$ Jacobi identity says that
$\bb\bc_3 = 0$ and $(\bb-\frac12)\bc_2= 0$.

Notice that if $\bb=\frac12$ then $\bc_3=0$ and $\bc_2 = 0$,
contradicting $[\Q,\Q]\neq 0$, so we must have $\bb=0$.  Now $\bc_2=0$
and hence $\bc_3 \neq 0$.  It then follows that $\Re \hh = \frac12$ and
$\Im \hh$ is collinear with $\bc_3$.  We can rescale $\P$ (which is an
automorphism of $\k$) and rotate so that $\bc_3 = \kk$, so that $\hh =
\frac12 (1 + \lambda \kk)$ for $\lambda \in \RR$.  The resulting
one-parameter family of Lie superalgebras is then
\begin{equation}
  [H,\sQ(s)] = \tfrac12 \sQ(s(1+\lambda \kk)) \qquad\text{and}\qquad
  [\sQ(s), \sQ(s)] = - \sP(s\kk\sbar).
\end{equation}

As in the case of the Lie superalgebras associated with Lie algebras
\hyperlink{KLA3}{$\mathsf{K3}_\gamma$} and \hyperlink{KLA5}{$\mathsf{K5}$}, the
parameter $\lambda$ is essential and Lie superalgebras with different
values of $\lambda$ are not isomorphic.

\subsubsection{Lie superalgebras associated with kinematical Lie algebra $\mathsf{K17}$}
\label{sec:lie-super-assoc-17}

Here Lemma~\ref{lem:kmod} simply sets $\bb = \pp = 0$ and
Lemma~\ref{lem:qqq} says $\bc_1 = 2 c_0 \hh$.  The $[\P,\Q,\Q]$ Jacobi
identity sets $\bc_1 = 0$ and hence $c_0 \hh = 0$.  The $[\B,\Q,\Q]$
Jacobi identity sets $c_0 = 0$ and $\bc_2 = 0$, whereas the
$[H,\Q,\Q]$ Jacobi identity says that $\hh$ is collinear with $\bc_3
\neq 0$.  We can rotate $\bc_3$ to lie along $\kk$ and rescale $\sQ$
to effectively set it to $\kk$.  Then $\hh = \frac\psi2 \kk$ for some
$\psi$ and rescaling $H$ allows us to set $\psi =1$.  In
summary, we have a unique Lie superalgebra associated with this
kinematical Lie algebra: namely,
\begin{equation}
  [H,\sQ(s)] = \tfrac12 \sQ(s\kk) \qquad\text{and}\qquad
  [\sQ(s), \sQ(s)] = - \sP(s\kk\sbar).
\end{equation}

\subsubsection{Lie superalgebras associated with kinematical Lie algebra $\mathsf{K18}$}
\label{sec:lie-super-assoc-18}

Here Lemma~\ref{lem:kmod} simply sets $\bb = \pp = 0$ and
Lemma~\ref{lem:qqq} says $\bc_1 = 2 c_0 \hh$.  The $[\P,\Q,\Q]$ Jacobi
identity sets $\bc_1 = 0$ and $c_0 = 0$, whereas the $[\B,\Q,\Q]$
Jacobi identity sets $\bc_2 = 0$.  Finally, the $[H,\Q,\Q]$ Jacobi
identity says that $\Re \hh = 1$ and $\Im \hh = \lambda \bc_3$ for some
$\lambda \in \RR$.  We can rotate $\bc_3$ to lie along $\kk$ and rescale $\sQ$
to effectively set it to $\kk$.  Then $\hh = 1 + \lambda \kk$. In
summary, we have a one-parameter family of Lie superalgebras
associated with this kinematical Lie algebra: namely,
\begin{equation}
  [H,\sQ(s)] = \sQ(s(1+\lambda \kk)) \qquad\text{and}\qquad
  [\sQ(s), \sQ(s)] = - \sP(s\kk\sbar).
\end{equation}

As in the case of the Lie superalgebras associated with Lie algebras
\hyperlink{KLA3}{$\mathsf{K3}_\gamma$}, \hyperlink{KLA5}{$\mathsf{K5}$} and
\hyperlink{KLA16}{$\mathsf{K16}$}, the parameter $\lambda$ is essential and
Lie superalgebras with different values of $\lambda$ are not
isomorphic.

\subsubsection{Summary}
\label{sec:summary}

Table~\ref{tab:klsa} summarises the results.  In that table we list
the isomorphism classes of kinematical Lie superalgebras (with
$[\Q,\Q]\neq 0$).  Recall that the Lie brackets involving $\Q$ are the
$[\Q,\Q]$ bracket and also
\begin{equation}
  [H, \sQ(s)] = \sQ(s\hh), \qquad [\sB(\beta),\sQ(s)] = \sQ(\beta s \bb),
  \qquad [\sP(\pi), \sQ(s) ] = \sQ(\pi s \pp),
\end{equation}
for some $\hh,\bb,\pp \in \HH$.  In Table~\ref{tab:klsa} we list any nonzero
values of $\hh,\bb,\pp$ and the $[\Q,\Q]$ bracket.  The first column is
simply the label for the Lie superalgebra, the second column is the
corresponding kinematical Lie algebra, the next columns are
$\hh,\bb,\pp$ and $[\Q,\Q]$.  The next four columns are the possible
$\so(3)$-equivariant $\ZZ$-gradings (with $\J$ of degree $0$)
compatible with the $\ZZ_2$-grading; that is, such that the parity is
the reduction modulo $2$ of the degree.  This requires, in particular,
that $q$ be an odd integer, which we can take to be $-1$ by
convention, if so desired.

\begin{table}[h!]
  \centering
  \caption{Kinematical Lie superalgebras (with $[\Q,\Q]\neq 0$)}
  \label{tab:klsa}
  \setlength{\extrarowheight}{2pt}
  \rowcolors{2}{blue!10}{white}
  \resizebox{\textwidth}{!}{
    \begin{tabular}{l|l*{4}{|>{$}c<{$}}*{4}{|>{$}c<{$}}}\toprule
      \multicolumn{1}{c|}{S\#} & \multicolumn{1}{c|}{$\k$} & \multicolumn{1}{c|}{$\hh$}& \multicolumn{1}{c|}{$\bb$} & \multicolumn{1}{c|}{$\pp$} & \multicolumn{1}{c|}{$[\sQ(s),\sQ(s)]$} & w_ H & w_{\B} & w_{\P} & w_{\Q}\\
      \toprule
      \hypertarget{KLSA1}{1} & \hyperlink{KLA1}{$\mathsf{K1}$}& \tfrac12 \kk & & & -\sP(s\kk\sbar) & 0 & 2m & 2q & q \\
      \hypertarget{KLSA2}{2} & \hyperlink{KLA1}{$\mathsf{K1}$} & & & & |s|^2 H - \sB(s\jj\sbar) - \sP(s\kk\sbar)& 2q & 2q & 2q & q \\
      \hypertarget{KLSA3}{3} & \hyperlink{KLA1}{$\mathsf{K1}$} & & & & |s|^2 H - \sP(s\kk\sbar) & 2q & 2m & 2 q & q \\
      \hypertarget{KLSA4}{4} & \hyperlink{KLA1}{$\mathsf{K1}$} & & & & |s|^2 H & 2q & 2m & 2p & q \\
      \hypertarget{KLSA5}{5} & \hyperlink{KLA1}{$\mathsf{K1}$} & & & & - \sB(s\jj\sbar) - \sP(s\kk\sbar) & 2n & 2q & 2 q & q \\
      \hypertarget{KLSA6}{6} & \hyperlink{KLA1}{$\mathsf{K1}$} & & & & -\sP(s\kk\sbar) & 2n & 2m & 2q & q \\
      \hypertarget{KLSA7}{7} & \hyperlink{KLA2}{$\mathsf{K2}$} & \kk & & & -\sP(s\kk\sbar) & 0 & 2q & 2 q & q \\
      \hypertarget{KLSA8}{8} & \hyperlink{KLA2}{$\mathsf{K2}$} & & & & -\sP(s\kk\sbar) & 2n & 2(q-n) & 2q & q \\
      \hypertarget{KLSA9}{9$_{\gamma\in[-1,1],\lambda\in\RR}$} & \hyperlink{KLA3}{$\mathsf{K3}_\gamma$} & \tfrac12 (1 + \lambda \kk) & & & -\sP(s\kk\sbar) & 0 & 2m & 2q & q \\
      \hypertarget{KLSA10}{10$_{\gamma\in[-1,1),\lambda\in\RR}$} & \hyperlink{KLA3}{$\mathsf{K3}_\gamma$} & \tfrac12 (\gamma + \lambda \kk) & & & -\sB(s\kk\sbar) & 0 & 2q & 2p & q \\
      \hypertarget{KLSA11}{11$_{\chi\geq0}$} & \hyperlink{KLA4}{$\mathsf{K4}_\chi$} & \tfrac12 (\chi + \jj) & & & -\sB(s\ii\sbar) - \sP(s\kk\sbar) & 0 & 2q & 2q & q \\
      \hypertarget{KLSA12}{12$_{\lambda\in\RR}$} & \hyperlink{KLA5}{$\mathsf{K5}$} & \tfrac12 (1 + \lambda \kk) & & & -\sP(s\kk\sbar) & 0 & 2q & 2q & q \\
      \hypertarget{KLSA13}{13} & \hyperlink{KLA6}{$\mathsf{K6}$} & & & & |s|^2 H & 2q & 2m & 2(q-m) & q \\
      \hypertarget{KLSA14}{14} & \hyperlink{KLA8}{$\mathsf{K8}$} & & \tfrac12 \kk & & |s|^2 H - \sP(s\kk\sbar) & 2q & 0 & 2q & q \\
      \hypertarget{KLSA15}{15} & \hyperlink{KLA11}{$\mathsf{K11}$} & \tfrac12 \kk & \tfrac12 \ii & \tfrac12 \jj & |s|^2 H + \sJ(s\kk\sbar) + \sB(s\jj\sbar) - \sP(s\ii\sbar) & - & - & - & - \\
      \hypertarget{KLSA16}{16} & \hyperlink{KLA12}{$\mathsf{K12}$} & & & \tfrac12 \jj & \sJ(s\ii\sbar) - \sB(s\ii\sbar) + \sP(s\kk\sbar) & - & - & - & - \\
      \hypertarget{KLSA17}{17} & \hyperlink{KLA12}{$\mathsf{K12}$} & & \tfrac12 & & |s|^2 H & 2q & 0 & 0 & q \\
      \hypertarget{KLSA18}{18} & \hyperlink{KLA12}{$\mathsf{K12}$} & \tfrac12 \kk & \tfrac12 & & |s|^2 H - \sB(s \kk \sbar) & - & - & - & - \\
      \hypertarget{KLSA19}{19} & \hyperlink{KLA13}{$\mathsf{K13}$} & \kk & & \tfrac12 & |s|^2 H - \sJ(s\kk\sbar) + \sB(s\kk\sbar) - \sP(s\kk\sbar) & - & - & - & - \\
      \hypertarget{KLSA20}{20} & \hyperlink{KLA13}{$\mathsf{K13}$} & & & \tfrac12 & |s|^2 H & 2q & 0 & 0 & q \\
      \hypertarget{KLSA21}{21} & \hyperlink{KLA13}{$\mathsf{K13}$} & & \tfrac12 & & |s|^2 H &  2q & 0 & 0 & q \\
      \hypertarget{KLSA22}{22} & \hyperlink{KLA13}{$\mathsf{K13}$} & \tfrac12 \kk & \tfrac12 & & |s|^2 H - \sB(s\kk\sbar) & - & - & - & - \\
      \hypertarget{KLSA23}{23} & \hyperlink{KLA14}{$\mathsf{K14}$} & \kk & & & - \sP(s\kk\sbar) & 0 & 0 & 2q & q \\
      \hypertarget{KLSA24}{24} & \hyperlink{KLA14}{$\mathsf{K14}$} & & & & - \sP(s\kk\sbar) & 2n & 0 & 2q & q \\
      \hypertarget{KLSA25}{25} & \hyperlink{KLA14}{$\mathsf{K14}$} & & & & |s|^2 H & 2q & 0 & 2p & q \\
      \hypertarget{KLSA26}{26} & \hyperlink{KLA14}{$\mathsf{K14}$} & & & & |s|^2 H - \sP(s\kk\sbar) & 2q & 0 & 2q & q \\
      \hypertarget{KLSA27}{27} & \hyperlink{KLA14}{$\mathsf{K14}$} & & \tfrac12 & & |s|^2 H & 2q & 0 & 2p & q \\
      \hypertarget{KLSA28}{28} & \hyperlink{KLA14}{$\mathsf{K14}$} & \tfrac12 \kk & \tfrac12 & & |s|^2 H - \sB(s\kk\sbar) & - & - & - & - \\
      \hypertarget{KLSA29}{29} & \hyperlink{KLA15}{$\mathsf{K15}$} & \kk & & & -\sP(s\kk\sbar) & - & - & - & - \\
      \hypertarget{KLSA30}{30} & \hyperlink{KLA15}{$\mathsf{K15}$} & & & & -\sP(s\kk\sbar) & - & - & - & - \\
      \hypertarget{KLSA31}{31} & \hyperlink{KLA15}{$\mathsf{K15}$} & & & & |s|^2 H & 2q & 2m & 4m & q \\
      \hypertarget{KLSA32}{32} & \hyperlink{KLA15}{$\mathsf{K15}$} & & & & |s|^2 H -\sP(s\kk\sbar) & - & - & - & - \\
      \hypertarget{KLSA33}{33$_{\lambda\in\RR}$} & \hyperlink{KLA16}{$\mathsf{K16}$} & \tfrac12 (1 + \lambda \kk) & & & -\sP(s\kk\sbar) & 0 & 0 & 2q & q \\
      \hypertarget{KLSA34}{34} & \hyperlink{KLA17}{$\mathsf{K17}$} & \tfrac12 \kk & & & -\sP(s\kk\sbar) & - & - & - & - \\
      \hypertarget{KLSA35}{35$_{\lambda\in\RR}$} & \hyperlink{KLA18}{$\mathsf{K18}$} & 1 + \lambda \kk & & & -\sP(s\kk\sbar) & - & - & - & - \\
      \bottomrule
    \end{tabular}
  }
  \caption*{The first column is our identifier for $\s$, whereas the
    second column is the kinematical Lie algebra $\k = \s_{\bar 0}$ in
    Table~\ref{tab:kla}.  The next four columns specify the brackets
    of $\s$ not of the form $[\J,-]$.  Supercharges $\sQ(s)$ are
    parametrised by $s \in \HH$, whereas $\sJ(\omega)$, $\sB(\beta)$
    and $\sP(\pi)$ are parametrised by $\omega,\beta,\pi \in
    \Im\HH$. The brackets are given by $[H,\sQ(s)] = \sQ(s\hh)$,
    $[\sB(\beta),\sQ(s)]=\sQ(\beta s \bb)$ and
    $[\sP(\pi),\sQ(s)] = \sQ(\pi s \pp)$, for some
    $\hh,\bb,\pp\in\HH$.  (This formalism is explained in
    Section~\ref{sec:quat-form}.)  The final four columns specify
    compatible gradings of $\s$, with $m,n,p,q \in \ZZ$ and $q$ odd.}
\end{table}

\subsection{Classification of aristotelian Lie superalgebras}
\label{sec:class-arist-lie}

Table~\ref{tab:ALAs} lists the aristotelian Lie algebras (with
three-dimensional space isotropy), classified in
\cite[App.~A]{Figueroa-OFarrill:2018ilb}.  In this section, we classify
the $N=1$ supersymmetric extensions of the aristotelian Lie algebras
(with $[\Q,\Q] \neq 0$).

\begin{table}[h!]
  \centering
  \caption{Aristotelian Lie algebras and their spacetimes}
  \label{tab:ALAs}
  \rowcolors{2}{blue!10}{white}
  \begin{tabular}{l|*{2}{>{$}l<{$}}|l}\toprule
    \multicolumn{1}{c|}{A\#} & \multicolumn{2}{c|}{Nonzero Lie brackets}& \multicolumn{1}{c}{Spacetime}\\\midrule
    \hypertarget{ALA1}{1} & & & static \\
    \hypertarget{ALA2}{2} & [H,\P] = \P & & torsional static\\
    \hypertarget{ALA3p}{3$_+$} & & [\P,\P] =  \J & $\RR \times S^3$\\
    \hypertarget{ALA3m}{3$_-$} & & [\P,\P] = - \J & $\RR \times H^3$ \\
    \bottomrule
  \end{tabular}
\end{table}

\subsubsection{Lie superalgebras associated with aristotelian Lie
  algebra $\mathsf{A1}$}
\label{sec:lie-super-ALA1}

We start with the static aristotelian Lie algebra
\hyperlink{ALA1}{$\mathsf{A1}$}, whose only nonzero brackets are
$[\J,\J] = \J$ and $[\J,\P] = \P$.  Any supersymmetric extension $\g$
has possible brackets
\begin{equation}
  [H,\sQ(s)] = \sQ(s\hh), \qquad [\sP(\pi),\sQ(s)] = \sQ(\pi s \pp)
  \qquad\text{and}\qquad [\sQ(s), \sQ(s)] = c_0 |s|^2 H - \sJ(s \bc_1
  \sbar) - \sP(s \bc_3 \sbar),
\end{equation}
for some $\hh,\pp \in \HH$, $c_0 \in \RR$ and $\bc_1, \bc_3 \in \Im
\HH$, using the same notation as in
Section~\ref{sec:kinem-lie-super}.  We can reuse
Lemmas~\ref{lem:kmod} and \ref{lem:qqq}, by setting $\bb = 0$ and
$\bc_2 = 0$ and ignoring $\B$.  Doing so we find that $\pp = 0$ and
that $\bc_1 = 2 c_0 \hh$.  The $[H,\Q,\Q]$ component of the Jacobi
identity gives $c_0 \Re\hh = 0$ (which already follows from
Lemma~\ref{lem:qqq}), $\bc_1 \hhbar + \hh \bc_1 = 0$ and $\bc_3 \hhbar
+ \hh \bc_3 = 0$.  The $[\P,\Q,\Q]$ component of the Jacobi identity
says that $[s\bc_1 \sbar, \pi] = 0$ for all $\pi \in \Im\HH$ and $s
\in \HH$, which says $\bc_1 =0$ and hence $c_0 \hh = 0$.  This gives
rise to two branches:
\begin{enumerate}
\item If $c_0 = 0$, then $\bc_3 \neq 0$ and the condition $\bc_3 \hhbar
  + \hh \bc_3 = 0$ is equivalent to $[\Im\hh,\bc_3] = - 2 \bc_3 \Re\hh$,
  which says $\Re\hh = 0$ and hence that $\hh$ and $\bc_3$ are
  collinear.  We can change basis so that $\bc_3 = \kk$ and $\hh =
  \kk$ if nonzero.  This leaves two possible Lie superalgebras
  depending on whether or not $\hh = 0$:
  \begin{equation}
    [H, \sQ(s)] =
    \begin{cases}
      \sQ(s\kk)\\
      0
    \end{cases}
\qquad\text{and}\qquad [\sQ(s), \sQ(s)] =
    - \sP(s\kk\sbar).
  \end{equation}
\item If $c_0 \neq 0$, then $\hh = 0$ and $\bc_3$ is free.  We can set
  $c_0 = 1$ and, if nonzero, we can also set $\bc_3 = \kk$.  This
  gives two possible Lie superalgebras:
  \begin{equation}
    [\sQ(s), \sQ(s)] =
    \begin{cases}
      |s|^2 H \\
      |s|^2 H - \sP(s\kk\sbar).
    \end{cases}
  \end{equation}
\end{enumerate}

\subsubsection{Lie superalgebras associated with aristotelian Lie
  algebra $\mathsf{A2}$}
\label{sec:lie-super-ALA2}

Let us now consider the aristotelian Lie algebra
\hyperlink{ALA2}{$\mathsf{A2}$}, with additional bracket
$[H,\P] = \P$.  Lemma~\ref{lem:kmod} again says $\pp = 0$ and
Lemma~\ref{lem:qqq} again says that $\bc_1 = 2 c_0 \hh$.  The
$[H,\Q,\Q]$ component of the Jacobi identity implies that
$c_0 \Re\hh = 0$ (which, again, is redundant),
$\bc_1 \hhbar + \hh \bc_1 = 0$ and $\bc_3 \hhbar + \hh \bc_3 = \bc_3$,
whereas the $[\P,\Q,\Q]$ component results in
$[s\bc_1 \sbar, \pi] = 2 c_0 |s|^2\pi$ for all $\pi \in \Im\HH$ and
$s \in \HH$.  This can only be the case if $c_0 = 0$ and hence
$\bc_1=0$, which then forces $\bc_3 \neq 0$.  The equation
$\bc_3 \hhbar + \hh \bc_3 = \bc_3$ results in
$[\Im\hh, \bc_3] = (1-2\Re\hh) \bc_3$, which implies
$\Re\hh = \tfrac12$ and $\Im\hh$ collinear with $\bc_3$.  We can
change basis so that $\bc_3 = \kk$ and we end up with a one-parameter
family of Lie superalgebras with brackets
\begin{equation}
  [H,\sQ(s)] = \sQ(\tfrac12 s (1 + \lambda \kk)), \qquad [\sQ(s),
  \sQ(s)] = - \sP(s\kk\sbar)
\end{equation}
for $\lambda \in \RR$, in addition to $[H,\sP(\pi)] = \sP(\pi)$.

\subsubsection{Lie superalgebras associated with aristotelian Lie
  algebras $\mathsf{A3}_{\pm}$}
\label{sec:lie-super-ALA3}

Finally, we consider the aristotelian Lie algebras \hyperlink{ALA3p}{$\mathsf{A3}_\pm$}
with bracket $[\P,\P] = \pm \J$.  Lemma~\ref{lem:kmod} says
that $[\hh,\pp] =0$ and $\pp^2 = \pm \tfrac14$, whereas
Lemma~\ref{lem:qqq} says that $c_0 \hh = \tfrac12\bc_1 + \bc_3 \pp$.
The $[H,\Q,\Q]$ Jacobi says $c_0\Re\hh = 0$, $\bc_1 \hhbar + \hh \bc_1
= 0$ and $\bc_3 \hhbar + \hh \bc_3 = 0$, whereas the $[\P,\Q,\Q]$
Jacobi gives the following relations:
\begin{equation}
  c_0 \Re(\sbar\pi s \pp) = 0, \qquad \pi s \pp \bc_3 \sbar - s \bc_3
  \ppbar \sbar \pi = \tfrac12 [\pi, s\bc_1
  \sbar]\qquad\text{and}\qquad
  \pi s \pp \bc_1 \sbar - s \bc_1
  \ppbar \sbar \pi = \pm \tfrac12 [\pi, s\bc_3 \sbar].
\end{equation}
We must distinguish two cases depending on the choice of signs.
\begin{enumerate}
\item Let's take the $+$ sign.  Then $\pp^2 = \tfrac14 \in \RR$.
  Without loss of generality we can take $\pp = \tfrac12$ by changing
  the sign of $\P$ if necessary.  Then the $[\P,\Q,\Q]$ Jacobi
  equations say that $\bc_1 = \bc_3$ and hence $c_0 \hh = \bc_1$.  If
  $c_0 = 0$, then $\bc_1 = \bc_3 = 0$, hence we take $c_0 \neq 0$ and
  thus $\Re \hh = 0$.  We can change basis so that $c_0 = 1$ and hence
  $\hh = \bc_1 = \bc_3$.  If nonzero, we can take them all equal to
  $\kk$.  In summary, we have two possible aristotelian Lie
  superalgebras extending \hyperlink{ALA3p}{$\mathsf{A3}_+$}, with
  brackets $[\sP(\pi), \sP(\pi')] = \tfrac12 \sJ([\pi,\pi'])$ and in
  addition either
  \begin{equation}
    [\sP(\pi), \sQ(s)] = \sQ(\tfrac12 \pi s), \qquad\text{and}\qquad
    [\sQ(s), \sQ(s) ] = |s|^2 H
  \end{equation}
  or
  \begin{equation}
    [H,\sQ(s)] = \sQ(s\kk), \qquad [\sP(\pi), \sQ(s)] = \sQ(\tfrac12
    \pi s) \qquad\text{and}\qquad
    [\sQ(s), \sQ(s) ] = |s|^2 H - \sJ(s\kk\sbar) - \sP(s\kk\sbar).
  \end{equation}
  
\item Let us now take the $-$ sign.  Here $\pp^2 = -\frac14$, so that
  $\pp \in \Im\HH$ (and $\pp \neq 0$) and hence $\Im\hh$ collinear
  with $\pp$.  The $[H,\Q,\Q]$ Jacobi equations force $\hh = 0$ and
  the $[\P,\Q,\Q]$ Jacobi equations force $c_0 = 0$ and
  $\bc_3 \pp = -\tfrac12 \bc_1$. 
  This means that $(\bc_1, 2\pp, \bc_3)$ is an oriented orthonormal
  frame for $\Im\HH$ and hence we can rotate them so that
  $(\bc_1, 2\pp, \bc_3) = (-\jj,\ii,\kk)$, for later uniformity.  This
  results in the aristotelian Lie superalgebra extending
  \hyperlink{ALA3m}{$\mathsf{A3}_-$} by the following brackets in
  addition to $[\sP(\pi), \sP(\pi')] = \tfrac12 \sJ([\pi,\pi'])$:
  \begin{equation}
    [\sP(\pi), \sQ(s)] = \sQ(\tfrac12 \pi s \ii) \qquad\text{and}\qquad
    [\sQ(s), \sQ(s)] = \sJ(s\jj\sbar) - \sP(s\kk\sbar).
  \end{equation}
\end{enumerate}

These results are summarised in Table~\ref{tab:alsa} below, together
with the possible compatible $\ZZ$-gradings.  This table also
classifies the homogeneous aristotelian superspaces.

\begin{table}[h!]
  \centering
  \caption{Aristotelian Lie superalgebras (with $[\Q,\Q]\neq 0$)}
  \label{tab:alsa}
  \setlength{\extrarowheight}{2pt}
  \rowcolors{2}{blue!10}{white}
  \begin{tabular}{l|l*{3}{|>{$}c<{$}}*{3}{|>{$}c<{$}}}\toprule
    \multicolumn{1}{c|}{S\#} & \multicolumn{1}{c|}{$\a$} & \multicolumn{1}{c|}{$\hh$}& \multicolumn{1}{c|}{$\pp$} & \multicolumn{1}{c|}{$[\sQ(s),\sQ(s)]$} & w_ H & w_{\P} & w_{\Q}\\
    \toprule
    \hypertarget{ALSA36}{36} &  \hyperlink{ALA1}{$\mathsf{A1}$} & \kk & & - \sP(s\kk\sbar) & 0 & 2q & q \\
    \hypertarget{ALSA37}{37} &  \hyperlink{ALA1}{$\mathsf{A1}$} & & & - \sP(s\kk\sbar) & 2n & 2q & q \\
    \hypertarget{ALSA38}{38} &  \hyperlink{ALA1}{$\mathsf{A1}$} & & & |s|^2 H & 2q & 2p & q \\
    \hypertarget{ALSA39}{39} &  \hyperlink{ALA1}{$\mathsf{A1}$} & & & |s|^2 H - \sP(s\kk\sbar) & 2q & 2q & q \\
    \hypertarget{ALSA40}{40$_{\lambda\in\RR}$} &  \hyperlink{ALA2}{$\mathsf{A2}$} & \tfrac12(1 + \lambda \kk) & & -\sP(s\kk\sbar) & 0 & 2q & q\\
    \hypertarget{ALSA41}{41} & \hyperlink{ALA3p}{$\mathsf{A3}_+$} & & \tfrac12 & |s|^2 H & 2q & 0 & q \\
    \hypertarget{ALSA42}{42} & \hyperlink{ALA3p}{$\mathsf{A3}_+$} & \kk & \tfrac12 & |s|^2 H - \sJ(s\kk\sbar) - \sP(s\kk\sbar) & - & - & - \\
    \hypertarget{ALSA43}{43} & \hyperlink{ALA3m}{$\mathsf{A3}_-$} & & \tfrac12\ii &  \sJ(s\jj\sbar) - \sP(s\kk\sbar) & - & - & - \\
    \bottomrule
  \end{tabular}
  \caption*{The first column is our identifier for $\s$, whereas the
    second column is the aristotelian Lie algebra $\a = \s_{\bar 0}$
    in Table~\ref{tab:ALAs}.  The next three columns specify the
    brackets of $\s$ not of the form $[\J,-]$.  Supercharges $\sQ(s)$
    are parametrised by $s \in \HH$, whereas $\sJ(\omega)$ and
    $\sP(\pi)$ are parametrised by $\omega,\pi \in \Im\HH$. The
    brackets are given by $[H,\sQ(s)] = \sQ(s\hh)$ and
    $[\sP(\pi),\sQ(s)] = \sQ(\pi s \pp)$, for some $\hh,\pp\in\HH$.
    (The formalism is explained in Section~\ref{sec:quat-form}.)
    The final three columns are compatible gradings of $\s$, with
    $n,p,q \in \ZZ$ and $q$ odd.}
\end{table}

\subsection{Unpacking the quaternionic notation}
\label{sec:unpack-quat-notat}

The quaternionic formalism we have employed in the classification of
kinematical and aristotelian Lie superalgebras, which has the virtue of
uniformity and ease in computation, does result in expressions which
are perhaps unfamiliar and which therefore might hinder comparison
with other formulations.  In this section, we will go through an
example illustrating how to unpack the notation.

The nonzero brackets of the Poincaré superalgebra
\hyperlink{KLSA14}{$\mathsf{S14}$} are given by
equation~\eqref{eq:klsa-brackets-quat} and
\begin{equation}
  [\sB(\beta), \sQ(s)] = \sQ(\tfrac12 \beta s \kk) \qquad\text{and}\qquad
  [\sQ(s),\sQ(s)] = |s|^2 H - \sP(s\kk\sbar),
\end{equation}
where
\begin{equation}
  \sB(\beta) = \sum_{i=1}^3 \beta_i B_i \qquad\text{and}\qquad \sQ(s) = \sum_{a=1}^4
  s_a Q_a,
\end{equation}
and where
\begin{equation}
  \beta = \beta_1 \ii + \beta_2 \jj + \beta_3 \kk
  \qquad\text{and}\qquad s = s_1 \ii + s_2 \jj + s_3 \kk + s_4.
\end{equation}
This allows us to simply unpack the brackets into the following
\begin{equation}
  [B_i, Q_a] = \tfrac12 \sum_{b=1}^4 Q_b \beta_i{}^b{}_a
  \qquad\text{and}\qquad
  [Q_a, Q_b] = \sum_{\mu=0}^3 P_\mu \gamma^\mu_{ab},
\end{equation}
where we have introduced $P_0 = H$ and where the matrices
$\boldsymbol{\beta}_i := [\beta_i{}^b{}_a]$ are given by
\begin{equation}
  \boldsymbol{\beta}_1 =
  \begin{pmatrix}
    \zero & -\id \\
    -\id & \zero
  \end{pmatrix},
  \qquad
  \boldsymbol{\beta}_2 =
  \begin{pmatrix}
    \zero & i\sigma_2\\
    -i\sigma_2 & \zero
  \end{pmatrix}
  \qquad\text{and}\qquad
  \boldsymbol{\beta}_3 =
  \begin{pmatrix}
    \id & \zero\\
    \zero & -\id
  \end{pmatrix}
\end{equation}
and where the symmetric matrices $\boldsymbol{\gamma}^\mu :=
[\gamma^\mu_{ab}]$ are given by
\begin{equation}
  \boldsymbol{\gamma}^0 =
  \begin{pmatrix}
    \id & \zero \\ \zero & \id
  \end{pmatrix},
  \qquad
  \boldsymbol{\gamma}^1 =
  \begin{pmatrix}
     \zero & \id \\ \id & \zero
  \end{pmatrix},
  \qquad
  \boldsymbol{\gamma}^2 =
  \begin{pmatrix}
    \zero & -i\sigma_2\\ i\sigma_2 & \zero
  \end{pmatrix}
  \qquad\text{and}\qquad
  \boldsymbol{\gamma}^3 =
  \begin{pmatrix}
    -\id & \zero \\ \zero & \id
  \end{pmatrix}.
\end{equation}

As shown in Section~\ref{sec:low-rank-invariants}, there is a
two-parameter family of symplectic forms on the
spinor representation $S$ which are invariant under the action of
$B_i$ and $J_i$.  They are given by
\begin{equation}
  \omega(s_1,s_2) := \Re(s_1 (\alpha \ii + \beta \jj) \sbar_2),
\end{equation}
for $\alpha,\beta \in \RR$ not both zero.  We may normalise $\omega$
such that $\alpha^2 + \beta^2 = 1$, resulting in a circle of
symplectic structures.  Relative to the standard real basis
$(\ii,\jj,\kk,1)$ for $\HH$, the matrix $\Omega$ of $\omega$ is given
by $\Omega = i \sigma_2 \otimes (-\alpha \sigma_1 + \beta \sigma_3)$,
whose inverse is $\Omega^{-1} = - \Omega$, due to the chosen
normalisation.  Let us define endomorphisms $\gamma^\mu$ of $S$ such
that $(\gamma^\mu)^a{}_b = (\Omega^{-1})^{ac} \gamma^\mu_{cb}$.
Explicitly, they are given by
\begin{equation}
  \begin{aligned}[m]
    \gamma^0 &= i\sigma_2 \otimes (\alpha \sigma_1 - \beta\sigma_3)\\
    \gamma^1 &= \sigma_3 \otimes (\alpha \sigma_1 - \beta\sigma_3)
  \end{aligned}
  \qquad\qquad
  \begin{aligned}[m]
    \gamma^2 &= -\mathbb{1} \otimes (\alpha \sigma_3 + \beta\sigma_1)\\
    \gamma^3 &= \sigma_1 \otimes (\alpha \sigma_1 - \beta\sigma_3).
  \end{aligned}
\end{equation}
It then follows that these endomorphisms represent the Clifford
algebra $\Cl(1,3)$:
\begin{equation}
  \gamma^\mu \gamma^\nu + \gamma^\nu \gamma^\mu = -2 \eta^{\mu\nu}
  \mathbb{1}.
\end{equation}
We thus arrive at the description of the Poincaré superalgebra
described in the appendix.

\subsection{Central extensions}
\label{sec:central-extensions}

In this section, we determine the possible central extensions of the
kinematical and aristotelian Lie superalgebras.

We start with the kinematical Lie superalgebras. Let
$\s = \s_{\bar 0} \oplus \s_{\bar 1}$ be one of the Lie superalgebras
in Table~\ref{tab:klsa}. By a \textbf{central extension} of $\s$, we
mean a short exact sequence of Lie superalgebras \begin{equation}
  \begin{tikzcd}
    0 \arrow[r] & \z \arrow[r] & \widehat\s \arrow[r] & \s \arrow[r] & 0,
  \end{tikzcd}
\end{equation}
where $\z$ is central in $\widehat\s$.  We may choose a vector space
splitting and view (as a vector space) $\widehat\s = \s \oplus \z$ and
the Lie bracket is given, for $(X,z), (Y,z') \in \s \oplus \z$, by
\begin{equation}
  [(X,z), (Y,z')]_{\widehat\s} = \left( [X,Y]_{\s}, \omega(X,Y) \right),
\end{equation}
where $\omega : \wedge^2\s \to \z$ is a cocycle.  (Here $\wedge$ is
taken in the super sense, so that it is symmetric on odd elements.)
Central extensions of $\s$ are classified up to isomorphism by the
Chevalley--Eilenberg cohomology group $H^2(\s)$, which by
Hochschild--Serre, can be computed from the subcomplex relative to the
rotational subalgebra $\r \subset \s_{\bar 0}$.  Indeed, we have the
isomorphism \cite{MR0054581}
\begin{equation}
  H^2(\s) \cong H^2(s,\r).
\end{equation}
Let $W = \spn{H,\B,\P,\Q}$.  Then the cochains in $C^2(s,\r)$ are
$\r$-equivariant maps $\wedge^2W \to \RR$ or, equivalently,
$\r$-invariant vectors in $\wedge^2 W^*$.  This is a two-dimensional
real vector space which, in quaternionic language, is given for $x,y \in \RR$ by
\begin{equation}
  \omega(\sB(\beta), \sP(\pi)) = x \Re(\beta \pi) = - \omega(\sP(\pi),
  \sB(\beta)) \qquad\text{and}\qquad \omega(\sQ(s_1), \sQ(s_2)) = y
  \Re(s_1\sbar_2).
\end{equation}
The cocycle conditions (i.e., the Jacobi identities of the central
extension $\widehat\s$) has several components.  Letting $\sV$ stand
for either $\sB$ or $\sP$, the cocycle conditions are given by
\begin{equation}
  \begin{split}
    \omega([H,\sV(\alpha)], \sV(\beta)) + \omega(\sV(\alpha), [H,\sV(\beta)]) &= 0,\\
    \omega([\sV(\alpha),\sV(\beta)], \sV(\gamma)) + \text{cyclic} &= 0,\\
    \omega([H,\sQ(s)],\sQ(s)) &= 0,\\
    2 \omega([\sV(\alpha),\sQ(s)], \sQ(s)) + \omega([\sQ(s),\sQ(s)],\sV(\alpha)) &= 0.
  \end{split}
\end{equation}
The first two of the above equations only involve the even
generators and hence depend only on the underlying kinematical Lie
algebra, whereas the last two equations do depend on the precise
superalgebra we are dealing with.  In the case of aristotelian Lie
superalgebras, there is no $\B$ and hence $\sV = \sP$ in the above
equations and, of course, the cocycle can only modify the $[\Q,\Q]$
bracket and hence the cocycle conditions are simply
\begin{equation}
  \omega([H,\sQ(s)],\sQ(s)) = 0 \qquad\text{and}\qquad
    \omega([\sP(\alpha),\sQ(s)], \sQ(s)) = 0.
\end{equation}

The calculations are routine, and we will not give any details, but
simply collect the results in Table~\ref{tab:central-ext}, where $Z$
is the basis for the one-dimensional central ideal $\z = \spn{Z}$, and
where we list only the brackets which are liable to change under
central extension.

\begin{table}[h!]
  \centering
  \caption{Central extensions of kinematical and aristotelian Lie superalgebras}
  \label{tab:central-ext}
  \setlength{\extrarowheight}{2pt}
  \rowcolors{2}{blue!10}{white}
  \begin{tabular}{l*{2}{|>{$}c<{$}}}\toprule
    \multicolumn{1}{c|}{S\#} & \multicolumn{1}{c|}{$[\sB(\beta),\sP(\pi)]$} & \multicolumn{1}{c}{$[\sQ(s),\sQ(s)]$} \\
    \toprule
    \hyperlink{KLSA1}{1} & & |s|^2 Z - \sP(s\kk\sbar) \\
    \hyperlink{KLSA4}{4} & -\Re(\beta\pi) Z & |s|^2 H \\
    \hyperlink{KLSA5}{5} & & |s|^2 Z - \sB(s\jj\sbar) - \sP(s\kk\sbar) \\
    \hyperlink{KLSA6}{6} & & |s|^2 Z -\sP(s\kk\sbar) \\
    \hyperlink{KLSA7}{7} & & |s|^2 Z -\sP(s\kk\sbar) \\
    \hyperlink{KLSA8}{8} & & |s|^2 Z -\sP(s\kk\sbar) \\
    \hyperlink{KLSA10}{10$_{\gamma=0,\lambda\in\RR}$} & & |s|^2 Z -\sB(s\kk\sbar) \\
    \hyperlink{KLSA11}{11$_{\chi=0}$} & & |s|^2 Z - \sB(s\ii\sbar) - \sP(s\kk\sbar) \\
    \hyperlink{KLSA13}{13} & -\Re(\beta\pi) (H + Z) & |s|^2 H \\
    \hyperlink{KLSA23}{23} & & |s|^2 Z - \sP(s\kk\sbar) \\
    \hyperlink{KLSA24}{24} & & |s|^2 Z - \sP(s\kk\sbar) \\
    \hyperlink{KLSA29}{29} & & |s|^2 Z -\sP(s\kk\sbar) \\
    \hyperlink{KLSA30}{30} & & |s|^2 Z -\sP(s\kk\sbar) \\
    \hyperlink{KLSA34}{34} & & |s|^2 Z -\sP(s\kk\sbar) \\
    \midrule
    \hyperlink{ALSA36}{36} & - & |s|^2 Z -\sP(s\kk\sbar) \\
    \hyperlink{ALSA37}{37} & - & |s|^2 Z -\sP(s\kk\sbar) \\
    \bottomrule
  \end{tabular}
  \caption*{The first column is our identifier for $\s$, whereas the
    other two columns are the possible central terms in the central
    extension $\widehat\s$.  Here $\beta,\pi \in\Im\HH$ and $s \in
    \HH$ are (some of) the parameters defining the Lie brackets in the
    quaternionic formalism explained in Section~\ref{sec:quat-form}.}
\end{table}

\subsection{Automorphisms of kinematical Lie superalgebras}
\label{sec:autom-kinem-lie}

In the next section, we will classify the homogeneous superspaces
associated to the kinematical Lie superalgebras.  As we will explain
below, the first stage is to classify ``super Lie pairs'' up to
isomorphism.  To that end, it behoves us to determine the group of
automorphisms of the Lie superalgebras in Table~\ref{tab:klsa}, to
which we now turn.

Without loss of generality, we can restrict to automorphisms which are
the identity when restricted to $\r$: we call them $\r$-fixing
automorphisms.  Following from our discussion in
Section~\ref{sec:automorphisms}, these are parametrised by triples
\begin{equation}
  \left(A := \begin{pmatrix}a & b\\ c & d\end{pmatrix}, \mu, 
  \qq\right) \in \GL(2,\RR) \times \RR^\times \times \HH^\times
\end{equation}
subject to the condition that the associated linear transformations
leave the Lie brackets in $\s$ unchanged.

It is easy to read off from equation~\eqref{eq:autk-on-params} what
$(A, \mu,\qq)$ must satisfy for the $\r$-equivariant linear
transformation $\Phi : \s \to \s$ defined by them to be an
automorphism of $\s$, namely:
\begin{equation}\label{eq:aut-s}
  \begin{aligned}[m]
    \hh\qq &= \mu \qq \hh \\
    \bb\qq &= \qq (a \bb + c \pp) \\
    \pp\qq &=  \qq (b \bb + d \pp) \\
    \mu c_0 &= |\qq|^2 c_0
  \end{aligned}
  \qquad\qquad
  \begin{aligned}[m]
    \qq\bc_1\qqbar &= \bc_1 \\
    \qq\bc_2\qqbar &= a \bc_2 + b \bc_3\\
    \qq\bc_3 \qqbar &= c \bc_2 + d \bc_3.\\
  \end{aligned}
\end{equation}
It is then a straightforward -- albeit lengthy -- process to go
through each Lie superalgebra in Table~\ref{tab:klsa} and solve
equations \eqref{eq:aut-s} for $(A, \mu,\qq)$. In particular, $(A,\mu)
\in \Aut_\r(\k)$ and they are given in Table~\ref{tab:aut-kla}.  The
results of this section are summarised in Tables~\ref{tab:aut-klsa}
and \ref{tab:aut-klsa-extra}, which list the $\r$-fixing automorphisms
for the Lie superalgebras \hyperlink{KLSA1}{$\mathsf{S1}$}-
\hyperlink{KLSA15}{$\mathsf{S15}$}
and \hyperlink{KLSA16}{$\mathsf{S16}$}-\hyperlink{KLSA35}{$\mathsf{S35}$}, 
respectively, in Table~\ref{tab:klsa}.

The first six Lie superalgebras in Table~\ref{tab:klsa} are
supersymmetric extensions of the static kinematical Lie algebra for
which $(A,\mu)$ can be any element in $\GL(2,\RR) \times \RR^\times$.

\subsubsection{Automorphisms of Lie superalgebra $\mathsf{S1}$}
\label{sec:autom-kinem-lie-1}

Here $\hh = \frac12\kk$, $\bb=\pp=0$, $c_0 = 0$, $\bc_1 = \bc_2 = 0$
and $\bc_3 = \kk$.  The invariance conditions~\eqref{eq:aut-s} give
\begin{equation}
  \mu \qq \kk = \kk\qq, \qquad b \kk = 0 \qquad\text{and}\qquad d \kk
  = \qq \kk \qqbar
\end{equation}
The second equation requires $b=0$.  The third equation says that the
real linear map $\alpha_\qq: \HH \to \HH$ defined by $\alpha_\qq (\xx)
= \qq \xx \qqbar$ preserves the $\kk$-axis in $\Im \HH$.

\begin{lemma}\label{lem:dk=qkqbar}
  Let $\qq\kk\qqbar = d \qq$ for some $d\in \RR$.  Then either $d =
  |\qq|^2$ and $\qq \in \spn{1,\kk}$ or $d = - |\qq|^2$ and $\qq \in
  \spn{\ii,\jj}$.
\end{lemma}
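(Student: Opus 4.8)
The plan is purely computational and rests on two facts: the multiplicativity of the quaternion norm, and the characterisation of the commutant and anticommutant of $\kk$ in $\HH$. Throughout I use the relation in the form $\qq\kk\qqbar = d\kk$ (matching the displayed equation $d\kk = \qq\kk\qqbar$ preceding the statement), and I note that $\qq \neq 0$ since $\qq \in \HH^\times$. The first step is to take norms of $\qq\kk\qqbar = d\kk$: since $|\qqbar| = |\qq|$, $|\kk| = 1$ and the norm is multiplicative, the left-hand side has norm $|\qq|^2$ and the right-hand side has norm $|d|$, so at once $d = \pm|\qq|^2$.

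The key algebraic step is to convert the conjugation relation into an (anti)commutation relation. Multiplying $\qq\kk\qqbar = d\kk$ on the right by $\qq$ and using $\qqbar\qq = |\qq|^2$ gives $|\qq|^2\,\qq\kk = d\,\kk\qq$, and dividing by $|\qq|^2 \neq 0$ yields
\[
  \qq\kk = \frac{d}{|\qq|^2}\,\kk\qq = \pm\,\kk\qq,
\]
where the sign is exactly that of $d$. Hence the case $d = +|\qq|^2$ is equivalent to $\qq$ commuting with $\kk$, and the case $d = -|\qq|^2$ to $\qq$ anticommuting with $\kk$.

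It then remains to identify these two subspaces. Writing $\qq = q_4 + q_1\ii + q_2\jj + q_3\kk$ and using $\ii\kk = -\jj$ and $\jj\kk = \ii$ (together with $\kk\ii = \jj$, $\kk\jj = -\ii$), a one-line expansion of $\qq\kk$ and $\kk\qq$ shows that $\qq\kk = \kk\qq$ forces $q_1 = q_2 = 0$, i.e. $\qq \in \spn{1,\kk}$, whereas $\qq\kk = -\kk\qq$ forces $q_3 = q_4 = 0$, i.e. $\qq \in \spn{\ii,\jj}$. Combining this with $d = \pm|\qq|^2$ from the first step reproduces precisely the two alternatives in the statement. (Geometrically, the map $\xx \mapsto \qq\xx\qqbar$ is $|\qq|^2$ times a rotation of $\Im\HH$, and the dichotomy is just whether this rotation fixes or reverses the $\kk$-axis, but the computation above makes the geometric picture unnecessary.)

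I do not expect a genuine obstacle here; the only point requiring care is that the sign of $d$ and the sign in the (anti)commutation relation are correlated, both arising from the same division step, so one cannot mix, say, $d = +|\qq|^2$ with $\qq$ anticommuting with $\kk$. This is exactly the geometric statement that a rotation cannot simultaneously fix and reverse a given axis, and it falls out automatically from the derivation above.
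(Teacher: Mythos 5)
Your proof is correct and follows essentially the same route as the paper's: take norms to obtain $d = \pm|\qq|^2$, right-multiply by $\qq$ to convert the conjugation relation into $\qq\kk = \pm\kk\qq$ with correlated sign, and then identify the commutant $\spn{1,\kk}$ and anticommutant $\spn{\ii,\jj}$ of $\kk$. You merely spell out in coordinates the last identification, which the paper asserts without computation, and you correctly read the hypothesis as $\qq\kk\qqbar = d\kk$ (the $d\qq$ in the statement is a typo, as the preceding display in the paper confirms).
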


\begin{proof}
  Taking the quaternion norm of both sides of the equation $\qq \kk
  \qqbar = d \qq$ and using that $\qq \neq 0$, we see that $d =
  \pm |\qq|^2$ and hence right multiplying by $\qq$, the equation
  becomes $\pm \kk\qq = \qq \kk$.  If $\kk \qq = \qq\kk$,
  then $\qq \in \spn{1,\kk}$ and $d= |\qq|^2$, whereas if $-\kk\qq =
  \qq\kk$, then $\qq \in \spn{\ii,\jj}$ and $d = -|\qq|^2$.
\end{proof}

Taking the quaternion norm of the first equation, shows that $\mu =
\pm 1$ and hence that $d = \mu |\qq|^2$.  In summary, we have that the
typical automorphism $(A,\mu,\qq)$ takes one of two possible forms:
\begin{equation}
  \begin{split}
    &A = \begin{pmatrix} a & \zero \\ c & |\qq|^2
    \end{pmatrix},\qquad \mu= 1 \qquad\text{and}\qquad \qq = q_4 + q_3 \kk\\
   \text{or}\qquad &A =  \begin{pmatrix}
      a & \zero \\ c & - |\qq|^2
    \end{pmatrix}, \qquad \mu = - 1 \qquad\text{and}\qquad \qq = q_1 \ii
    + q_2 \jj.
  \end{split}
\end{equation}

\subsubsection{Automorphisms of Lie superalgebra $\mathsf{S2}$}
\label{sec:autom-kinem-lie-2}

Here $\hh = \bb = \pp = 0$, $c_0=1$, $\bc_1 = 0$, $\bc_2 = \jj$ and
$\bc_3 = \kk$.  The invariance conditions~\eqref{eq:aut-s} give
\begin{equation}
  \mu = |\qq|^2, \qquad a \jj + b \kk = \qq \jj \qqbar
  \qquad\text{and}\qquad c \jj + d \kk = \qq \kk \qqbar.
\end{equation}
The last two equations say that the real linear map $\alpha_\qq : \HH
\to \HH$ defined earlier preserves the $(\jj,\kk)$-plane in $\Im \HH$.

\begin{lemma}\label{lem:conj-plane}
  The map $\alpha_\qq : \HH \to \HH$ preserves the $(\jj,\kk)$-plane
  in $\Im \HH$ if and only if $\qq \in \spn{1,\ii} \cup
  \spn{\jj,\kk}$.
\end{lemma}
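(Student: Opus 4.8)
The plan is to reduce invariance of the $(\jj,\kk)$-plane to the single-axis condition already treated in Lemma~\ref{lem:dk=qkqbar}. The key structural observation is that $\alpha_\qq$ preserves $\Im\HH$ and acts there as a positive scalar times a rotation: writing $\qqbar = |\qq|^2 \qq^{-1}$ gives $\alpha_\qq = |\qq|^2\, \Ad_{\qq/|\qq|}$, where $\Ad_{\qq/|\qq|} \in \SO(3)$ is conjugation by the unit quaternion $\qq/|\qq|$ acting on $\Im\HH$. Being a scalar multiple of an orthogonal map, $\alpha_\qq$ preserves orthogonal complements; hence it preserves the plane $\spn{\jj,\kk}$ if and only if it preserves the orthogonal axis $\RR\ii$ in $\Im\HH$, that is, if and only if $\qq\ii\qqbar = d\,\ii$ for some $d \in \RR$.

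The forward direction is then immediate from the $\ii$-analogue of Lemma~\ref{lem:dk=qkqbar}: since $\ii$ and $\kk$ play symmetric roles (they are interchanged by an automorphism of $\HH$), the same argument applies verbatim. Taking norms forces $d = \pm|\qq|^2$, and right-multiplying $\qq\ii\qqbar = d\,\ii$ by $\qq$ reduces it to $\qq\ii = \pm \ii\qq$. The commuting case $\qq\ii = \ii\qq$ forces $\qq \in \spn{1,\ii}$ (with $d = |\qq|^2$), while the anticommuting case $\qq\ii = -\ii\qq$ forces $\qq \in \spn{\jj,\kk}$ (with $d = -|\qq|^2$). This yields $\qq \in \spn{1,\ii} \cup \spn{\jj,\kk}$, as claimed.

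For the converse I would simply check the two cases directly. If $\qq \in \spn{1,\ii}$, then $\qq$ commutes with $\ii$, so $\alpha_\qq(\ii) = \qq\ii\qqbar = \ii\qq\qqbar = |\qq|^2 \ii \in \RR\ii$; if $\qq \in \spn{\jj,\kk}$, then $\qq$ anticommutes with $\ii$, so $\alpha_\qq(\ii) = \qq\ii\qqbar = -\ii\qq\qqbar = -|\qq|^2 \ii \in \RR\ii$. In either case the $\ii$-axis is preserved, hence so is the orthogonal $(\jj,\kk)$-plane by the conformality of $\alpha_\qq$ established above.

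The only genuine content is the orthogonality reduction in the first paragraph; once that is in place the statement is just a relabelled instance of Lemma~\ref{lem:dk=qkqbar}, so I do not expect a real obstacle. The one point to state with care is the elementary fact that a scaled rotation preserves a subspace exactly when it preserves the orthogonal complement, since it is precisely this that converts \emph{``preserves the $(\jj,\kk)$-plane''} into the single-generator condition on $\qq\ii\qqbar$.
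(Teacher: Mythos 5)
Your proof is correct and follows essentially the same route as the paper's: both factor $\alpha_\qq = |\qq|^2\,\alpha_{\uu}$ (with $\uu = \qq/|\qq| \in \Sp(1)$) as a positive scalar times a rotation of $\Im\HH$, use orthogonality to pass from preservation of the $(\jj,\kk)$-plane to preservation of the perpendicular $\ii$-axis, and then run the argument of Lemma~\ref{lem:dk=qkqbar} with $\ii$ in place of $\kk$. The only difference is cosmetic: you spell out the converse direction explicitly, which the paper leaves implicit.
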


\begin{proof}
  Since $\qq \neq 0$, we can write it as $\qq = |\qq| \uu$, for some
  unique $\uu \in \Sp(1)$ and $\alpha_\qq = |\qq|^2 \alpha_\uu$.
  The map $\alpha_\qq$ preserves separately the real and imaginary
  subspaces of $\HH$ and $\alpha_\qq$ preserves the $(\jj,\kk)$-plane
  if and only if $\alpha_\uu$ does.  But for $\uu \in \Sp(1)$,
  $\alpha_\uu$ acts on $\Im \HH$ by rotations and hence if $\alpha_\uu$
  preserves $(\jj,\kk)$-plane, it also preserves the perpendicular
  line: the $\ii$-axis in this case and since it must preserve length,
  $\alpha_\uu (\ii) = \pm \ii$.  It follows that $\alpha_\qq(\ii) = \pm |\qq|^2
  \ii$, so that $\alpha_\qq$ too preserves the $\ii$-axis.  By an
  argument similar to that of Lemma~\ref{lem:dk=qkqbar} it follows
  that $\qq$ belongs either to the complex line in $\HH$
  generated by $\ii$ or to its perpendicular complement.
\end{proof}

From the Lemma we have two cases to consider: $\qq = q_4 + q_1 \ii$ or
$\qq = q_2 \jj + q_3 \kk$.  In each case we can use the last two
equations to solve for $a,b,c,d$ in terms of the components of $\qq$.
Summarising, we have that the typical automorphism $(A,\mu,\qq)$ takes
one of two possible forms:
\begin{equation}
  \begin{split}
    &A = \begin{pmatrix} q_4^2-q_1^2 & 2 q_1 q_4 \\ -2 q_1 q_4 & q_4^2 - q_1^2
    \end{pmatrix},\qquad \mu= q_1^2+q_4^2 \qquad\text{and}\qquad \qq = q_4 + q_1 \ii\\
   \text{or}\qquad &A =  \begin{pmatrix}
     q_2^2-q_3^2 & 2 q_2 q_3 \\ 2 q_2 q_3 & q_3^2 - q_2^2
    \end{pmatrix}, \qquad \mu = q_2^2+q_3^2 \qquad\text{and}\qquad \qq = q_2 \jj
    + q_3 \kk.
  \end{split}
\end{equation}

\subsubsection{Automorphisms of Lie superalgebra $\mathsf{S3}$}
\label{sec:autom-kinem-lie-3}

Here $\hh=\bb=\pp=0$, $c_0=1$, $\bc_1 = \bc_2 = 0$ and $\bc_3 = \kk$.
The invariance conditions~\eqref{eq:aut-s} give
\begin{equation}
  \mu = |\qq|^2, \qquad b\kk = 0 \qquad\text{and}\qquad d\kk = \qq \kk \qqbar.
\end{equation}
This is very similar to the case of the Lie superalgebra \hyperlink{KLSA1}{$\mathsf{S1}$} and, in
particular, Lemma~\ref{lem:dk=qkqbar} applies.  The typical
automorphism $(A,\mu,\qq)$ takes one of two possible forms:
\begin{equation}
  \begin{split}
    &A = \begin{pmatrix} a & \zero \\ c & |\qq|^2
    \end{pmatrix},\qquad \mu= |\qq|^2 \qquad\text{and}\qquad \qq = q_4 + q_3 \kk\\
   \text{or}\qquad &A =  \begin{pmatrix}
      a & \zero \\ c & - |\qq|^2
    \end{pmatrix}, \qquad \mu = |\qq|^2 \qquad\text{and}\qquad \qq = q_1 \ii
    + q_2 \jj.
  \end{split}
\end{equation}

\subsubsection{Automorphisms of Lie superalgebra $\mathsf{S4}$}
\label{sec:autom-kinem-lie-4}

Here $\hh=\bb=\pp=0$, $c_0=1$ and $\bc_1 = \bc_2 = \bc_3 = 0$.  The
only condition is $\mu = |\qq|^2$.  Hence the typical automorphism
$(A,\mu,\qq)$ takes the form
\begin{equation}
  A =
  \begin{pmatrix}
    a & b \\ c & d
  \end{pmatrix},
  \qquad \mu = |\qq|^2 \qquad\text{and}\qquad \qq \in \HH^\times.
\end{equation}

\subsubsection{Automorphisms of Lie superalgebra $\mathsf{S5}$}
\label{sec:autom-kinem-lie-5}

Here $\hh= \bb = \pp = 0$, $c_0 = 0$, $\bc_1 = 0$, $\bc_2 = \jj$ and
$\bc_3 = \kk$.  The invariance conditions~\eqref{eq:aut-s} are as for
Lie superalgebra \hyperlink{KLSA2}{$\mathsf{S2}$}, except that $\mu$ is
unconstrained.  In other words, the typical automorphism $(A,\mu,\qq)$
takes one of two possible forms:
\begin{equation}
  \begin{split}
    &A = \begin{pmatrix} q_4^2-q_1^2 & 2 q_1 q_4 \\ -2 q_1 q_4 & q_4^2 - q_1^2
    \end{pmatrix},\qquad \mu \qquad\text{and}\qquad \qq = q_4 + q_1 \ii\\
   \text{or}\qquad &A =  \begin{pmatrix}
     q_2^2-q_3^2 & 2 q_2 q_3 \\ 2 q_2 q_3 & q_3^2 - q_2^2
    \end{pmatrix}, \qquad \mu \qquad\text{and}\qquad \qq = q_2 \jj
    + q_3 \kk.
  \end{split}
\end{equation}

\subsubsection{Automorphisms of Lie superalgebra $\mathsf{S6}$}
\label{sec:autom-kinem-lie-6}

Here $\hh=\bb=\pp=0$, $c_0= 0$, $\bc_1 = \bc_2 = 0$ and $\bc_3= \kk$.
This is similar to Lie superalgebra \hyperlink{KLSA3}{$\mathsf{S3}$}, except
that $\mu$ remains unconstrained.  In summary, the typical
automorphisms $(A,\mu,\qq)$ takes one of two possible forms:
\begin{equation}
  \begin{split}
    &A = \begin{pmatrix} a & \zero \\ c & |\qq|^2
    \end{pmatrix},\qquad \mu \qquad\text{and}\qquad \qq = q_4 + q_3 \kk\\
   \text{or}\qquad &A =  \begin{pmatrix}
      a & \zero \\ c & - |\qq|^2
    \end{pmatrix}, \qquad \mu \qquad\text{and}\qquad \qq = q_1 \ii + q_2 \jj.
  \end{split}
\end{equation}

The next two Lie superalgebras (\hyperlink{KLSA7}{$\mathsf{S7}$} and
\hyperlink{KLSA8}{$\mathsf{S8}$}) are supersymmetric extensions of the
galilean Lie algebra, where $(A,\mu)$ take the form
\begin{equation}
  A =
  \begin{pmatrix}
    a & b \\ c & d 
  \end{pmatrix} \qquad\text{and}\qquad \mu = \frac{d}{a}.
\end{equation}

\subsubsection{Automorphisms of Lie superalgebra $\mathsf{S7}$}
\label{sec:autom-kinem-lie-7}

Here $\hh=\kk$, $\bb = \pp = 0$, $c_0 = 0$, $\bc_1 = \bc_2 = 0$ and
$\bc_3= \kk$.  The invariance conditions~\eqref{eq:aut-s} are
\begin{equation}
  d \qq\kk = a \kk \qq \qquad\text{and}\qquad d\kk = \qq \kk \qqbar.
\end{equation}
Multiplying the second equation on the right by $\qq$, using the
first equation and the fact that $\qq \neq 0$, results in $a =
d^2/|\qq|^2$, so that $a > 0$.  Taking the quaternion norm of the
first equation shows that $a = |d|$, so that $a = |\qq|^2$.  The first
equation now follows from the second, and that is solved by
Lemma~\ref{lem:dk=qkqbar}.

In summary the typical automorphism $(A,\mu,\qq)$ takes one of two
possible forms:
\begin{equation}
  \begin{split}
    &A = \begin{pmatrix} |\qq|^2 & \zero \\ c & |\qq|^2
    \end{pmatrix},\qquad \mu = 1 \qquad\text{and}\qquad \qq = q_4 + q_3 \kk\\
   \text{or}\qquad &A =  \begin{pmatrix}
      |\qq|^2 & \zero \\ c & - |\qq|^2
    \end{pmatrix}, \qquad \mu=-1 \qquad\text{and}\qquad \qq = q_1 \ii + q_2 \jj.
  \end{split}
\end{equation}

\subsubsection{Automorphisms of Lie superalgebra $\mathsf{S8}$}
\label{sec:autom-kinem-lie-8}

Here $\hh=\bb=\pp=0$, $c_0=0$, $\bc_1=\bc_2 =0$ and $\bc_3 = \kk$.
The invariance conditions~\eqref{eq:aut-s} reduce to just $d\kk =
\qq\kk\qqbar$, which we solve by Lemma~\ref{lem:dk=qkqbar}.  In
summary, the typical automorphism $(A,\mu,\qq)$ is as in the
previous Lie superalgebra, except that $a$ is unconstrained (but
nonzero).  It can thus take one of two possible forms:
\begin{equation}
  \begin{split}
    &A = \begin{pmatrix} a & \zero \\ c & |\qq|^2
    \end{pmatrix},\qquad \mu =\frac{|\qq|^2}{a} \qquad\text{and}\qquad \qq = q_4 + q_3 \kk\\
   \text{or}\qquad &A = \begin{pmatrix}
      a & \zero \\ c & - |\qq|^2
    \end{pmatrix}, \qquad \mu=-\frac{|\qq|^2}{a} \qquad\text{and}\qquad \qq = q_1 \ii + q_2 \jj.
  \end{split}
\end{equation}

The next two classes of Lie superalgebras are associated with the
one-parameter family of kinematical Lie algebras
\hyperlink{KLA3}{$\mathsf{K3}_\gamma$}, whose typical automorphisms
$(A,\mu)$ depend on the value of $\gamma \in [-1,1]$.  In the interior
of the interval, it takes the form
\begin{equation}
  A =
  \begin{pmatrix}
    a & \zero \\ \zero & d
  \end{pmatrix} \qquad\text{and}\qquad \mu = 1
\end{equation}
but at the boundaries this is enhanced: at $\gamma = -1$ one can also
have automorphisms of the form
\begin{equation}
  A =
  \begin{pmatrix}
    \zero & b \\ c & \zero
  \end{pmatrix} \qquad\text{and}\qquad \mu = -1,
\end{equation}
whereas at $\gamma = 1$, the typical automorphism takes the form
\begin{equation}
  A =
  \begin{pmatrix}
    a & b \\ c & d
  \end{pmatrix} \qquad\text{and}\qquad \mu = 1.
\end{equation}

\subsubsection{Automorphisms of Lie superalgebra $\mathsf{S9}$$_{\gamma,\lambda}$}
\label{sec:autom-kinem-lie-9}

Here $\hh= \tfrac12 (1 + \lambda \kk)$, $\bb=\pp=0$, $c_0=0$, $\bc_1 =
\bc_2 = 0$ and $\bc_3 = \kk$.  The invariance
conditions~\eqref{eq:aut-s} reduce to $b=0$ and, in addition, 
\begin{equation}
  \mu \qq (1 + \lambda \kk) = (1+\lambda \kk) \qq
  \qquad\text{and}\qquad d\kk = \qq\kk\qqbar.
\end{equation}
Taking the norm of the first equation, we find that $\mu = \pm 1$.  If
$\mu =1$, then $\lambda[\kk,\qq] =0$ so that either $\lambda\neq 0$,
in which case $\qq \in \spn{1,\kk}$ or $\lambda = 0$ and $\qq$ is not
constrained by this equation.  The second equation is dealt with by
Lemma~\ref{lem:dk=qkqbar}, which implies in particular that $d =
\pm |\qq|^2$ and since $\qq \neq 0$, $d \neq 0$.  This precludes the
case $\mu = -1$ by inspecting the possible automorphisms $(A,\mu)$  of
$\k$.  In summary, for generic $\gamma$ and $\lambda$, the typical
automorphism $(A,\mu,\qq)$ takes the form
\begin{equation}
  A =
  \begin{pmatrix}
    a & \zero \\ \zero & |\qq|^2
  \end{pmatrix}, \qquad \mu = 1 \qquad\text{and}\qquad \qq = q_4 + q_3
  \kk,
\end{equation}
which is enhanced for $\gamma = 1$ (but $\lambda$ still generic) to
\begin{equation}
  A =
  \begin{pmatrix}
    a & \zero \\ c & |\qq|^2
  \end{pmatrix}, \qquad \mu = 1 \qquad\text{and}\qquad \qq = q_4 + q_3
  \kk.
\end{equation}
If $\lambda = 0$, then the automorphisms are enhanced by the addition
of $(A,\mu,\qq)$ of the form
\begin{equation}
  A =
  \begin{pmatrix}
    a & \zero \\ \zero & -|\qq|^2
  \end{pmatrix}, \qquad \mu = 1 \qquad\text{and}\qquad \qq = q_1\ii + q_2 \jj,
\end{equation}
for generic $\gamma$ or, for $\gamma = 1$ only,  also
\begin{equation}
  A =
  \begin{pmatrix}
    a & \zero \\ c & -|\qq|^2
  \end{pmatrix}, \qquad \mu = 1 \qquad\text{and}\qquad \qq = q_1\ii +
  q_2 \jj.
\end{equation}

\subsubsection{Automorphisms of Lie superalgebra $\mathsf{S10}$$_{\gamma,\lambda}$}
\label{sec:autom-kinem-lie-10}

Here $\hh= \tfrac12(\gamma + \lambda\kk)$, $\bb=\pp=0$, $c_0=0$,
$\bc_1=\bc_3=0$ and $\bc_2 = \kk$.  The invariance
conditions~\eqref{eq:aut-s} imply that $c=0$ and also
\begin{equation}
  \mu \qq (\gamma + \lambda\kk) = (\gamma + \lambda\kk) \qq
  \qquad\text{and}\qquad a\kk = q\kk\qqbar.
\end{equation}
It is very similar to the previous Lie superalgebra, except that here
$\gamma \neq 1$.  Lemma~\ref{lem:dk=qkqbar} says now that either $a
= |\qq|^2$ and $\qq = q_4 + q_3 \kk$ or $a = - |\qq|^2$ and $\qq = q_1
\ii + q_2 \jj$. In particular, since $\qq \neq 0$, $a\neq 0$.  From
the expressions for the automorphisms $(A,\mu)$ of $\k$, we see that
$\mu = 1$.  This means that the first equation says $\qq$ commutes
with $\gamma + \lambda \kk$.   If $\lambda = 0$, this condition is
vacuous, but if $\lambda \neq 0$, then it forces $\qq = q_4 + q_3\kk$
and hence $a = |\qq|^2$.

In summary, for $\lambda \neq 0$ we have that
$(A,\mu,\qq)$ takes the form
\begin{equation}
    A =
    \begin{pmatrix}
      |q|^2 & \zero \\ \zero & d \end{pmatrix}, \qquad \mu = 1
    \qquad\text{and}\qquad \qq = q_4 +  q_3\kk,
\end{equation}
whereas if $\lambda = 0$ it can also take the form
\begin{equation}
    A =
    \begin{pmatrix}
      -|q|^2 & \zero \\ \zero & d \end{pmatrix}, \qquad \mu = 1
    \qquad\text{and}\qquad \qq = q_1\ii +  q_2\jj.
\end{equation}

The next Lie superalgebra is based on the kinematical Lie algebra
\hyperlink{KLA4}{$\mathsf{K4}_\chi$}, whose automorphisms $(A,\mu)$ take the form
\begin{equation}
  A =
  \begin{pmatrix}
    a & b \\ -b & a
  \end{pmatrix} \qquad\text{and}\qquad \mu = 1
\end{equation}
for generic $\chi$, whereas if $\chi = 0$, then they can also be of
the form
\begin{equation}
  A =
  \begin{pmatrix}
    a & b \\ b & -a
  \end{pmatrix} \qquad\text{and}\qquad \mu = -1.
\end{equation}

\subsubsection{Automorphisms of Lie superalgebra $\mathsf{S11}$$_\chi$}
\label{sec:autom-kinem-lie-11}

Here $\hh = \tfrac12 (\chi + \jj)$, $\bb=\pp=0$, $c_0=0$, $\bc_1 = 0$,
$\bc_2 = \ii$ and $\bc_3 = \kk$.  The invariance
conditions~\eqref{eq:aut-s} reduce to
\begin{equation}
  \mu \qq (\chi + \jj) = (\chi + \jj)\qq, \qquad \qq\ii\qqbar = a\ii +
  b\kk \qquad\text{and}\qquad \qq\kk\qqbar = c\ii + d\kk.
\end{equation}
The last two equations are solved via Lemma~\ref{lem:conj-plane}:
either $\qq = q_4 + q_2\jj$ or else $\qq = q_1\ii + q_3\kk$.  This
latter case can only happen when $\chi = 0$.  Substituting these
possible expressions for $\qq$ in the last two equations, we determine
the entries of the matrix $A$.

In summary, $(A,\mu,\qq)$ takes the form
\begin{equation}
  A =
  \begin{pmatrix}
    q_4^2 - q_2^2 & - 2 q_2 q_4 \\ 2 q_2 q_4 & q_4^2 - q_2^2
  \end{pmatrix}, \qquad \mu = 1 \qquad\text{and}\qquad \qq = q_4 + q_2\jj,
\end{equation}
and (only) if $\chi = 0$ it can also take the form
\begin{equation}
  A =
  \begin{pmatrix}
    q_1^2 - q_3^2 & 2 q_1 q_3 \\ 2 q_1 q_3 & q_3^2 - q_1^2
  \end{pmatrix}, \qquad \mu = -1 \qquad\text{and}\qquad \qq = q_1\ii + q_3\kk.
\end{equation}

The next Lie superalgebra is the supersymmetric extension of the
kinematical Lie algebra \hyperlink{KLA5}{$\mathsf{K5}$}, whose automorphisms
$(A,\mu)$ are of the form
\begin{equation}
  A =
  \begin{pmatrix}
    a & \zero \\ c & a
  \end{pmatrix} \qquad\text{and}\qquad \mu = 1.
\end{equation}

\subsubsection{Automorphisms of Lie superalgebra $\mathsf{S12}$$_\lambda$}
\label{sec:autom-kinem-lie-12}

Here $\hh = \tfrac12 (1 + \lambda \kk)$, $\bb=\pp=0$, $c_0 =0$, $\bc_1
= \bc_2 = 0$ and $\bc_3 = \kk$.  The invariance
conditions~\eqref{eq:aut-s} reduce to
\begin{equation}
  \qq\hh = \hh\qq \qquad\text{and}\qquad a\kk  = \qq \kk \qqbar.
\end{equation}
The second equation is solved via Lemma~\ref{lem:dk=qkqbar}, which
says that either $a = |\qq|^2$ and $\qq = q_4 + q_3 \kk$ or $a =
- |\qq|^2$ and $\qq = q_1\ii + q_2\jj$.  The first equation is
identically satisfied  if $\lambda =0$, but otherwise it forces
$\qq = q_4 + q_3 \kk$ and hence $a= |\qq|^2$.  In summary, for general
$\lambda$, an automorphism $(A,\mu,\qq)$ takes the form
\begin{equation}
  A =
  \begin{pmatrix}
    |\qq|^2 & \zero \\ c & |\qq|^2
  \end{pmatrix}, \qquad \mu = 1 \qquad\text{and}\qquad \qq = q_4 + q_3\kk,
\end{equation}
whereas if $\lambda = 0$, it can also take the form
\begin{equation}
  A =
  \begin{pmatrix}
    -|\qq|^2 & \zero \\ c & -|\qq|^2
  \end{pmatrix}, \qquad \mu = 1 \qquad\text{and}\qquad \qq = q_1\ii + q_2\jj.
\end{equation}

The next Lie superalgebra is the supersymmetric extension of the
Carroll algebra, whose automorphisms $(A,\mu)$ take the form
\begin{equation}
  A =
  \begin{pmatrix}
    a & b \\ c & d
  \end{pmatrix} \qquad\text{and}\qquad \mu = ad - bc.
\end{equation}

\subsubsection{Automorphisms of Lie superalgebra $\mathsf{S13}$}
\label{sec:autom-kinem-lie-13}

Here $\hh = \bb = \pp = 0$, $c_0 =1$ and $\bc_1 = \bc_2 = \bc_3 = 0$.
The invariance conditions \eqref{eq:aut-s} reduce to a single
condition: $ad - bc = |\qq|^2$.  The automorphisms $(A,\mu,\qq)$ are of
the form
\begin{equation}
  A  =   \begin{pmatrix}
    a & b \\ c & d
  \end{pmatrix}, \qquad \mu = ad - bc = |\qq|^2 \qquad\text{and}\qquad
  \qq \in\HH^\times.
\end{equation}

The next Lie superalgebra is the Poincaré superalgebra whose
($\r$-fixing) automorphisms $(A,\mu)$ can take one of two possible
forms:
\begin{equation}
  \begin{split}
    &A = \begin{pmatrix} 1 & \zero \\ c & d
    \end{pmatrix} \qquad\text{and}\qquad \mu = d\\
   \text{or}\qquad &A = \begin{pmatrix}
      -1 & \zero \\ c & d
    \end{pmatrix}\qquad\text{and}\qquad \mu = -d.
  \end{split}
\end{equation}

\subsubsection{Automorphisms of Lie superalgebra $\mathsf{S14}$}
\label{sec:autom-kinem-lie-14}

Here $\hh=\pp=0$, $\bb= \tfrac12 \kk$, $c_0=1$, $\bc_1 = \bc_2 = 0$
and $\bc_3 = \kk$.  The invariance conditions~\eqref{eq:aut-s}
translate into
\begin{equation}
  \pm \qq \kk = \kk\qq, \qquad d = \pm |\qq|^2  \qquad\text{and}\qquad
  d\kk = \qq\kk\qqbar,
\end{equation}
where the signs are correlated and the last equation follows from the first two.

Choosing the plus sign, $\qq\kk = \kk\qq$, so that $\qq =
q_4 + q_3 \kk$ and $d = |\qq|^2$, whereas choosing the minus sign,
$\qq\kk = - \kk\qq$, so that $\qq = q_1 \ii + q_2 \jj$ and $d =
-|\qq|^2$.

In summary, automorphisms $(A,\mu,\qq)$ of the Poincaré superalgebra
take the form
\begin{equation}
  \begin{split}
    &A = \begin{pmatrix} 1 & \zero \\ c & |\qq|^2
    \end{pmatrix}, \qquad \mu = |\qq|^2 \qquad\text{and}\qquad
    \qq = q_4 + q_3 \kk\\
    \text{or}\qquad
    &A = \begin{pmatrix} -1 & \zero \\ c & -|\qq|^2
    \end{pmatrix}, \qquad \mu = |\qq|^2 \qquad\text{and}\qquad
    \qq = q_1\ii + q_2 \jj.
  \end{split}
\end{equation}

The next Lie superalgebra is the AdS superalgebra, whose ($\r$-fixing)
automorphisms $(A,\mu)$ are of the form
\begin{equation}
  A =
  \begin{pmatrix}
    a & b \\ \mp b & \pm a
  \end{pmatrix} \qquad\text{and}\qquad \mu = \pm 1,
\end{equation}
where $a^2 + b^2 = 1$.

\subsubsection{Automorphisms of Lie superalgebra $\mathsf{S15}$}
\label{sec:autom-kinem-lie-15}

Here $\hh=\tfrac12\kk$, $\bb=\tfrac12\ii$, $\pp=\tfrac12\jj$, $c_0 =
1$, $\bc_1 = \kk$, $\bc_2 = \jj$ and $\bc_3= \ii$.  The invariance
conditions~\eqref{eq:aut-s} include $\mu = |\qq|^2$, which forces $\mu
= 1$.  Taking this into account, another of the invariance
conditions~\eqref{eq:aut-s} is $\qq\kk = \kk \qq$, which together with
$|\qq|=1$, forces $\qq = e^{\theta\kk}$.  The remaining invariance
conditions are
\begin{equation}
  a \qq \ii - b \qq \jj = \ii\qq, \qquad
  b \qq \ii + a \qq \jj = \jj\qq, \qquad
  a\jj\qq + b \ii\qq = \qq\jj \qquad\text{and}\qquad
  a\ii\qq - b\jj\qq = \qq\ii.
\end{equation}
Given the expression for $\qq$, these are solved by $a = \cos2\theta$
and $b = \sin2\theta$.  In summary, the ($\r$-fixing) automorphisms
$(A,\mu,\qq)$ of the AdS superalgebra are of the form
\begin{equation}
  A =
  \begin{pmatrix}
    \cos2\theta & \sin2\theta \\ -\sin2\theta & \cos2\theta
  \end{pmatrix}, \qquad \mu = 1 \qquad\text{and}\qquad \qq = e^{\theta\kk}.
\end{equation}

The next three Lie superalgebras in Table~\ref{tab:klsa} are
supersymmetric extensions of the kinematical Lie algebra
\hyperlink{KLA12}{$\mathsf{K12}$} in Table~\ref{tab:kla}, whose $\r$-fixing
automorphisms $(A,\mu)$ take the following form:
\begin{equation}
  A =
  \begin{pmatrix}
    1 & \zero \\ \zero & \pm 1
  \end{pmatrix} \qquad\text{and}\qquad \mu \in \RR^\times.
\end{equation}

\subsubsection{Automorphisms of Lie superalgebra $\mathsf{S16}$}
\label{sec:autom-kinem-lie-16}

Here $\hh=\bb=0$, $\pp= \tfrac12\jj$, $c_0= 0$, $\bc_1 = -\ii$, $\bc_2
= \ii$ and $\bc_3 = -\kk$.  The invariance conditions~\eqref{eq:aut-s}
reduce to
\begin{equation}
  \pm \qq \jj = \jj\qq, \qquad \pm \qq\kk = \kk \qq
  \qquad\text{and}\qquad \qq\ii\qqbar = \ii.
\end{equation}
It follows from the last equation that $|\qq|=1$ and hence that
$\qq\ii = \ii\qq$.  Depending on the (correlated) signs of the first
two equations, we find that, for the plus sign, $\qq$ commutes with
$\ii$, $\jj$ and $\kk$ and hence $\qq \in \RR$, but since $|\qq|=1$,
we must have $\qq = \pm 1$.  For the minus sign, we find that $\qq$
commutes with $\ii$ but anticommutes with $\jj$ and $\kk$, so that
$\qq = \pm \ii$, after taking into account that $|\qq|=1$.  In
summary, the automorphisms $(A,\mu,\qq)$ of this Lie
superalgebra take one of two possible forms:
\begin{equation}
  \begin{split}
    &A = \begin{pmatrix} 1 & \zero \\ \zero & 1
    \end{pmatrix},\qquad \mu \in \RR^\times \qquad\text{and}\qquad \qq
    = \pm 1\\
   \text{or}\qquad &A = \begin{pmatrix}
      1 & \zero \\ \zero & -1
    \end{pmatrix},\qquad \mu \in \RR^\times \qquad\text{and}\qquad \qq
    = \pm \ii.
  \end{split}
\end{equation}

\subsubsection{Automorphisms of Lie superalgebra $\mathsf{S17}$}
\label{sec:autom-kinem-lie-17}

Here $\hh = \pp = 0$, $\bb = \tfrac12$, $c_0 =1$ and $\bc_1 = \bc_2 =
\bc_3 = 0$.  There is only one invariance condition: namely, $\mu
= |\qq|^2$, and hence the automorphisms $(A,\mu,\qq)$ take the form
\begin{equation}
  A =
  \begin{pmatrix}
    1 & \zero \\ \zero & \pm 1
  \end{pmatrix}, \qquad \mu = |\qq|^2 \qquad\text{and}\qquad \qq \in \HH^\times.
\end{equation}

\subsubsection{Automorphisms of Lie superalgebra $\mathsf{S18}$}
\label{sec:autom-kinem-lie-18}

Here $\hh=\tfrac12\kk$, $\bb = \tfrac12$, $\pp = 0$, $c_0=1$, $\bc_1 =
\bc_3 = 0$ and $\bc_2=\kk$.  The invariance
conditions~\eqref{eq:aut-s} reduce to
\begin{equation}
  \mu \qq\kk = \kk\qq, \qquad \mu = |\qq|^2 \qquad\text{and}\qquad \kk
  = \qq\kk\qqbar.
\end{equation}
From the first equation we see that $\mu = \pm 1$, but from the
second it must be positive, so $\mu = 1$, which says implies that
$|\qq| = 1$ and hence that $\qq$ commutes with $\kk$.  In summary, the
typical automorphism $(A,\mu,\qq)$ takes the form
\begin{equation}
  A =
  \begin{pmatrix}
    1 & \zero \\ \zero & \pm 1
  \end{pmatrix}, \qquad \mu = 1 \qquad\text{and}\qquad \qq = e^{\theta \kk}.
\end{equation}

The next four Lie superalgebras in Table~\ref{tab:klsa} are
supersymmetric extensions of the kinematical Lie algebra \hyperlink{KLA13}{$\mathsf{K13}$} in
Table~\ref{tab:kla}, whose typical $\r$-fixing automorphisms $(A,\mu)$
take the form
\begin{equation}
  A =
  \begin{pmatrix}
    1 & \zero \\ \zero & \pm 1
  \end{pmatrix} \qquad\text{and}\qquad \mu \in \RR^\times.
\end{equation}

\subsubsection{Automorphisms of Lie superalgebra $\mathsf{S19}$}
\label{sec:autom-kinem-lie-19}

Here $\hh=\kk$, $\bb =0$, $\pp = \tfrac12$, $c_0=1$, $\bc_1 = \bc_3 =
\kk$ and $\bc_2= -\kk$.  The invariance conditions~\eqref{eq:aut-s}
are given by
\begin{equation}
  \mu \qq \kk = \kk\qq, \qquad \mu = |\qq|^2 \qquad\text{and}\qquad
  d\qq = \qq.
\end{equation}
The last equation says that $d=1$, whereas the first says that $\mu =
\pm 1$, but from the second equation it is positive and thus $\mu =
1$.  This also means $|\qq|=1$ and that $\qq\kk=\kk\qq$.  In summary,
the typical automorphism $(A,\mu,\qq)$ of $\s$ takes the form
\begin{equation}
  A =
  \begin{pmatrix}
    1 & \zero \\ \zero & 1
  \end{pmatrix}, \qquad \mu = 1 \qquad\text{and}\qquad \qq = e^{\theta\kk}.
\end{equation}

\subsubsection{Automorphisms of Lie superalgebra $\mathsf{S20}$}
\label{sec:autom-kinem-lie-20}

Here $\hh=\bb =0$, $\pp = \tfrac12$, $c_0=1$, $\bc_1 = \bc_2 = \bc_3=
0$.  The invariance conditions~\eqref{eq:aut-s} are given by
\begin{equation}
  d\qq = \qq \qquad\text{and}\qquad \mu = |\qq|^2.
\end{equation}
The first equation simply sets $d= 1$ and, in summary, the typical
automorphism of $\s$ is takes the form
\begin{equation}
  A =
  \begin{pmatrix}
    1 & \zero \\ \zero & 1
  \end{pmatrix}, \qquad \mu = |\qq|^2 \qquad\text{and}\qquad \qq \in \HH^\times.
\end{equation}

\subsubsection{Automorphisms of Lie superalgebra $\mathsf{S21}$}
\label{sec:autom-kinem-lie-22}

Here $\hh=\pp=0$, $\bb=\tfrac12$, $c_0=1$ and $\bc_1 = \bc_2 = \bc_3 =
0$.  The only invariance condition is $\mu = |\qq|^2$, so that the
typical automorphism $(A,\mu,\qq)$ takes the form
\begin{equation}
  A =
  \begin{pmatrix}
    1 & \zero \\ \zero & \pm 1
  \end{pmatrix}, \qquad \mu = |\qq|^2 \qquad\text{and}\qquad \qq \in \HH^\times.
\end{equation}

\subsubsection{Automorphisms of Lie superalgebra $\mathsf{S22}$}
\label{sec:autom-kinem-lie-21}

Here $\hh = \tfrac12\kk$, $\bb=\tfrac12$, $\pp=0$, $c_0=1$,
$\bc_1=\bc_3=0$ and $\bc_2=\kk$.  The invariance
conditions~\eqref{eq:aut-s} reduce to
\begin{equation}
  \mu \qq \kk = \kk\qq \qquad\text{and}\qquad \mu = |\qq|^2.
\end{equation}
The first equation says that $\mu = \pm 1$, but the second equation
says it is positive, so that $\mu = 1$ and $|\qq|=1$.  Furthermore,
$\qq$ commutes with $\kk$, so that $\qq = e^{\theta\kk}$.  In summary,
the typical automorphism $(A,\mu,\qq)$ takes the form
\begin{equation}
  A =
  \begin{pmatrix}
    1 & \zero \\ \zero & \pm 1
  \end{pmatrix}, \qquad \mu = 1 \qquad\text{and}\qquad \qq = e^{\theta\kk}.
\end{equation}

The next six Lie superalgebras in Table~\ref{tab:klsa} are
supersymmetric extensions of the kinematical Lie algebra
\hyperlink{KLA14}{$\mathsf{K14}$} in Table~\ref{tab:kla}, whose $\r$-fixing
automorphisms $(A,\mu)$ take the form
\begin{equation}
  A =
  \begin{pmatrix}
    1 & \zero \\ \zero & d
  \end{pmatrix} \qquad\text{and}\qquad \mu \in \RR^\times.
\end{equation}

\subsubsection{Automorphisms of Lie superalgebra $\mathsf{S23}$}
\label{sec:autom-kinem-lie-23}

Here $\hh= \kk$, $\bb=\pp=0$, $c_0=0$, $\bc_1 = \bc_2 = 0$ and $\bc_3
= \kk$.  The invariance conditions~\eqref{eq:aut-s} reduce to
\begin{equation}
  \mu \qq \kk = \kk \qq \qquad\text{and}\qquad d \kk = \qq \kk \qqbar.
\end{equation}
The first equation says that $\mu = \pm 1$, so that $\pm \qq \kk = \kk
\qq$.  The second equation follows from Lemma~\ref{lem:dk=qkqbar}:
either $d = |\qq|^2$ and hence $\qq = q_4 + q_3 \kk$ or $d = -|\qq|^2$
and hence $\qq = q_1 \ii + q_2 \jj$.  In summary, the typical
automorphism $(A,\mu,\qq)$ takes one of two possible forms:
\begin{equation}
  \begin{split}
    &A = \begin{pmatrix} 1 & \zero \\ \zero & |\qq|^2
    \end{pmatrix},\qquad \mu =1 \qquad\text{and}\qquad \qq
    = q_4 + q_3\kk\\
   \text{or}\qquad &A = \begin{pmatrix}
      1 & \zero \\ \zero & -|\qq|^2
    \end{pmatrix},\qquad \mu = -1 \qquad\text{and}\qquad \qq = q_1 \ii
    + q_2 \jj.
  \end{split}
\end{equation}

\subsubsection{Automorphisms of Lie superalgebra $\mathsf{S24}$}
\label{sec:autom-kinem-lie-24}

Here $\hh=\bb=\pp=0$, $c_0=0$, $\bc_1 = \bc_2 = 0$ and $\bc_3 = \kk$.
Hence the only invariance condition is $d\kk = \qq\kk\qqbar$.
Lemma~\ref{lem:dk=qkqbar} says that either $d = |\qq|^2$ and hence
$\qq = q_4 + q_3\kk$ or else $d = - |\qq|^2$ and hence $\qq = q_1 \ii
+ q_2 \jj$.  In summary, the typical automorphism $(A,\mu,\qq)$ takes
one of two possible forms:
\begin{equation}
  \begin{split}
    &A = \begin{pmatrix} 1 & \zero \\ \zero & |\qq|^2
    \end{pmatrix},\qquad \mu \in\RR^\times \qquad\text{and}\qquad \qq
    = q_4 + q_3\kk\\
   \text{or}\qquad &A = \begin{pmatrix}
      1 & \zero \\ \zero & -|\qq|^2
    \end{pmatrix},\qquad \mu \in\RR^\times \qquad\text{and}\qquad \qq = q_1 \ii
    + q_2 \jj.
  \end{split}
\end{equation}

\subsubsection{Automorphisms of Lie superalgebra $\mathsf{S25}$}
\label{sec:autom-kinem-lie-25}

Here $\hh=\bb=\pp=0$, $c_0=1$ and $\bc_1 = \bc_2 = \bc_3 = 0$, so that
the only invariance condition is $\mu = |\qq|^2$.  In summary, the
typical automorphism $(A,\mu,\qq)$ takes the form
\begin{equation}
  A =
  \begin{pmatrix}
    1 & \zero \\ \zero & d
  \end{pmatrix}, \qquad \mu = |\qq|^2 \qquad\text{and}\qquad \qq \in \HH^\times.
\end{equation}

\subsubsection{Automorphisms of Lie superalgebra $\mathsf{S26}$}
\label{sec:autom-kinem-lie-26}

Here $\hh=\bb=\pp=0$, $c_0=1$, $\bc_1 = \bc_2 = 0$ and $\bc_3 = \kk$,
so that there are two conditions in \eqref{eq:aut-s}:
\begin{equation}
  \mu = |\qq|^2 \qquad\text{and}\qquad d \kk = \qq \kk \qqbar.
\end{equation}
The second equation can be solved via Lemma~\ref{lem:dk=qkqbar}:
either $d = |\qq|^2$ and $\qq = q_4 + q_3\kk$ or $d= - |\qq|^2$ and
$\qq = q_1 \ii + q_2 \jj$.  In summary, the automorphisms
$(A,\mu,\qq)$ take one of two possible forms:
\begin{equation}
  \begin{split}
    &A = \begin{pmatrix} 1 & \zero \\ \zero & |\qq|^2
    \end{pmatrix},\qquad \mu =|\qq|^2 \qquad\text{and}\qquad \qq
    = q_4 + q_3\kk\\
   \text{or}\qquad &A = \begin{pmatrix}
      1 & \zero \\ \zero & -|\qq|^2
    \end{pmatrix},\qquad \mu =|\qq|^2 \qquad\text{and}\qquad \qq = q_1 \ii
    + q_2 \jj.
  \end{split}
\end{equation}

\subsubsection{Automorphisms of Lie superalgebra $\mathsf{S27}$}
\label{sec:autom-kinem-lie-27}

Here $\hh=\pp=0$, $\bb=\frac12$, $c_0= 1$ and $\bc_1 = \bc_2 = \bc_3 =
0$, so that the only invariance condition is $\mu = |\qq|^2$.
Therefore the typical automorphism $(A,\mu,\qq)$ takes the form
\begin{equation}
  A =
  \begin{pmatrix}
    1 & \zero \\ \zero & d
  \end{pmatrix}, \qquad \mu = |\qq|^2 \qquad\text{and}\qquad \qq \in \HH^\times.
\end{equation}

\subsubsection{Automorphisms of Lie superalgebra $\mathsf{S28}$}
\label{sec:autom-kinem-lie-28}

Here $\hh=\bb=\frac12$, $\pp=0$, $c_0=1$, $\bc_1 = \bc_3 = 0$ and
$\bc_2 = \kk$.  The invariance conditions~\eqref{eq:aut-s} reduce to
the following:
\begin{equation}
  \mu = |\qq|^2, \qquad \mu \qq = \qq \qquad\text{and}\qquad \kk = \qq
  \kk \qqbar.
\end{equation}
From the second equation we see that $\mu = 1$, so that from the first
$|\qq| = 1$ and hence $\kk \qq = \qq \kk$, so that $\qq =
e^{\theta\kk}$.  In summary, the typical automorphism $(A,\mu,\pp)$
takes the form
\begin{equation}
  A =
  \begin{pmatrix}
    1 & \zero \\ \zero & d
  \end{pmatrix}, \qquad \mu = 1 \qquad\text{and}\qquad \qq = e^{\theta\kk}.
\end{equation}

The next four Lie superalgebras in Table~\ref{tab:klsa} are
supersymmetric extensions of the kinematical Lie algebra
\hyperlink{KLA15}{$\mathsf{K15}$} in Table~\ref{tab:kla}, whose $\r$-fixing
automorphisms $(A,\mu)$ take the form
\begin{equation}
  A =
  \begin{pmatrix}
    a & \zero \\ c & a^2
  \end{pmatrix} \qquad\text{and}\qquad \mu \in \RR^\times.
\end{equation}

\subsubsection{Automorphisms of Lie superalgebra $\mathsf{S29}$}
\label{sec:autom-kinem-lie-29}

Here $\bb=\pp=0$, $\hh= \kk$, $c_0 = 0$, $\bc_1 = \bc_2 = 0$ and
$\bc_3 = \kk$.  The invariance conditions~\eqref{eq:aut-s} result in
\begin{equation}
  \mu\qq\kk = \kk\qq \qquad\text{and}\qquad a^2 \kk = \qq\kk\qqbar.
\end{equation}
Taking the norm of the first equation, we see that $\mu = \pm 1$, and
of the second equation, $a^2=|\qq|^2$.  This then says that $\qq$
commutes with $\kk$, so that $\mu = 1$ and $\qq = q_4 + q_3\kk$.  In
summary, the typical automorphism $(A,\mu,\qq)$ takes the form
\begin{equation}
  A =
  \begin{pmatrix}
    \pm |\qq| & \zero \\ c & |\qq|^2
  \end{pmatrix}, \qquad \mu = 1 \qquad\text{and}\qquad \qq = q_4 +
  q_3\kk.
\end{equation}

\subsubsection{Automorphisms of Lie superalgebra $\mathsf{S30}$}
\label{sec:autom-kinem-lie-30}

Here $\hh=\bb=\pp =0$, $c_0 = 0$, $\bc_1 = \bc_2 = 0$ and $\bc_3 =
\kk$.  The only invariance condition is $a^2\kk = \qq\kk\qqbar$.
Taking the norm, $a^2 = |\qq|^2$ and hence $\kk\qq = \qq\kk$ and thus
$\qq = q_4 + q_3 \kk$.  Hence the typical automorphism $(A,\mu,\qq)$
takes the form
\begin{equation}
  A =
  \begin{pmatrix}
    \pm |\qq| & \zero \\ c & |\qq|^2
  \end{pmatrix}, \qquad \mu \in \RR^\times \qquad\text{and}\qquad \qq
  = q_4 + q_3 \kk.
\end{equation}

\subsubsection{Automorphisms of Lie superalgebra $\mathsf{S31}$}
\label{sec:autom-kinem-lie-31}

Here $\hh=\bb=\pp=0$, $c_0=1$ and $\bc_1 = \bc_2 = \bc_3 = 0$, so that
the only invariance condition is $\mu = |\qq|^2$.  In summary, the
typical automorphism $(A,\mu,\qq)$ takes the form
\begin{equation}
  A =
  \begin{pmatrix}
    a & \zero \\ c & a^2
  \end{pmatrix}, \qquad \mu = |\qq|^2 \qquad\text{and}\qquad \qq \in \HH^\times.
\end{equation}

\subsubsection{Automorphisms of Lie superalgebra $\mathsf{S32}$}
\label{sec:autom-kinem-lie-32}

Here $\hh=\bb=\pp=0$, $c_0=1$, $\bc_1 = \bc_2 = 0$ and $\bc_3 = \kk$,
so that there are two invariance conditions:
\begin{equation}
  \mu = |\qq|^2 \qquad\text{and}\qquad a^2 \kk = \qq \kk \qqbar.
\end{equation}
The second shows that $a^2 = |\qq|^2$ and hence $\qq$ commutes with
$\kk$, so that $\qq = q_4 + q_3 \kk$.  In summary, the
typical automorphism $(A,\mu,\qq)$ takes the form
\begin{equation}
  A =
  \begin{pmatrix}
    \pm |\qq| & \zero \\ c & |\qq|^2
  \end{pmatrix}, \qquad \mu = |\qq|^2 \qquad\text{and}\qquad \qq = q_4
  + q_3 \kk.
\end{equation}

The next Lie superalgebra in Table~\ref{tab:klsa} is a one-parameter
family of supersymmetric extensions of the kinematical Lie algebra
\hyperlink{KLA16}{$\mathsf{K16}$} in Table~\ref{tab:kla}, whose $\r$-fixing
automorphisms $(A,\mu)$ take the form
\begin{equation}
  A =
  \begin{pmatrix}
    1 & \zero \\ \zero & d
  \end{pmatrix} \qquad\text{and}\qquad \mu =1.
\end{equation}

\subsubsection{Automorphisms of Lie superalgebra $\mathsf{S33}$}
\label{sec:autom-kinem-lie-33}

Here $\hh= \frac12 (1 + \lambda \kk)$, $\bb = \pp = 0$, $c_0 = 0$,
$\bc_1 = \bc_2 = 0$ and $\bc_3 = \kk$.  There are two invariance
conditions:
\begin{equation}
  \qq (1 + \lambda \kk) = (1 + \lambda \kk) \qq \qquad\text{and}\qquad
  d \kk = \qq \kk \qqbar.
\end{equation}
For the second equation we use Lemma~\ref{lem:dk=qkqbar} and for the
first equation we must distinguish between $\lambda =0$ and $\lambda
\neq 0$.  In the latter case, we have that $\qq = q_4 + q_3 \kk$ so
that only the $d = |\qq|^2$ of the lemma survives.  If $\lambda = 0$,
both branches survive.  In summary, for $\lambda \neq 0$, the typical
automorphism $(A,\mu,\qq)$ takes the form
\begin{equation}
  A =
  \begin{pmatrix}
    1 & \zero \\ \zero & |\qq|^2
  \end{pmatrix}, \qquad \mu = 1 \qquad\text{and}\qquad \qq = q_4 + q_3\kk,
\end{equation}
whereas if $\lambda = 0$ we have additional automorphisms of the form
\begin{equation}
  A =
  \begin{pmatrix}
    1 & \zero \\ \zero & -|\qq|^2
  \end{pmatrix}, \qquad \mu = 1 \qquad\text{and}\qquad \qq = q_1\ii + q_2\jj.
\end{equation}

The next Lie superalgebra in Table~\ref{tab:klsa} is the
supersymmetric extension of the kinematical Lie algebra
\hyperlink{KLA17}{$\mathsf{K17}$} in Table~\ref{tab:kla}, whose $\r$-fixing
automorphisms $(A,\mu)$ take the form
\begin{equation}
  A =
  \begin{pmatrix}
    a & \zero \\ c & a^2
  \end{pmatrix} \qquad\text{and}\qquad \mu =a.
\end{equation}

\subsubsection{Automorphisms of Lie superalgebra $\mathsf{S34}$}
\label{sec:autom-kinem-lie-34}

Here $\hh = \frac12 \kk$, $\bb = \pp = 0$, $c_0 = 0$, $\bc_1 = \bc_2 =
0$ and $\bc_3 = \kk$.  The invariance conditions are
\begin{equation}
  a \qq \kk = \kk\qq \qquad\text{and}\qquad a^2\kk = \qq \kk \qqbar.
\end{equation}
Taking norms of the first equation gives $a = \pm 1$ and hence $\pm
\qq\kk = \kk \qq$ and of the second equation $a^2 = |\qq|^2$ and hence
$\qq \kk = \kk \qq$.  This shows that $a = 1$ and hence $|\qq| = 1$,
so that $\qq = e^{\theta\kk}$.  In summary, the typical automorphism
$(A,\mu,\qq)$ takes the form
\begin{equation}
  A =
  \begin{pmatrix}
    1 & \zero \\ c & 1
  \end{pmatrix}, \qquad \mu = 1 \qquad\text{and}\qquad \qq = e^{\theta\kk}.
\end{equation}

The last Lie superalgebra in Table~\ref{tab:klsa} is a one-parameter
family of supersymmetric extensions of the kinematical Lie algebra
\hyperlink{KLA18}{$\mathsf{K18}$} in Table~\ref{tab:kla}, whose $\r$-fixing
automorphisms $(A,\mu)$ take the form
\begin{equation}
  A =
  \begin{pmatrix}
    a & \zero \\ \zero & a^2
  \end{pmatrix} \qquad\text{and}\qquad \mu =1.
\end{equation}

\subsubsection{Automorphisms of Lie superalgebra $\mathsf{S35}$}
\label{sec:autom-kinem-lie-35}

Here $\hh = 1 + \lambda \kk$, $\bb = \pp = 0$, $c_0 = 0$, $\bc_1 =
\bc_2 = 0$ and $\bc_3 = \kk$.  The invariance
conditions~\eqref{eq:aut-s} reduce to
\begin{equation}
  \qq (1 + \lambda \kk) = (1 + \lambda \kk) \qq \qquad\text{and}\qquad
  a^2\kk = \qq\kk\qqbar.
\end{equation}
Taking the norm of the second equation, $a^2 = |\qq|^2$ so that $\qq
\kk = \kk \qq$ and hence $\qq = q_4 + q_3 \kk$.  This also solves the
first equation, independently of the value of $\lambda$.  In summary,
the typical automorphism $(A,\mu,\qq)$ takes the form
\begin{equation}
  A =
  \begin{pmatrix}
    \pm |\qq| & \zero \\ \zero & |\qq|^2
  \end{pmatrix}, \qquad \mu = 1 \qquad\text{and}\qquad \qq = q_4 + q_3\kk.
\end{equation}

\subsubsection{Summary}
\label{sec:summary-1}

Tables~\ref{tab:aut-klsa} and \ref{tab:aut-klsa-extra} summarise the above discussion and lists the
typical automorphisms of each of the Lie superalgebras in Table~\ref{tab:klsa}.

\begin{table}[h!]
  \centering
  \caption{Automorphisms of kinematical Lie superalgebras}
  \label{tab:aut-klsa}
  \begin{tabular}{l|>{$}l<{$}}\toprule
    \multicolumn{1}{c|}{S\#} & \multicolumn{1}{c}{Typical $(A,\mu,\qq) \in \GL(2,\RR) \times \RR^\times \times \HH^\times$}\\
    \toprule
    \hypertarget{SAut1}{1} & \left(\begin{pmatrix} a & \zero \\ c & |\qq|^2 \end{pmatrix}, 1, q_4 + q_3 \kk\right),
   \left(\begin{pmatrix} a & \zero \\ c & -|\qq|^2 \end{pmatrix}, -1, q_1\ii + q_2 \jj\right)\\[10pt]
    \hypertarget{SAut2}{2} & \left(\begin{pmatrix} q_4^2-q_1^2 & 2 q_1 q_4 \\ -2 q_1 q_4 & q_4^2 - q_1^2
    \end{pmatrix}, q_1^2+q_4^2, q_4 + q_1 \ii\right), \left(\begin{pmatrix}
     q_2^2-q_3^2 & 2 q_2 q_3 \\ 2 q_2 q_3 & q_3^2 - q_2^2
    \end{pmatrix}, q_2^2+q_3^2, q_2 \jj + q_3 \kk\right)\\[10pt]
    \hypertarget{SAut3}{3} & \left(\begin{pmatrix} a & \zero \\ c & |\qq|^2 \end{pmatrix}, |\qq|^2, q_4 + q_3 \kk\right),
   \left(\begin{pmatrix} a & \zero \\ c & -|\qq|^2 \end{pmatrix}, |\qq|^2, q_1\ii + q_2 \jj\right)\\[10pt]
    \hypertarget{SAut4}{4} & \left(\begin{pmatrix} a & b \\ c & d \end{pmatrix}, |\qq|^2, \qq \right)\\[10pt]
    \hypertarget{SAut5}{5} & \left(\begin{pmatrix} q_4^2-q_1^2 & 2 q_1 q_4 \\ -2 q_1 q_4 & q_4^2 - q_1^2
    \end{pmatrix}, \mu, q_4 + q_1 \ii\right), \left(\begin{pmatrix}
     q_2^2-q_3^2 & 2 q_2 q_3 \\ 2 q_2 q_3 & q_3^2 - q_2^2
    \end{pmatrix}, \mu, q_2 \jj + q_3 \kk\right)\\[10pt]
    \hypertarget{SAut6}{6} & \left(\begin{pmatrix} a & \zero \\ c & |\qq|^2 \end{pmatrix}, \mu , q_4 + q_3 \kk\right),
   \left(\begin{pmatrix} a & \zero \\ c & -|\qq|^2 \end{pmatrix}, \mu , q_1\ii + q_2 \jj\right)\\[10pt]
    \hypertarget{SAut7}{7} & \left(\begin{pmatrix} |\qq|^2 & \zero \\ c & |\qq|^2 \end{pmatrix}, 1, q_4 + q_3 \kk\right),
   \left(\begin{pmatrix} |\qq|^2 & \zero \\ c & -|\qq|^2 \end{pmatrix}, -1, q_1\ii + q_2 \jj\right)\\[10pt]
    \hypertarget{SAut8}{8} & \left(\begin{pmatrix} a & \zero \\ c & |\qq|^2 \end{pmatrix}, \frac{|\qq|^2}{a}, q_4 + q_3 \kk\right),
   \left(\begin{pmatrix} a & \zero \\ c & -|\qq|^2 \end{pmatrix}, -\frac{|\qq|^2}{a}, q_1\ii + q_2 \jj\right)\\[10pt]
    \hypertarget{SAut9a}{9$_{\gamma\neq 1, \lambda \neq 0}$} & \left(\begin{pmatrix} a & \zero \\ \zero & |\qq|^2 \end{pmatrix}, 1, q_4 + q_3 \kk\right)\\[10pt]
    \hypertarget{SAut9b}{9$_{\gamma= 1, \lambda \neq 0}$} & \left(\begin{pmatrix} a & \zero \\ c & |\qq|^2 \end{pmatrix}, 1, q_4 + q_3 \kk\right)\\[10pt]
    \hypertarget{SAut9c}{9$_{\gamma\neq 1, \lambda= 0}$} & \left(\begin{pmatrix} a & \zero \\ \zero & |\qq|^2 \end{pmatrix}, 1, q_4 + q_3 \kk\right), \left(\begin{pmatrix} a & \zero \\ \zero & -|\qq|^2 \end{pmatrix}, 1, q_1\ii + q_2 \jj\right)\\[10pt]
    \hypertarget{SAut9d}{9$_{\gamma= 1, \lambda = 0}$} &  \left(\begin{pmatrix} a & \zero \\ c & |\qq|^2 \end{pmatrix}, 1, q_4 + q_3 \kk\right), \left(\begin{pmatrix} a & \zero \\ c & -|\qq|^2 \end{pmatrix}, 1, q_1\ii + q_2 \jj\right)\\[10pt]
    \hypertarget{SAut10a}{10$_{\gamma,\lambda\neq0}$} &  \left(\begin{pmatrix} |\qq|^2 & \zero \\ \zero & d \end{pmatrix}, 1, q_4 + q_3 \kk\right)\\[10pt]
    \hypertarget{SAut10b}{10$_{\gamma,\lambda=0}$} &  \left(\begin{pmatrix} |\qq|^2 & \zero \\ \zero & d \end{pmatrix}, 1, q_4 + q_3 \kk\right), \left(\begin{pmatrix} -|\qq|^2 & \zero \\ \zero & d \end{pmatrix}, 1, q_1\ii + q_2 \jj\right)\\[10pt]
    \hypertarget{SAut11a}{11$_{\chi> 0}$} & \left(\begin{pmatrix} q_4^2 - q_2^2 & - 2 q_2 q_4 \\ 2 q_2 q_4 & q_4^2 - q_2^2 \end{pmatrix}, 1, q_4 + q_2\jj\right)\\[10pt]
    \hypertarget{SAut11b}{11$_{\chi= 0}$} & \left(\begin{pmatrix} q_4^2 - q_2^2 & - 2 q_2 q_4 \\ 2 q_2 q_4 & q_4^2 - q_2^2 \end{pmatrix}, 1, q_4 + q_2\jj\right), \left(\begin{pmatrix} q_1^2 - q_3^2 & 2 q_1 q_3 \\ 2 q_1 q_3 & q_3^2 - q_1^2 \end{pmatrix}, -1, q_1\ii + q_3\kk\right)\\[10pt]
    \hypertarget{SAut12a}{12$_{\lambda\neq 0}$} & \left(\begin{pmatrix} |\qq|^2 & \zero \\ c & |\qq|^2 \end{pmatrix}, 1, q_4 + q_3\kk\right)\\[10pt]
    \hypertarget{SAut12b}{12$_{\lambda= 0}$} & \left(\begin{pmatrix} |\qq|^2 & \zero \\ c & |\qq|^2 \end{pmatrix}, 1, q_4 + q_3\kk\right), \left(\begin{pmatrix} -|\qq|^2 & \zero \\ c & -|\qq|^2 \end{pmatrix}, 1, q_1\ii + q_2\jj\right)\\[10pt]
    \hypertarget{SAut13}{13} & \left(\begin{pmatrix} a & b \\ c & d \end{pmatrix}, ad-bc = |\qq|^2, \qq\right)\\[10pt]
    \hypertarget{SAut14}{14} & \left(\begin{pmatrix} 1 & \zero \\ c & |\qq|^2 \end{pmatrix}, |\qq|^2, q_4 + q_3\kk\right), \left(\begin{pmatrix} -1 & \zero \\ c & -|\qq|^2 \end{pmatrix}, |\qq|^2 , q_1\ii + q_2\jj\right)\\[10pt]
    \hypertarget{SAut15}{15} & \left(\begin{pmatrix} \cos2\theta & -\sin2\theta \\ \sin2\theta & \cos2\theta \end{pmatrix}, 1, e^{\theta \kk}\right)\\[10pt]
    \bottomrule
  \end{tabular}
\end{table}

\begin{table}[h!]
  \centering
  \caption{Automorphisms of kinematical Lie superalgebras (continued)}
  \label{tab:aut-klsa-extra}
  \begin{tabular}{l|>{$}l<{$}}\toprule
    \multicolumn{1}{c|}{S\#} & \multicolumn{1}{c}{Typical $(A,\mu,\qq) \in \GL(2,\RR) \times \RR^\times \times \HH^\times$}\\
    \toprule
    \hypertarget{SAut16}{16} &\left(\begin{pmatrix} 1 & \zero \\ \zero & 1 \end{pmatrix},\mu, \pm 1 \right),
         \left(\begin{pmatrix} 1 & \zero \\ \zero & -1 \end{pmatrix}, \mu, \pm\ii \right)\\[10pt]
    \hypertarget{SAut17}{17} & \left(\begin{pmatrix} 1 & \zero \\ \zero & \pm 1 \end{pmatrix}, |\qq|^2, \qq \right)\\[10pt]
    \hypertarget{SAut18}{18} & \left(\begin{pmatrix} 1 & \zero \\ \zero & \pm 1 \end{pmatrix}, 1, e^{\theta\kk} \right)\\[10pt]
    \hypertarget{SAut19}{19} & \left(\begin{pmatrix} 1 & \zero \\ \zero & 1 \end{pmatrix}, 1, e^{\theta\kk} \right)\\[10pt]
    \hypertarget{SAut20}{20} & \left(\begin{pmatrix} 1 & \zero \\ \zero & 1 \end{pmatrix}, |\qq|^2, \qq \right)\\[10pt]
    \hypertarget{SAut21}{21} & \left(\begin{pmatrix} 1 & \zero \\ \zero & \pm 1 \end{pmatrix}, |\qq|^2, \qq \right)\\[10pt]
    \hypertarget{SAut22}{22} & \left(\begin{pmatrix} 1 & \zero \\ \zero & \pm 1 \end{pmatrix}, 1, e^{\theta\kk} \right)\\[10pt]
    \hypertarget{SAut23}{23} & \left(\begin{pmatrix} 1 & \zero \\ \zero & |\qq|^2 \end{pmatrix}, 1, q_4 + q_3\kk\right),
         \left(\begin{pmatrix} 1 & \zero \\ \zero & -|\qq|^2 \end{pmatrix}, -1, q_1\ii + q_2\jj\right)\\[10pt]
    \hypertarget{SAut24}{24} &\left(\begin{pmatrix} 1 & \zero \\ \zero & |\qq|^2 \end{pmatrix}, \mu , q_4 + q_3\kk\right),
         \left(\begin{pmatrix} 1 & \zero \\ \zero & -|\qq|^2 \end{pmatrix}, \mu , q_1\ii + q_2\jj\right)\\[10pt]
    \hypertarget{SAut25}{25} & \left(\begin{pmatrix} 1 & \zero \\ \zero & d \end{pmatrix}, |\qq|^2 , \qq \right)\\[10pt]
    \hypertarget{SAut26}{26} &\left(\begin{pmatrix} 1 & \zero \\ \zero & |\qq|^2 \end{pmatrix}, |\qq|^2 , q_4 + q_3\kk\right),
         \left(\begin{pmatrix} 1 & \zero \\ \zero & -|\qq|^2 \end{pmatrix}, |\qq|^2 , q_1\ii + q_2\jj\right)\\[10pt]
    \hypertarget{SAut27}{27} & \left(\begin{pmatrix} 1 & \zero \\ \zero & d \end{pmatrix}, |\qq|^2 , \qq \right)\\[10pt]
    \hypertarget{SAut28}{28} & \left(\begin{pmatrix} 1 & \zero \\ \zero & d \end{pmatrix}, 1 , e^{\theta\kk}\right)\\[10pt]
    \hypertarget{SAut29}{29} & \left(\begin{pmatrix} \pm|\qq| & \zero \\ c & |\qq|^2 \end{pmatrix}, 1 , q_4 + q_3\kk\right)\\[10pt]
    \hypertarget{SAut30}{30} & \left(\begin{pmatrix} \pm|\qq| & \zero \\ c & |\qq|^2 \end{pmatrix}, \mu , q_4 + q_3\kk\right)\\[10pt]
    \hypertarget{SAut31}{31} & \left(\begin{pmatrix} a & \zero \\ c & a^2 \end{pmatrix}, |\qq|^2 , \qq\right)\\[10pt]
    \hypertarget{SAut32}{32} & \left(\begin{pmatrix} \pm|\qq| & \zero \\ c & |\qq|^2 \end{pmatrix}, |\qq|^2 , q_4 + q_3\kk\right)\\[10pt]
    \hypertarget{SAut33a}{33$_{\lambda \neq 0}$} & \left(\begin{pmatrix} 1 & \zero \\ \zero & |\qq|^2 \end{pmatrix}, 1 , q_4 + q_3\kk\right)\\[10pt]
    \hypertarget{SAut33b}{33$_{\lambda = 0}$} & \left(\begin{pmatrix} 1 & \zero \\ \zero & |\qq|^2 \end{pmatrix}, 1 , q_4 + q_3\kk\right),
                       \left(\begin{pmatrix} 1 & \zero \\ \zero & -|\qq|^2 \end{pmatrix}, 1 , q_1\ii + q_2\jj\right)\\[10pt]
    \hypertarget{SAut34}{34} & \left(\begin{pmatrix} 1 & \zero \\ c & 1 \end{pmatrix}, 1, e^{\theta\kk} \right)\\[10pt]
    \hypertarget{SAut35}{35$_\lambda$} & \left(\begin{pmatrix} \pm|\qq| & \zero \\ \zero & |\qq|^2 \end{pmatrix}, 1 , q_4 + q_3\kk\right)\\[10pt]
    \bottomrule
  \end{tabular}
\end{table}

\section{Homogeneous superspaces}
\label{sec:homog-supersp}

In this section, we classify the simply-connected $(4|4)$-dimensional
homogeneous kinematical and aristotelian superspaces. We start by
classifying the super Lie pairs associated with the kinematical Lie
superalgebras. After determining the super Lie pairs, we select those
super Lie pairs $(\s, \h)$ which are effective in a basis where $\h$
is always the span of $\J$ and $\B$. In this way, the super Lie pair
is uniquely characterised by writing the Lie brackets of $\s$ in that
basis.

Before starting with the classification of super Lie pairs, we first
explain the relationship between super Lie pairs and homogeneous
supermanifolds.  We shall be brief and refer the reader to
\cite{MR2640006}, particularly Section~5, for the details.  Although
the treatment in that paper is phrased in the context of spin
manifolds, the results are more general and apply to the homogeneous
spacetimes under consideration, even in the absence of an invariant
pseudo-riemannian structure.

\subsection{Homogeneous supermanifolds}
\label{sec:homog-superm}

In this paper, we shall adopt the following definition for
supermanifolds (see, e.g., \cite{MR0580292}).

\begin{definition}
  A smooth \textbf{supermanifold} of dimension $(m|n)$ is a pair $(M,\eO)$,
  where the \textbf{body} $M$ is a smooth $m$-dimensional manifold and
  the \textbf{structure sheaf} $\eO$ is a sheaf of supercommutative
  superalgebras extending the sheaf $\eE$ of smooth function of $M$ by
  the subalgebra of nilpotent elements $\eN$; that is, we have an
  exact sequence of sheaves of supercommutative superalgebras:
  \begin{equation}
    \begin{tikzcd}
      0 \arrow[r] & \eN \arrow[r] & \eO \arrow[r] & \eE \arrow[r] & 0,
    \end{tikzcd}
  \end{equation}
  where for every $p \in M$, there is a neighbourhood $p \in U \subset
  M$ such that
  \begin{equation}
    \eO(U) \cong \eE(U) \otimes \wedge[\theta^1,\dots,\theta^n].
  \end{equation}
\end{definition}

All the homogeneous supermanifolds in this paper are \textbf{split}:
$\eO$ is isomorphic to the sheaf of sections of the exterior
algebra bundle of a homogeneous vector bundle $E \to M$; that is,
\begin{equation}
  \eO(U) = \Gamma\left(U, \oplus_{p\geq 0}\wedge^p E\right)
  \qquad\text{with}\qquad \eN(U) = \Gamma\left(U,
    \oplus_{p\geq 1}\wedge^p E\right).
\end{equation}
A celebrated theorem of Batchelor's states that any smooth supermanifold
always admits a splitting; although the splitting is not canonical
\cite{MR536951}.

Lie supergroups can be described as group objects in the category of
supermanifolds, but there is an equivalent description in terms of
Harish-Chandra pairs.  Indeed, there is an equivalence of categories
between Lie supergroups and \textbf{Harish-Chandra pairs}
\cite{MR0580292, MR760837} $(\Kgr,\s)$ consisting of a Lie group
$\Kgr$ and a Lie superalgebra $\s = \s_{\bar 0} \oplus \s_{\bar 1}$
where the Lie algebra of $\Kgr$ is (isomorphic to) $\s_{\bar 0}$ and
where the adjoint action of $\s_{\bar 0}$ on $\s$ lifts to an action
of $\Kgr$ on $\s$ by automorphisms.  By a result of Koszul
\cite{MR760837} (see also \cite[Thm.~2.2]{MR2640006}) the structure
sheaf of the Lie supergroup corresponding to a Harish-Chandra pair
$(\Kgr,\s)$ is the sheaf of smooth functions
$\Kgr \to \wedge^\bullet \s_{\bar 1}$, which can be interpreted as
the sheaf of smooth sections of the trivial vector bundle $\Kgr \times
\wedge^\bullet \s_{\bar 1}$ over $\Kgr$.

Now suppose that $M$ is a simply-connected homogeneous manifold
realising a pair $(\k,\h)$.  Recall that this means that
$M = \Kgr/\Hgr$ where $\Kgr$ is a connected and simply-connected Lie
group with Lie algebra $\k$ and $\Hgr$ is the connected Lie
subgroup of $\Kgr$ generated by $\h$, assumed closed.  Suppose that
$\s = \s_{\bar 0} \oplus \s_{\bar 1}$ is a Lie superalgebra with
$\s_{\bar 0} = \k$.  Then $S := \s_{\bar 1}$ is a representation of
$\k$ and, since $\Kgr$ is simply-connected, it is also a
representation of $\Kgr$ and, by restriction, also a representation of
$\Hgr$.  Let $E := \Kgr \times_{\Hgr} S$ denote the
homogeneous vector bundle over $M$ associated with the representation
$S$ of $\Hgr$.  We define a supermanifold $(M,\eO)$ where $\eO$ is the
sheaf of sections of the exterior bundle $\wedge^\bullet E$.  This
supermanifold is called the \textbf{superisation} of $M$ defined by
the Lie superalgebra $\s$ (\emph{cf.} \cite[Thm.~5.6]{MR2640006}).

Conversely, any homogeneous supermanifold is of this form.  Although
the result is more general, we need only the special case where $\Hgr
\subset \Kgr$ is a closed Lie subgroup.  Then the homogeneous
superisation of $\Kgr/\Hgr$ has as structure sheaf the
$\Hgr$-equivariant smooth functions $\Kgr \to \wedge^\bullet \s_{\bar
  1}$ (\emph{cf.} \cite[§3.3]{MR2640006}), but these are precisely the
smooth sections of the homogeneous vector bundle over $\Kgr/\Hgr$
associated to the representation $\wedge^\bullet \s_{\bar 1}$ of
$\Hgr$.

Therefore to every homogeneous superisation of $\Kgr/\Hgr$ we
may associate a pair $(\s,\h)$ and, conversely, every pair $(\s,\h)$
defines a homogeneous superisation of $\Kgr/\Hgr$.  Let us formally
define these pairs in our present context.

\begin{definition}
  A \textbf{super Lie pair} consists of a pair $(\s, \h)$ where $\s$
  is one of the kinematical Lie superalgebras in Table~\ref{tab:klsa}
  and $\h$ is a Lie subalgebra containing $\r$ and decomposing as
  $\h = \r \oplus V$ under the adjoint action of $\r$, where
  $V \subset \s_{\bar 0}$ is a copy of the vector representation.
  Just as in the non-super case discussed in
  \cite{Figueroa-OFarrill:2017ycu}, we shall refer to such Lie
  subalgebras as \textbf{admissible}.  Two super Lie pairs $(\s, \h)$
  and $(\s, \h')$ are \textbf{isomorphic} if there is an automorphism
  of $\s$ under which $\h$ goes to $\h'$.  We shall say that a super
  Lie pair $(\s,\h)$ is \textbf{geometrically realisable} if and only
  if so is the Lie pair $(\k,\h)$, where $\k = \s_{\bar 0}$.  We say
  that a super Lie pair $(\s,\h)$ is \textbf{effective} if $\h$ does
  not contain an ideal of $\s$.
\end{definition}

We observe that the condition of being geometrically realisable has
nothing to do with supersymmetry, whereas the condition of being
effective does take into account the whole superalgebra.  It is thus
possible, and indeed we will see examples below, that a geometrically
realisable super Lie pair $(\s,\h)$ is effective, but the underlying pair
$(\k,\h)$ is not.  In that case, the vectorial generators in $\h$ act
trivially on the body of the superspace, but nontrivially on the
fermionic coordinates; that is, they generate R-symmetries.

As in the classical theory, there is a one-to-one correspondence
between (isomorphism classes of) effective, geometrically realisable
super Lie pairs and (isomorphism classes of) homogeneous superisations
of homogeneous manifolds.  To the best of our knowledge, this result
is part of the mathematical folklore and we are not aware of any
reference where this result is proved or even stated as such.

\subsection{Admissible super Lie pairs}
\label{sec:slie-pairs}

We are now ready to classify admissible super Lie pairs up to
isomorphism.  We recall these are pairs $(\s,\h)$, where $\s$ is one
of the kinematical Lie superalgebras in Table~\ref{tab:klsa} and $\h$
is a Lie subalgebra $\h \subset \k = \s_{\bar 0}$ which is admissible
in the sense of \cite{Figueroa-OFarrill:2018ilb}; that is, it contains
the rotational subalgebra $\r$ and, as a representation of $\r$,
$\h = \r \oplus V$ where $V \subset \k$ is a copy of the vector
representation.  Two super Lie pairs $(\s,\h)$ and $(\s,\h')$ are
isomorphic if there is an automorphism of $\s$ which maps $\h$
(isomorphically) to $\h'$.  As in
\cite[§3]{Figueroa-OFarrill:2018ilb}, our strategy in classifying
admissible super Lie pairs up to isomorphism will be to take each
kinematical Lie superalgebra $\s$ in Table~\ref{tab:klsa} in turn,
determine the admissible subalgebras $\h$ and study the action of the
automorphisms in Tables~\ref{tab:aut-klsa} and
\ref{tab:aut-klsa-extra} on the space of admissible subalgebras in
order to select one representative from each orbit.  In particular,
every admissible super Lie pair $(\s,\h)$ defines a unique admissible
Lie pair $(\k,\h)$ which, if effective and geometrically realisable,
is associated with a unique simply-connected kinematical homogeneous
spacetime $\Kgr/\Hgr$.  That being the case, we may think of the super
Lie pair $(\s,\h)$ as a homogeneous kinematical superspacetime which
superises $\Kgr/\Hgr$.

Without loss of generality -- since an admissible subalgebra $\h$
contains $\r$ -- the vectorial complement $V$ can be taken to be the
span of $\alpha B_i + \beta P_i$, $i=1,2,3$, for some
$\alpha,\beta \in \RR$ not both zero, since the spans of
$\{J_i, \alpha B_i + \beta P_i\}$ and of
$\{J_i, \alpha B_i + \beta P_i + \gamma J_i\}$ coincide for all
$\gamma \in \RR$.  We will often use the shorthand
$V = \alpha \B + \beta \P$.  The determination of the possible
admissible subalgebras can be found in
\cite[§§3.1-2]{Figueroa-OFarrill:2018ilb}, but we cannot simply import
the results of that paper wholesale because here we are only allowed
to act with automorphisms of $\s$ and not just of $\k$.

As in that paper, we will eventually change basis in the Lie
superalgebra $\s$ so that the admissible subalgebra $\h$ is spanned by
$\J$ and $\B$.  Hence in determining the possible super Lie pairs, we
will keep track of the required change of basis, ensuring, where
possible, that $(\s,\h)$ is reductive; that is, such that $H, P_i,
Q_a$ (defined by equation~\eqref{eq:quat-basis-s}) span a subspace $\m
\subset \s$ complementary to $\h$ and such that $[\h,\m]\subset \m$.
This is equivalent to requiring that the span $\m_{\bar 0}$ of $H,
P_i$ satisfies $[\h,\m_{\bar 0}] \subset \m_{\bar 0}$, since the $Q_i$
span $\s_{\bar 1}$ and $[\h, \s_{\bar 1} ] \subset \s_{\bar 1}$ by
virtue of $\s$ being a Lie superalgebra.

It follows by inspection of
\cite[§§3.1-2]{Figueroa-OFarrill:2018ilb} that the Lie superalgebras
$\s$ whose automorphisms are listed in Table~\ref{tab:aut-klsa} are
extensions of kinematical Lie algebras $\k$ for which \emph{any}
vectorial subspace $V = \alpha \B + \beta \P$ defines an admissible
subalgebra $\h = \r \oplus V \subset \k$.  It is then a simple matter
to determine the orbits of the action of the automorphisms listed in
Table~\ref{tab:aut-klsa} on the space of vectorial subspaces and hence
to arrive at a list of possible inequivalent super Lie pairs
$(\s, \h)$ for such $\s$.

It also follows by inspection of
\cite[§§3.1-2]{Figueroa-OFarrill:2018ilb} that, of the remaining Lie
superalgebras (i.e., those whose automorphisms are listed in
Table~\ref{tab:aut-klsa-extra}), most are extensions of kinematical Lie
algebras possessing a unique vectorial subspace $V$ for which
$\h = \r \oplus V$ is an admissible subalgebra.  The exceptions are
those Lie superalgebras
\hyperlink{KLSA23}{$\mathsf{S23}$}--\hyperlink{KLSA28}{$\mathsf{S28}$}
and \hyperlink{KLSA33}{$\mathsf{S33}_\lambda$}, which are extensions
of the kinematical Lie algebras \hyperlink{KLA14}{$\mathsf{K14}$} and
\hyperlink{KLA16}{$\mathsf{K16}$}, respectively, for which there are
precisely two vectorial subspaces leading to admissible subalgebras.

Let us concentrate first on the Lie superalgebras
\hyperlink{KLSA1}{$\mathsf{S1}$}--\hyperlink{KLSA15}{$\mathsf{S15}$},
whose automorphisms are listed in Table~\ref{tab:aut-klsa}.  As
mentioned above, for $V$ any vectorial subspace, $\h = \r
\oplus V$ is an admissible subalgebra.  We need to determine the
orbits of the action of the automorphisms in
Table~\ref{tab:aut-klsa}.  Since $V = \alpha \B + \beta \P$, this is
equivalent to studying the action of the matrix part $A$ of the
automorphism $(A,\mu,\qq)$ on nonzero
vectors $(\alpha,\beta) \in \RR^2$.  In fact, since $(\alpha,\beta)$
and $(\lambda\alpha,\lambda\beta)$ for $0 \neq \lambda \in \RR$ denote
the same vectorial subspace, we must study the action of the subgroup
of $\GL(2,\RR)$ defined by the matrices $A$ in the automorphism group
on the projective space $\RP^1$.  The map $(A,\mu,\qq) \mapsto A$
defines a group homomorphism from the automorphism group of a Lie
superalgebra $\s$ to $\GL(2,\RR)$.  We will let $\Agr$ denote the
image of this homomorphism: it is a subgroup of $\GL(2,\RR)$ and it is
the action of $\Agr$ on $\RP^1$ that we need to investigate.  Of
course, $\Agr$ depends on $\s$, even though we choose not to overload
the notation by making this dependence explicit.

It follows by inspection of Table~\ref{tab:aut-klsa}, that for $\s$
any of the Lie superalgebras \hyperlink{SAut2}{$\mathsf{S2}$},
\hyperlink{SAut4}{$\mathsf{S4}$}, \hyperlink{SAut5}{$\mathsf{S5}$},
\hyperlink{SAut11a}{$\mathsf{S11}_{\chi\geq 0}$},
\hyperlink{SAut13}{$\mathsf{S13}$} and
\hyperlink{SAut15}{$\mathsf{S15}$}, the subgroup
$\Agr \subset \GL(2,\RR)$ acts transitively on $\RP^1$ and hence for
such Lie superalgebras there is a unique admissible subalgebra spanned
by $\J$ and $\B$. 

In contrast, if $\s$ is any of the Lie superalgebras
\hyperlink{SAut1}{$\mathsf{S1}$}, \hyperlink{SAut3}{$\mathsf{S3}$},
\hyperlink{SAut6}{$\mathsf{S6}$}, \hyperlink{SAut7}{$\mathsf{S7}$},
\hyperlink{SAut8}{$\mathsf{S8}$},
\hyperlink{SAut9b}{$\mathsf{S9}_{\gamma=1,\lambda\in\RR}$},
\hyperlink{SAut12a}{$\mathsf{S12}_{\lambda\in\RR}$} and
\hyperlink{SAut14}{$\mathsf{S14}$}, the subgroup $\Agr \subset
\GL(2,\RR)$ acts with two orbits on $\RP^1$.  For example, consider
the Lie superalgebra \hyperlink{SAut1}{$\mathsf{S1}$}, for which any $A
\in \Agr$ takes the form
\begin{equation}
  \begin{pmatrix}
    a & \zero \\ c & d
  \end{pmatrix} \qquad\text{for some $a,c,d \in \RR$ with $a,d\neq 0$,}
\end{equation}
and act as
\begin{equation}
  \begin{pmatrix}
    \alpha \\ \beta 
  \end{pmatrix}\mapsto   \begin{pmatrix}
    a & \zero \\ c & d
  \end{pmatrix}   \begin{pmatrix}
    \alpha \\ \beta 
  \end{pmatrix} =   \begin{pmatrix}
    a \alpha \\ d \beta + c \alpha
  \end{pmatrix}.
\end{equation}
If $\alpha \neq 0$, we can choose $c = -d\beta/\alpha$ to bring
$(\alpha,\beta)$ to $(a\alpha, 0)$ which is projectively equivalent to
$(1,0)$.  On the other hand, if $\alpha = 0$, then we cannot change
that via automorphisms and hence we have $(0,\beta)$, which is
projectively equivalent to $(0,1)$.  In summary, we have two
inequivalent admissible subalgebras with vectorial subspaces $V=\B$
and $V=\P$.  The same result holds for the other Lie superalgebras in
this list.

For the cases where $V=\P$ we change basis in the Lie superalgebra
$\s$ so that the admissible subalgebra $\h$ is spanned by $\J$ and
$\B$.  This results in different brackets, which we now proceed to
list.

\begin{table}[h!]
  \centering
  \caption{Super Lie pairs (with $V=\P$)}
  \label{tab:slp-vp-1}
  \setlength{\extrarowheight}{2pt}
  \rowcolors{2}{blue!10}{white}
    \begin{tabular}{l|*{3}{>{$}l<{$}}*{3}{|>{$}c<{$}}}\toprule
      \multicolumn{1}{c|}{S\#} & \multicolumn{3}{c|}{$\k$ brackets} & \multicolumn{1}{c|}{$\hh$} & \multicolumn{1}{c|}{$\pp$} & \multicolumn{1}{c}{$[\sQ(s),\sQ(s)]$}\\
      \toprule
      \hyperlink{KLSA1}{1} & & & & \tfrac12 \kk & & -\sB(s\kk\sbar) \\
      \hyperlink{KLSA3}{3} & & & & & & |s|^2 H - \sB(s\kk\sbar) \\
      \hyperlink{KLSA6}{6} & & & & & & -\sB(s\kk\sbar) \\
      \hyperlink{KLSA7}{7} & [H,\P]=-\B & & & \kk & & -\sB(s\kk\sbar) \\
      \hyperlink{KLSA8}{8} & [H,\P]=-\B & & & & & -\sB(s\kk\sbar) \\
      \hyperlink{KLSA9}{9$_{\gamma=1,\lambda\in\RR}$} & [H,\B]=\B & [H,\P]=\P & & \tfrac12 (1 + \lambda \kk) & & -\sB(s\kk\sbar) \\
      \hyperlink{KLSA12}{12$_{\lambda\in\RR}$} & [H,\B]=\B & [H,\P] = \B + \P & & \tfrac12 (1 + \lambda \kk) & & -\sB(s\kk\sbar) \\
      \hyperlink{KLSA14}{14} & [H,\P] = \B & [\B,\P] = H & [\P,\P] = - \J & & \tfrac12 \kk & |s|^2 H + \sB(s\kk\sbar) \\
      \bottomrule
    \end{tabular}
\end{table}

Finally, if $\s$ is any of the Lie superalgebras
\hyperlink{SAut9a}{$\mathsf{S9}_{\gamma\neq 1,\lambda\in\RR}$} and
\hyperlink{SAut10a}{$\mathsf{S10}_{\gamma,\lambda\in\RR}$}, the
subgroup $\Agr \subset \GL(2,\RR)$ acts with three orbits.  Indeed,
the matrices $A \in \Agr$ are now diagonal and of the form
\begin{equation}
  \begin{pmatrix}
    a & \zero \\ \zero & d
  \end{pmatrix},
\end{equation}
where at least one of $a,d$ can take \emph{any} nonzero value.  If
$(\alpha,\beta)$ is such that $\alpha = 0$ or $\beta = 0$, we cannot
alter this via automorphisms and hence projectively we have either
$(1,0)$ or $(0,1)$.  If $\alpha\beta \neq 0$, then we can always bring
it to $(1,1)$ or $(-1,-1)$ via an automorphism, but these are
projectively equivalent.  In summary, we have three orbits,
corresponding to $V=\B$, $V = \P$ and $V = \B + \P$.

When $V = \P$, the Lie brackets of
\hyperlink{KLSA9}{$\mathsf{S9}_{\gamma\neq 1,\lambda\in\RR}$} in the
new basis are given by
\begin{equation}
 [H,\B] = \B, \quad [H,\P]=\gamma\P,\quad [H,\sQ(s)]=\sQ(\tfrac12 s
 (1 + \lambda\kk)) \quad\text{and}\quad [\sQ(s),\sQ(s)] =
 -\sB(s\kk\sbar),
\end{equation}
and those of \hyperlink{KLSA10}{$\mathsf{S10}_{\gamma,\lambda\in\RR}$}
by
\begin{equation}
  [H,\B] = \B, \quad [H,\P]=\gamma\P,\quad [H,\sQ(s)]=\sQ(\tfrac12 s
  (\gamma + \lambda\kk)) \quad\text{and}\quad [\sQ(s),\sQ(s)] =
  -\sP(s\kk\sbar).
\end{equation}

On the other hand, when $V = \B + \P$, the Lie brackets of
\hyperlink{KLSA9}{$\mathsf{S9}_{\gamma\neq 1,\lambda\in\RR}$} in the
new basis are given by
\begin{equation}
  \begin{aligned}[m]
    [H,\B] &= -\P\\
    [H,\P] &= \gamma\B + (1+\gamma)\P\\
  \end{aligned}
  \qquad\qquad
  \begin{aligned}[m]
    [H,\sQ(s)] &= \sQ(\tfrac12 s (1 + \lambda\kk))\\
    [\sQ(s),\sQ(s)] &= \tfrac{1}{1-\gamma}(\gamma \sB(s\kk\sbar) + \sP(s\kk\sbar)),
  \end{aligned}
\end{equation}
and those of \hyperlink{KLSA10}{$\mathsf{S10}_{\gamma,\lambda\in\RR}$}
by
\begin{equation}
  \begin{aligned}[m]
    [H,\B] &= -\P\\
    [H,\P] &= \gamma\B + (1+\gamma)\P\\
  \end{aligned}
  \qquad\qquad
  \begin{aligned}[m]
    [H,\sQ(s)] &= \sQ(\tfrac12 s (\gamma + \lambda\kk))\\
    [\sQ(s),\sQ(s)] &= \tfrac{1}{\gamma-1}(\sB(s\kk\sbar) + \sP(s\kk\sbar)).
  \end{aligned}
\end{equation}

Now we turn to the Lie superalgebras whose automorphisms are listed in
Table~\ref{tab:aut-klsa-extra}.  If $\s$ is one such Lie superalgebra,
not every vectorial subspace leads to an admissible subalgebra.  From
the results in \cite[§§3.1-2]{Figueroa-OFarrill:2018ilb} we have that
Lie superalgebras
\hyperlink{SAut16}{$\mathsf{S16}$}--\hyperlink{SAut22}{$\mathsf{S22}$}
admit a unique admissible subalgebra with $V = \B$, whereas for the
Lie superalgebras
\hyperlink{SAut29}{$\mathsf{S29}$}--\hyperlink{SAut32}{$\mathsf{S32}$},
\hyperlink{SAut34}{$\mathsf{S34}$} and
\hyperlink{SAut35}{$\mathsf{S35}_{\lambda\in\RR}$} also admit a unique
admissible subalgebra with $V = \P$.  Finally, the Lie superalgebras
\hyperlink{SAut23}{$\mathsf{S23}$}--\hyperlink{SAut28}{$\mathsf{S28}$}
and \hyperlink{SAut33a}{$\mathsf{S33}_{\lambda\in\RR}$} admit precisely
two admissible subalgebras with $V= \B$ and $V= \P$, which cannot be
related by automorphisms.

\begin{table}[h!]
  \centering
  \caption{More super Lie pairs (with $V=\P$)}
  \label{tab:slp-vp-2}
  \setlength{\extrarowheight}{2pt}
  \rowcolors{2}{blue!10}{white}
    \begin{tabular}{l|*{3}{>{$}l<{$}}*{3}{|>{$}c<{$}}}\toprule
      \multicolumn{1}{c|}{S\#} & \multicolumn{3}{c|}{$\k$ brackets} & \multicolumn{1}{c|}{$\hh$} & \multicolumn{1}{c|}{$\pp$} & \multicolumn{1}{c}{$[\sQ(s),\sQ(s)]$}\\
      \toprule
      \hyperlink{KLSA23}{23} & [\P,\P] = \P & & & \kk & & -\sB(s\kk\sbar) \\
      \hyperlink{KLSA24}{24} & [\P,\P] = \P & & & & & - \sB(s\kk\sbar) \\
      \hyperlink{KLSA25}{25} & [\P,\P] = \P & & & & & |s|^2 H \\
      \hyperlink{KLSA26}{26} & [\P,\P] = \P & & & & & |s|^2 H - \sB(s\kk\sbar) \\
      \hyperlink{KLSA27}{27} & [\P,\P]= \P & & & & \tfrac12  & |s|^2 H \\
      \hyperlink{KLSA28}{28} & [\P,\P]= \P & & & \tfrac12 \kk & \tfrac12 & |s|^2 H -\sP(s\kk\sbar) \\
      \hyperlink{KLSA29}{29} & [\P,\P]= \B & & & \kk & & -\sB(s\kk\sbar) \\
      \hyperlink{KLSA30}{30} & [\P,\P] = \B & & & & & -\sB(s\kk\sbar) \\
      \hyperlink{KLSA31}{31} & [\P,\P] = \B & & & & & |s|^2 H \\
      \hyperlink{KLSA32}{32} & [\P,\P] = \B & & & & & |s|^2 H - \sB(s\kk\sbar) \\
      \hyperlink{KLSA33}{33$_{\lambda\in\RR}$} & [H,\B] = \B & [\P,\P] = \P & & \tfrac12 (1+\lambda \kk) & & - \sB(s\kk\sbar) \\
      \hyperlink{KLSA34}{34} & [H,\P] = -\B & [\P,\P] = \B & & \tfrac12 \kk & & - \sB(s\kk\sbar) \\
      \hyperlink{KLSA35}{35$_{\lambda\in\RR}$} & [H,\P] = \P & [H,\B] =2\B & [\P,\P] = \B & 1+\lambda \kk & & - \sB(s\kk\sbar) \\
      \bottomrule
    \end{tabular}
\end{table}

Table~\ref{tab:super-lie-pairs} summarises the above results.  For
each Lie superalgebra $\s$ in Table~\ref{tab:klsa} it lists the
admissible subalgebras $\h$ and hence the possible super Lie pairs
$(\s,\h)$.  The notation for $\h$ is simply the generators of the
vectorial subspace $V \subset \h$, where the span of
$\alpha B_a + \beta P_a$ is abbreviated as $\alpha \B + \beta \P$.
The blue entries correspond to effective super Lie pairs, whereas the
green and greyed out correspond to non-effective super Lie pairs: the
green ones giving rise to aristotelian superspaces upon quotienting by
ideal.  In Section~\ref{sec:class-arist-lie}, we classified
aristotelian Lie superspaces by classifying their corresponding
aristotelian Lie superalgebras (see Table~\ref{tab:alsa}) and in
Section~\ref{sec:arist-super-lie} we exhibit the precise
correspondence between the aristotelian non-effective super Lie pairs
and the aristotelian superspaces (see
Table~\ref{tab:aristo-correspondence}).

\begin{table}[h!]
  \centering
  \caption{Summary of super Lie pairs}
  \label{tab:super-lie-pairs}
  \resizebox{\textwidth}{!}{
    \setlength{\extrarowheight}{2pt}
    \begin{tabular}{l|l*{3}{|>{$}c<{$}}}\toprule
      \multicolumn{1}{c|}{$\s$} & \multicolumn{1}{c|}{$\k$} & \multicolumn{3}{c}{$V \subset \h$}\\
      \toprule
      \hyperlink{KLSA1}{$\mathsf{S1}$} & \hyperlink{KLA1}{$\mathsf{K1}$} & \ari{\B} & \non{\P} & \\
      \hyperlink{KLSA2}{$\mathsf{S2}$} & \hyperlink{KLA1}{$\mathsf{K1}$} & \ari{\B} & & \\
      \hyperlink{KLSA3}{$\mathsf{S3}$} & \hyperlink{KLA1}{$\mathsf{K1}$} & \ari{\B} & \ari{\P} & \\
      \hyperlink{KLSA4}{$\mathsf{S4}$} & \hyperlink{KLA1}{$\mathsf{K1}$} & \ari{\B} & & \\
      \hyperlink{KLSA5}{$\mathsf{S5}$} & \hyperlink{KLA1}{$\mathsf{K1}$} & \ari{\B} & & \\
      \hyperlink{KLSA6}{$\mathsf{S6}$} & \hyperlink{KLA1}{$\mathsf{K1}$} & \ari{\B} & \non{\P} & \\
      \hyperlink{KLSA7}{$\mathsf{S7}$} & \hyperlink{KLA2}{$\mathsf{K2}$} & \eff{\B} & \non{\P} & \\
      \hyperlink{KLSA8}{$\mathsf{S8}$} & \hyperlink{KLA2}{$\mathsf{K2}$} & \eff{\B} & \non{\P} & \\
      \hyperlink{KLSA9}{$\mathsf{S9}_{\gamma\in[-1,1),\lambda\in\RR}$} & \hyperlink{KLA3}{$\mathsf{K3}_\gamma$} & \ari{\B} & \non{\P} & \eff{\B + \P}\\
      \hyperlink{KLSA9}{$\mathsf{S9}_{\gamma=1,\lambda\in\RR}$} &  \hyperlink{KLA3}{$\mathsf{K3}_{\gamma=1}$} & \ari{\B} & \non{\P} &\\
      \hyperlink{KLSA10}{$\mathsf{S10}_{\gamma\in[-1,1),\lambda\in\RR}$} & \hyperlink{KLA3}{$\mathsf{K3}_\gamma$} & \non{\B} & \ari{\P} & \eff{\B + \P} \\
      \hyperlink{KLSA11}{$\mathsf{S11}_{\chi\geq0}$} & \hyperlink{KLA4}{$\mathsf{K4}_\chi$} & \eff{\B} & & \\
      \bottomrule
    \end{tabular}
    \hspace{2cm}
    \begin{tabular}{l|l*{2}{|>{$}c<{$}}}\toprule
      \multicolumn{1}{c|}{$\s$} & \multicolumn{1}{c|}{$\k$} & \multicolumn{2}{c}{$V \subset \h$}\\
      \toprule
      \hyperlink{KLSA12}{$\mathsf{S12}_{\lambda\in\RR}$} & \hyperlink{KLA5}{$\mathsf{K5}$} & \eff{\B} & \non{\P} \\
      \hyperlink{KLSA13}{$\mathsf{S13}$} & \hyperlink{KLA6}{$\mathsf{K6}$} & \eff{\B} &  \\
      \hyperlink{KLSA14}{$\mathsf{S14}$} & \hyperlink{KLA8}{$\mathsf{K8}$} & \eff{\B} & \eff{\P } \\
      \hyperlink{KLSA15}{$\mathsf{S15}$} & \hyperlink{KLA11}{$\mathsf{K11}$} & \eff{\B} &  \\
      \hyperlink{KLSA16}{$\mathsf{S16}$} & \hyperlink{KLA12}{$\mathsf{K12}$} & \ari{\B} &  \\
      \hyperlink{KLSA17}{$\mathsf{S17}$} & \hyperlink{KLA12}{$\mathsf{K12}$} & \eff{\B} &  \\
      \hyperlink{KLSA18}{$\mathsf{S18}$} & \hyperlink{KLA12}{$\mathsf{K12}$} & \eff{\B} & \\ 
      \hyperlink{KLSA19}{$\mathsf{S19}$} & \hyperlink{KLA13}{$\mathsf{K13}$} & \ari{\B} & \\ 
      \hyperlink{KLSA20}{$\mathsf{S20}$} & \hyperlink{KLA13}{$\mathsf{K13}$} & \ari{\B} & \\ 
      \hyperlink{KLSA21}{$\mathsf{S21}$} & \hyperlink{KLA13}{$\mathsf{K13}$} & \eff{\B} & \\ 
      \hyperlink{KLSA22}{$\mathsf{S22}$} & \hyperlink{KLA13}{$\mathsf{K13}$} & \eff{\B} & \\ 
      \hyperlink{KLSA23}{$\mathsf{S23}$} & \hyperlink{KLA14}{$\mathsf{K14}$} & \ari{\B} & \non{\P} \\
      \bottomrule
    \end{tabular}
    \hspace{2cm}
    \begin{tabular}{l|l*{2}{|>{$}c<{$}}}\toprule
      \multicolumn{1}{c|}{$\s$} & \multicolumn{1}{c|}{$\k$} & \multicolumn{2}{c}{$V \subset \h$}\\
      \toprule
      \hyperlink{KLSA24}{$\mathsf{S24}$} & \hyperlink{KLA14}{$\mathsf{K14}$} & \ari{\B} & \non{\P} \\
      \hyperlink{KLSA25}{$\mathsf{S25}$} & \hyperlink{KLA14}{$\mathsf{K14}$} & \ari{\B} & \ari{\P} \\
      \hyperlink{KLSA26}{$\mathsf{S26}$} & \hyperlink{KLA14}{$\mathsf{K14}$} & \ari{\B} & \ari{\P} \\
      \hyperlink{KLSA27}{$\mathsf{S27}$} & \hyperlink{KLA14}{$\mathsf{K14}$} & \eff{\B} & \ari{\P} \\
      \hyperlink{KLSA28}{$\mathsf{S28}$} & \hyperlink{KLA14}{$\mathsf{K14}$} & \eff{\B} & \ari{\P} \\
      \hyperlink{KLSA29}{$\mathsf{S29}$} & \hyperlink{KLA15}{$\mathsf{K15}$} & \non{\P} & \\ 
      \hyperlink{KLSA30}{$\mathsf{S30}$} & \hyperlink{KLA15}{$\mathsf{K15}$} & \non{\P} & \\ 
      \hyperlink{KLSA31}{$\mathsf{S31}$} & \hyperlink{KLA15}{$\mathsf{K15}$} & \ari{\P} & \\ 
      \hyperlink{KLSA32}{$\mathsf{S32}$} & \hyperlink{KLA15}{$\mathsf{K15}$} & \ari{\P} & \\ 
      \hyperlink{KLSA33}{$\mathsf{S33}_{\lambda\in\RR}$} & \hyperlink{KLA16}{$\mathsf{K16}$} & \ari{\B} & \non{\P} \\
      \hyperlink{KLSA34}{$\mathsf{S34}$} & \hyperlink{KLA17}{$\mathsf{K17}$} & \non{\P} & \\
      \hyperlink{KLSA35}{$\mathsf{S35}_{\lambda\in\RR}$} & \hyperlink{KLA18}{$\mathsf{K18}$} & \non{\P} &\\
      \bottomrule    
    \end{tabular}
  }
  \caption*{The blue pairs (e.g., \eff{\scriptsize\B}) are effective; the
    green pairs (e.g., \ari{\scriptsize\B}) though not effective, give rise to
    aristotelian superspaces; whereas the greyed out pairs (e.g.,
    \non{\scriptsize\B}) are not effective and will not be considered further.}
\end{table}

\subsection{Effective super Lie pairs}
\label{sec:effective-super-lie}

Recall that a super Lie pair $(\s,\h)$ is said to be
\emph{effective} if $\h$ does not contain an ideal of $\s$.  Since
$\h \subset \k$ and contains the rotational subalgebra, which has
nonvanishing brackets with $\Q$, the only possible ideal of $\s$
contained in $\h$ would be the vectorial subspace $V \subset \h$.  It
is then a simple matter to inspect the super Lie pairs determined in
the previous section and select those for which $V$ is not an ideal of
$\s$.  Those super Lie pairs have been highlighted in blue in
Table~\ref{tab:super-lie-pairs}.  We now take each such super Lie pair
in turn, change basis if needed so that $V$ is spanned by $\B$, and
then list the resulting brackets in that basis.  Every such super Lie
pair $(\s,\h)$ determines a Lie pair $(\k,\h)$.  If the Lie pair
$(\k,\h)$ is effective (and geometrically realisable), then $(\s,\h)$
describes a homogeneous superisation of one of the spatially-isotropic
homogeneous spacetimes in \cite{Figueroa-OFarrill:2018ilb}.  We remark
that there are effective super Lie pairs $(\s,\h)$ for which the
underlying Lie pair $(\k,\h)$ is not effective.  In those cases, there
are no boosts on the body of the superspacetime, but instead there
are R-symmetries in the odd coordinates.

As usual, in writing the Lie brackets of $\s$ below we do not include
any bracket involving $\J$, which are given in
equation~\eqref{eq:klsa-brackets-quat} and instead give any non-zero
additional brackets.

\subsubsection{Galilean superspaces}
\label{sec:super-g}

Galilean spacetime is described by $(\k,\h)$ where $\k$ has the
additional bracket $[H,\B] = - \P$.  There are two possible
superisations $(\s,\h)$, with brackets
\begin{equation}
  [H, \sQ(s)] =
  \begin{cases}
    \sQ(s\kk) \\ 0
  \end{cases} \qquad\text{and}\qquad
  [\sQ(s), \sQ(s)] = - \sP(s\kk\sbar).
\end{equation}
These are associated with Lie superalgebras \hyperlink{KLSA7}{$\mathsf{S7}$}
and \hyperlink{KLSA8}{$\mathsf{S8}$} in Table~\ref{tab:klsa}.

\subsubsection{Galilean de Sitter superspace}
\label{sec:super-dsg}

Galilean de Sitter spacetime is described by $(\k,\h)$ where $\k$ has the
additional brackets $[H,\B] = - \P$ and $[H,\P] = -\B$.  There are two
one-parameter family of superisations $(\s,\h)$, with brackets
\begin{equation}
  [H, \sQ(s)] =\sQ(\tfrac12 s (\pm 1+\lambda\kk)) \qquad\text{and}\qquad
  [\sQ(s), \sQ(s)] = - \tfrac12 (\sB(s\kk\sbar) \mp \sP(s\kk\sbar))
\end{equation}
for $\lambda \in \RR$.  They are associated with Lie superalgebras
\hyperlink{KLSA9}{$\mathsf{S9}$$_{\gamma=-1,\lambda}$} and
\hyperlink{KLSA10}{$\mathsf{S10}$$_{\gamma=-1,\lambda}$}, respectively.

\subsubsection{Torsional galilean de Sitter superspaces}
\label{sec:super-tdsg}

Torsional galilean de Sitter spacetime is described by $(\k,\h)$ where
$\k$ has the additional brackets $[H,\B] = - \P$ and
$[H,\P] = \gamma \B + (1+ \gamma) \P$, where $\gamma\in(-1,1)$.  There
are two one-parameter family of superisations $(\s,\h)$, with brackets
\begin{equation}
  [H, \sQ(s)] =\sQ(\tfrac12 s (1+\lambda\kk)) \qquad\text{and}\qquad
  [\sQ(s), \sQ(s)] = \tfrac1{1-\gamma} (\gamma \sB(s\kk\sbar) + \sP(s\kk\sbar))
\end{equation}
and
\begin{equation}
  [H, \sQ(s)] =\sQ(\tfrac12 s (\gamma+\lambda\kk)) \qquad\text{and}\qquad
  [\sQ(s), \sQ(s)] = \tfrac1{\gamma-1} (\sB(s\kk\sbar) + \sP(s\kk\sbar))
\end{equation}
for $\lambda \in \RR$.  The associated Lie superalgebras are
\hyperlink{KLSA9}{$\mathsf{S9}$$_{\gamma,\lambda}$} and
\hyperlink{KLSA10}{$\mathsf{S10}$$_{\gamma,\lambda}$}, respectively.

For $\gamma=1$, with additional brackets $[H,\B] = -\P$ and $[H,\P] =
\B + 2 \P$, there is a one-parameter family of superisations, with brackets
\begin{equation}
  [H, \sQ(s)] =\sQ(\tfrac12 s (1+\lambda\kk)) \qquad\text{and}\qquad
  [\sQ(s), \sQ(s)] = \sB(s\kk\sbar) + \sP(s\kk\sbar).
\end{equation}
The associated Lie superalgebras are \hyperlink{KLSA12}{$\mathsf{S12}_{\lambda}$}.

\subsubsection{Galilean anti de Sitter superspace}
\label{sec:super-adsg}

Galilean anti de Sitter spacetime is described by $(\k,\h)$ where $\k$ has the
additional brackets $[H,\B] = -\P$ and $[H,\P] = \B$.  It admits a
superisation $(\s,\h)$, with brackets 
\begin{equation}
  [H, \sQ(s)] =\sQ(\tfrac12 s \jj) \qquad\text{and}\qquad
  [\sQ(s), \sQ(s)] = - \sB(s\ii\sbar) + \sP(s\kk\sbar),
\end{equation}
which corresponds to the Lie superalgebra \hyperlink{KLSA11}{$\mathsf{S11}_{\chi = 0}$}, after changing basis the sign of $\P$.

\subsubsection{Torsional galilean anti de Sitter superspace}
\label{sec:super-tadsg}

Torsional galilean anti de Sitter spacetime is described by $(\k,\h)$ where $\k$ has the
additional brackets $[H,\B] = \chi \B + \P$ and $[H,\P] = \chi \P -
\B$, where $\chi > 0$.  There is a unique superisation $(\s,\h)$, with brackets
\begin{equation}
  [H, \sQ(s)] =\sQ(\tfrac12 s (\chi + \jj)) \qquad\text{and}\qquad
  [\sQ(s), \sQ(s)] = - \sB(s\ii\sbar) - \sP(s\kk\sbar).
\end{equation}
For uniformity, we change basis so that $[H,\B] = -\P$ as for all
galilean spacetimes.  Then the resulting super Lie pair $(\s,\h)$ is
determined by the brackets $[H,\B] = -\P$, $[H,\P] = (1+\chi^2)\B +
2\chi \P$ and, in addition,
\begin{equation}
  [H, \sQ(s)] =\sQ(\tfrac12 s (\chi + \jj)) \qquad\text{and}\qquad
  [\sQ(s), \sQ(s)] = \sB(s\kk(\chi + \jj)\sbar) + \sP(s\kk\sbar),
\end{equation}
corresponding to the Lie superalgebra \hyperlink{KLSA11}{$\mathsf{S11}$$_\chi$}.

\subsubsection{Carrollian superspace}
\label{sec:super-c}

Carrollian spacetime is described by $(\k,\h)$ where $\k$ has the
additional brackets $[\B,\P] = H$.  It admits a superisation
$(\s,\h)$, with brackets
\begin{equation}
  [\sQ(s), \sQ(s)] = |s|^2 H,
\end{equation}
which corresponds to the Lie superalgebra \hyperlink{KLSA13}{$\mathsf{S13}$}.

\subsubsection{Minkowski superspace}
\label{sec:super-m}

Minkowski superspace arises as a superisation of Minkowski
spacetime, described by $(\k,\h)$ with brackets $[H,\B] = -\P$,
$[\B,\P] = H$ and $[\B,\B] = -\J$ and in addition
\begin{equation}
  [\sB(\beta),\sQ(s)] = \sQ(\tfrac12\beta s \kk)
  \qquad\text{and}\qquad
  [\sQ(s),\sQ(s)] = |s|^2 H - \sP(s\kk\sbar).
\end{equation}
This is, of course, the Poincaré superalgebra \hyperlink{KLSA14}{$\mathsf{S14}$}.

\subsubsection{Carrollian anti de Sitter superspace}
\label{sec:super-adsc}

Carrollian anti de Sitter spacetime is described as $(\k,\h)$ where
the $\k$ brackets are given by $[H,\P] = \B$, $[\B,\P] = H$ and
$[\P,\P] = -\J$.  It admits a unique superisation $(\s,\h)$ with
brackets (we have rotated $\kk$ to $\ii$)
\begin{equation}
  [\sP(\pi), \sQ(s)] = \sQ(\tfrac12 \pi s\ii) \qquad\text{and}\qquad
  [\sQ(s),\sQ(s)] = |s|^2 H + \sB(s\ii\sbar).
\end{equation}
We remark that just as with carrollian anti de Sitter and Minkowski
spacetimes, which are both homogeneous spacetimes of the Poincaré
group, their superisations have isomorphic supersymmetry algebras:
namely, the Poincaré superalgebra \hyperlink{KLSA14}{$\mathsf{S14}$}.

\subsubsection{Anti de Sitter superspace}
\label{sec:super-ads}

Anti de Sitter spacetime is described kinematically as $(\k,\h)$ with
brackets
\begin{equation}
  [H,\B] =-\P, \qquad [H,\P] = \B, \qquad [\B,\P] = H, \qquad [\B,\B]
  = -\J \qquad\text{and}\qquad [\P,\P] = -\J.
\end{equation}
It admits a unique superisation $(\s,\h)$, with additional brackets
(where we have rotated $(\ii,\jj,\kk) \mapsto (\kk,\ii,\jj)$ for uniformity)
\begin{gather}
    [H,\sQ(s)] = \sQ(\tfrac12 s \jj), \qquad [\sB(\beta), \sQ(s)] =
    \sQ(\tfrac12 \beta s \kk), \qquad [\sP(\pi),\sQ(s)] = \sQ(\tfrac12
    \pi s\ii) \nonumber \\
    \qquad\text{and}\qquad [\sQ(s),\sQ(s)] = |s|^2 H +
    \sJ(s\jj\sbar) + \sB(s\ii\sbar) - \sP(s\kk\sbar).
\end{gather}
The associated Lie superalgebra is \hyperlink{KLSA15}{$\mathsf{S15}$}, which
is isomorphic to $\osp(1|4)$.

\subsubsection{Super-spacetimes extending $\RR \times S^3$}
\label{sec:super-rxS3}

These correspond to the effective super Lie pairs associated with the
Lie superalgebras \hyperlink{KLSA21}{$\mathsf{S21}$} and
\hyperlink{KLSA22}{$\mathsf{S22}$}.  The super Lie pairs $(\s,\h)$ are
effective, but the underlying Lie pair $(\k,\h)$ is not.  Indeed, the
brackets of $\k$ are now $[\B,\B] = \B$ and $[\P,\P]= \J - \B$, from
where we see that $\B$ spans an ideal of $\k$; although not one of
$\s$, due to the brackets
\begin{equation}
  [\sB(\beta), \sQ(s)] = \sQ(\tfrac12\beta s) \qquad\text{and}\qquad
  [\sQ(s),\sQ(s)] = |s|^2 H,
\end{equation}
for $\s$ the Lie superalgebra \hyperlink{KLSA21}{$\mathsf{S21}$} or
\begin{equation}
  [H, \sQ(s)] = \sQ(\tfrac12 s \kk), \qquad [\sB(\beta), \sQ(s)] =
  \sQ(\tfrac12\beta s) \qquad\text{and}\qquad [\sQ(s),\sQ(s)] = |s|^2
  H - \sB(s\kk\sbar),
\end{equation}
for $\s$ the Lie superalgebra \hyperlink{KLSA22}{$\mathsf{S22}$}.  In both
superspaces, $\B$ do not generate boosts but R-symmetries.  The
underlying spacetime in both cases is the Einstein static universe
$\RR \times S^3$.

\subsubsection{Super-spacetimes extending $\RR \times H^3$}
\label{sec:super-rxH3}

These correspond to the effective super Lie pairs associated with the
Lie superalgebras \hyperlink{KLSA17}{$\mathsf{S17}$} and
\hyperlink{KLSA18}{$\mathsf{S18}$}.  The super Lie pairs $(\s,\h)$ are
effective, but the underlying Lie pair $(\k,\h)$ is not.  Indeed, the
brackets of $\k$ are $[\B,\B] = \B$ and $[\P,\P] = \B - \J$, so that
$\B$ span an ideal $\v\subset \k$.  The resulting aristotelian
spacetime $(\k/\v,\r)$ is the hyperbolic version of the Einstein
static universe \hyperlink{A23m}{$\RR \times H^3$}.

For $\s$ the Lie superalgebra \hyperlink{KLSA17}{$\mathsf{S17}$}, the brackets are
\begin{equation}
  [\sB(\beta), \sQ(s)] = \sQ(\tfrac12\beta s) \qquad\text{and}\qquad
  [\sQ(s),\sQ(s)] = |s|^2 H,
\end{equation}
so that $\B$ does not span an ideal of $\s$.  In other words, $\B$ do
not generate boosts in the underlying homogeneous spacetime, but
rather R-symmetries.

A similar story holds for $\s$ the Lie superalgebra
\hyperlink{KLSA18}{$\mathsf{S18}$}, with the additional brackets
\begin{equation}
  [H, \sQ(s)] = \sQ(\tfrac12 s \kk), \qquad [\sB(\beta), \sQ(s)] =
  \sQ(\tfrac12\beta s) \qquad\text{and}\qquad [\sQ(s),\sQ(s)] = |s|^2
  H - \sB(s\kk\sbar).
\end{equation}
Again, $\B$ are to be interpreted as R-symmetries.

\subsubsection{Super-spacetimes extending the static aristotelian spacetime}
\label{sec:super-S}

This corresponds to the Lie superalgebras \hyperlink{KLSA27}{$\mathsf{S27}$}
and \hyperlink{KLSA28}{$\mathsf{S28}$}.  In either case the resulting super
Lie pair $(\s,\h)$ is effective, but the underlying Lie pair $(\k,\h)$
is not since $[\B,\B] = \B$ spans an ideal of $\k$.  The homogeneous
spacetime associated with the non-effective $(\k,\h)$ is the
aristotelian static spacetime \hyperlink{A21}{$\zS$}.

As in the previous cases, the generators $\B$ do not act as boosts but
rather as R-symmetries, as evinced by the brackets:
\begin{equation}
  [\sB(\beta), \sQ(s)] = \sQ(\tfrac12\beta s) \qquad\text{and}\qquad
  [\sQ(s),\sQ(s)] = |s|^2 H.  
\end{equation}
for $\s$ the Lie superalgebra \hyperlink{KLSA27}{$\mathsf{S27}$}, or
\begin{equation}
  [H, \sQ(s)] = \sQ(\tfrac12 s\kk), \qquad
  [\sB(\beta), \sQ(s)] = \sQ(\tfrac12\beta s) \qquad\text{and}\qquad
  [\sQ(s),\sQ(s)] = |s|^2 H - \sB(s\kk\sbar).
\end{equation}
for $\s$ the Lie superalgebra \hyperlink{KLSA28}{$\mathsf{S28}$}.

\subsection{Aristotelian homogeneous superspaces}
\label{sec:arist-super-lie}

The super Lie pairs $(\s,\h)$ in green in
Table~\ref{tab:super-lie-pairs} are such that the vectorial subspace
$V \subset \h$ is an ideal $\v$ of $\s$.  Quotienting $\s$ by this
ideal yields a Lie superalgebra $\sa \cong \s/\v$ with
$\a = \sa_{\bar 0}$ an aristotelian Lie algebra
(see~\cite[App.~A]{Figueroa-OFarrill:2018ilb} for a classification).
The resulting aristotelian super Lie pair $(\sa,\r)$ is effective by
construction and geometrically realisable.  It is then a simple matter
to identify the aristotelian Lie superalgebra to which each of those
non-effective super Lie pairs in Table~\ref{tab:super-lie-pairs}
leads.  We summarise this in Table~\ref{tab:aristo-correspondence},
which exhibits the correspondence between aristotelian super Lie pairs
in Table~\ref{tab:super-lie-pairs} and aristotelian Lie superalgebras
in Table~\ref{tab:alsa}.  We identify the super Lie pair $(\s,\h)$ by
the label for $\s$ as in Table~\ref{tab:klsa} and the ideal
$\v \subset \h$.

\begin{table}[h!]
  \centering
  \caption{Correspondence between non-effective super Lie pairs and
    aristotelian superalgebras}
  \label{tab:aristo-correspondence}
  \resizebox{\textwidth}{!}{
    \rowcolors{2}{blue!10}{white}
    \begin{tabular}{l|>{$}l<{$}|l}\toprule
      \multicolumn{1}{c|}{$\s$} & \multicolumn{1}{c|}{$\v$} & \multicolumn{1}{c}{$\sa$}\\\midrule
      \hyperlink{KLSA1}{$\mathsf{S1}$} & \B & \hyperlink{ALSA36}{$\mathsf{S36}$} \\
      \hyperlink{KLSA2}{$\mathsf{S2}$} & \B & \hyperlink{ALSA39}{$\mathsf{S39}$} \\
      \hyperlink{KLSA3}{$\mathsf{S3}$} & \B & \hyperlink{ALSA39}{$\mathsf{S39}$} \\
      \hyperlink{KLSA3}{$\mathsf{S3}$} & \P & \hyperlink{ALSA38}{$\mathsf{S38}$} \\
      \hyperlink{KLSA4}{$\mathsf{S4}$} & \B & \hyperlink{ALSA38}{$\mathsf{S38}$} \\
      \hyperlink{KLSA5}{$\mathsf{S5}$} & \B & \hyperlink{ALSA37}{$\mathsf{S37}$} \\
      \hyperlink{KLSA6}{$\mathsf{S6}$} & \B & \hyperlink{ALSA37}{$\mathsf{S37}$}\\
      \hyperlink{KLSA9}{$\mathsf{S9}_{\gamma\in[-1,1),\lambda\in\RR}$} & \B & \hyperlink{ALSA40}{$\mathsf{S40}_\lambda$} \\
      \hyperlink{KLSA9}{$\mathsf{S9}_{\gamma=1,\lambda\in\RR}$} & \B &  \hyperlink{ALSA40}{$\mathsf{S40}_\lambda$} \\
      \bottomrule
    \end{tabular}
    \hspace{1cm}
    \begin{tabular}{l|>{$}l<{$}|l} \toprule
      \multicolumn{1}{c|}{$\s$} & \multicolumn{1}{c|}{$\v$} & \multicolumn{1}{c}{$\sa$}\\\midrule
      \hyperlink{KLSA10}{$\mathsf{S10}_{\gamma\in[-1,0)\cup(0,1),\lambda\in\RR}$} & \P &  \hyperlink{ALSA40}{$\mathsf{S40}_\lambda$} \\
      \hyperlink{KLSA10}{$\mathsf{S10}_{\gamma=0,\lambda\neq 0}$} & \P & \hyperlink{ALSA36}{$\mathsf{S36}$} \\
      \hyperlink{KLSA10}{$\mathsf{S10}_{\gamma=0,\lambda= 0}$} & \P & \hyperlink{ALSA37}{$\mathsf{S37}$} \\
      \hyperlink{KLSA16}{$\mathsf{S16}$} & \B & \hyperlink{ALSA43}{$\mathsf{S43}$} \\
      \hyperlink{KLSA19}{$\mathsf{S19}$} & \B & \hyperlink{ALSA42}{$\mathsf{S42}$} \\
      \hyperlink{KLSA20}{$\mathsf{S20}$} & \B & \hyperlink{ALSA41}{$\mathsf{S41}$} \\
      \hyperlink{KLSA23}{$\mathsf{S23}$} & \B & \hyperlink{ALSA36}{$\mathsf{S36}$} \\
      \hyperlink{KLSA24}{$\mathsf{S24}$} & \B & \hyperlink{ALSA37}{$\mathsf{S37}$} \\
      \bottomrule
    \end{tabular}
    \hspace{1cm}
    \begin{tabular}{l|>{$}l<{$}|l} \toprule
      \multicolumn{1}{c|}{$\s$} & \multicolumn{1}{c|}{$\v$} & \multicolumn{1}{c}{$\sa$}\\\midrule
      \hyperlink{KLSA25}{$\mathsf{S25}$} & \B & \hyperlink{ALSA38}{$\mathsf{S38}$} \\
      \hyperlink{KLSA25}{$\mathsf{S25}$} & \P & \hyperlink{ALSA38}{$\mathsf{S38}$} \\
      \hyperlink{KLSA26}{$\mathsf{S26}$} & \B & \hyperlink{ALSA39}{$\mathsf{S39}$} \\
      \hyperlink{KLSA26}{$\mathsf{S26}$} & \P & \hyperlink{ALSA38}{$\mathsf{S38}$} \\
      \hyperlink{KLSA27}{$\mathsf{S27}$} & \P & \hyperlink{ALSA41}{$\mathsf{S41}$} \\
      \hyperlink{KLSA28}{$\mathsf{S28}$} & \P & \hyperlink{ALSA42}{$\mathsf{S42}$} \\
      \hyperlink{KLSA31}{$\mathsf{S31}$} & \P & \hyperlink{ALSA38}{$\mathsf{S38}$} \\
      \hyperlink{KLSA32}{$\mathsf{S32}$} & \P & \hyperlink{ALSA38}{$\mathsf{S38}$} \\
      \hyperlink{KLSA33}{$\mathsf{S33}_{\lambda\in\RR}$} & \B & \hyperlink{ALSA40}{$\mathsf{S40}_\lambda$} \\
      \bottomrule
    \end{tabular}
  }
\end{table}

\subsection{Summary}
\label{sec:summary-3}

Table~\ref{tab:superspaces} lists the homogeneous superspaces we have
classified in this paper. Each superspacetime is a superisation of an
underlying spatially-isotropic, homogeneous (kinematical or
aristotelian) spacetime, which we list in Table~\ref{tab:spacetimes},
which is borrowed from \cite{Figueroa-OFarrill:2018ilb} (see also
\cite{Figueroa-OFarrill:2019sex}), to which we refer the reader for a
detailed discussion of these spacetimes. Let us recall that
Table~\ref{tab:spacetimes} is divided into five sections,
corresponding to the different invariant structures which the
homogeneous spacetimes admit, as recalled in the introduction.  We
have a similar division of Table~\ref{tab:superspaces}: with the
superisations of spacetimes admitting a lorentzian, galilean,
carrollian, aristotelian (with R-symmetries) and aristotelian (without
R-symmetries) structures, respectively.  All spacetimes admit
superisations with the exception of the riemannian spaces, de Sitter
spacetime ($\hyperlink{S2}{\zdS}_4$) and two of the carrollian
spacetimes: carrollian de sitter ($\hyperlink{S14}{\zdSC}$) and the
carrollian light-cone ($\hyperlink{S16}{\zLC}$).

\begin{table}[h!]
  \centering
  \caption{Simply-connected spatially-isotropic homogeneous superspaces}
  \label{tab:superspaces}
  \setlength{\extrarowheight}{2pt}
  \rowcolors{2}{blue!10}{white}
  \begin{tabular}{l|l|l|l*{4}{|>{$}c<{$}}}\toprule
    \multicolumn{1}{c|}{SM\#} & \multicolumn{1}{c|}{M} & \multicolumn{1}{c|}{$\s$} & \multicolumn{1}{c|}{$\k$ (or $\a$)} & \multicolumn{1}{c|}{$\hh$}& \multicolumn{1}{c|}{$\bb$} & \multicolumn{1}{c|}{$\pp$} & \multicolumn{1}{c}{$[\sQ(s),\sQ(s)]$} \\
    \toprule
    \hypertarget{SM1}{1} & \hyperlink{S1}{$\MM^4$} & \hyperlink{KLSA14}{$\mathsf{S14}$} & \hyperlink{KLA8}{$\mathsf{K8}$} & & \tfrac12 \kk & & |s|^2 H - \sP(s\kk\sbar) \\
    \hypertarget{SM2}{2} & \hyperlink{S3}{$\zAdS_4$} & \hyperlink{KLSA15}{$\mathsf{S15}$} & \hyperlink{KLA11}{$\mathsf{K11}$} & \tfrac12 \jj & \tfrac12 \kk & \tfrac12 \ii & |s|^2 H + \sJ(s\jj\sbar) + \sB(s\ii\sbar) - \sP(s\kk\sbar) \\
    \midrule
    \hypertarget{SM3}{3} & \hyperlink{S7}{$\zG$} & \hyperlink{KLSA7}{$\mathsf{S7}$} & \hyperlink{KLA2}{$\mathsf{K2}$} & \kk & & & -\sP(s\kk\sbar) \\
    \hypertarget{SM4}{4} & \hyperlink{S7}{$\zG$} & \hyperlink{KLSA8}{$\mathsf{S8}$} & \hyperlink{KLA2}{$\mathsf{K2}$} & & & & -\sP(s\kk\sbar)  \\
    \hypertarget{SM5}{5$_{\lambda\in\RR}$} & \hyperlink{S8}{$\zdSG$} & \hyperlink{KLSA9}{$\mathsf{S9}_{-1,\lambda}$}& \hyperlink{KLA3}{$\mathsf{K3}_{-1}$} & \tfrac12 (1 + \lambda \kk) & & & -\tfrac12 (\sB(s\kk\sbar) - \sP(s\kk\sbar)) \\      
    \hypertarget{SM6}{6$_{\lambda\in\RR}$} & \hyperlink{S8}{$\zdSG$} & \hyperlink{KLSA10}{$\mathsf{S10}_{-1,\lambda}$} & \hyperlink{KLA3}{$\mathsf{K3}_{-1}$} & \tfrac12 (-1 + \lambda \kk) & & & -\tfrac12 (\sB(s\kk\sbar) + \sP(s\kk\sbar))  \\
    \hypertarget{SM7}{7$_{\gamma\in(-1,1),\lambda\in\RR}$} & \hyperlink{S9}{$\ztdSG_\gamma$} & \hyperlink{KLSA9}{$\mathsf{S9}_{\gamma,\lambda}$} & \hyperlink{KLA3}{$\mathsf{K3}_\gamma$} & \tfrac12 (1 + \lambda \kk) & & & \tfrac{1}{1-\gamma}(\gamma\sB(s\kk\sbar) + \sP(s\kk\sbar)) \\
    \hypertarget{SM8}{8$_{\gamma\in(-1,1),\lambda\in\RR}$} & \hyperlink{S9}{$\ztdSG_\gamma$} &  \hyperlink{KLSA10}{$\mathsf{S10}_{\gamma,\lambda}$} & \hyperlink{KLA3}{$\mathsf{K3}_\gamma$} & \tfrac12 (\gamma + \lambda \kk) & & & \tfrac1{\gamma-1}(\sB(s\kk\sbar) + \sP(s\kk\sbar)) \\
    \hypertarget{SM9}{9$_{\lambda\in\RR}$} & \hyperlink{S9}{$\ztdSG_{\gamma=1}$} & \hyperlink{KLSA12}{$\mathsf{S12}_\lambda$} & \hyperlink{KLA3}{$\mathsf{K3}_1$} & \tfrac12 (1 + \lambda \kk) & & & \sB(s\kk\sbar) + \sP(s\kk\sbar) \\
    \hypertarget{SM10}{10} & \hyperlink{S10}{$\zAdSG$} & \hyperlink{KLSA11}{$\mathsf{S11}_0$} & \hyperlink{KLA4}{$\mathsf{K4}_0$} & \tfrac12 \jj & & & -\sB(s\ii\sbar) + \sP(s\kk\sbar) \\
    \hypertarget{SM11}{11$_{\chi>0}$} & \hyperlink{S11}{$\ztAdSG_\chi$} & \hyperlink{KLSA11}{$\mathsf{S11}_\chi$} & \hyperlink{KLA4}{$\mathsf{K4}_\chi$} & \tfrac12 (\chi + \jj) & & & \sB(s\kk(\chi + \jj)\sbar) + \sP(s\kk\sbar)  \\
    \midrule
    \hypertarget{SM12}{12} & \hyperlink{S13}{$\zC$} & \hyperlink{KLSA13}{$\mathsf{S13}$} & \hyperlink{KLA6}{$\mathsf{K6}$} & & & & |s|^2 H \\
    \hypertarget{SM13}{13} & \hyperlink{S15}{$\zAdSC$} & \hyperlink{KLSA14}{$\mathsf{S14}$} & \hyperlink{KLA8}{$\mathsf{K8}$} & & & \tfrac12 \ii & |s|^2 H + \sB(s\ii\sbar) \\
    \midrule
    \hypertarget{SM14}{14} & \hyperlink{A23m}{$\RR \times H^3$} & \hyperlink{KLSA17}{$\mathsf{S17}$} & \hyperlink{KLA12}{$\mathsf{K12}$} & & \tfrac12 & & |s|^2 H \\
    \hypertarget{SM15}{15} & \hyperlink{A23m}{$\RR \times H^3$} & \hyperlink{KLSA18}{$\mathsf{S18}$} & \hyperlink{KLA12}{$\mathsf{K12}$} & \tfrac12 \kk & \tfrac12 & & |s|^2 H - \sB(s \kk \sbar) \\
    \hypertarget{SM16}{16} & \hyperlink{A23p}{$\RR \times S^3$} & \hyperlink{KLSA21}{$\mathsf{S21}$} & \hyperlink{KLA13}{$\mathsf{K13}$} & & \tfrac12 & & |s|^2 H \\
    \hypertarget{SM17}{17} & \hyperlink{A23p}{$\RR \times S^3$} & \hyperlink{KLSA22}{$\mathsf{S22}$} & \hyperlink{KLA13}{$\mathsf{K13}$} & \tfrac12 \kk & \tfrac12 & & |s|^2 H - \sB(s\kk\sbar) \\
    \hypertarget{SM18}{18} & \hyperlink{A21}{$\zS$} & \hyperlink{KLSA27}{$\mathsf{S27}$} & \hyperlink{KLA14}{$\mathsf{K14}$} & & \tfrac12 & & |s|^2 H \\
    \hypertarget{SM19}{19} & \hyperlink{A21}{$\zS$} & \hyperlink{KLSA28}{$\mathsf{S28}$} & \hyperlink{KLA14}{$\mathsf{K14}$} & \tfrac12 \kk & \tfrac12 & & |s|^2 H - \sB(s\kk\sbar) \\
    \midrule
    \hypertarget{SM20}{20} &  \hyperlink{A21}{$\zS$} & \hyperlink{ALSA36}{$\mathsf{S36}$} & \hyperlink{ALA1}{$\mathsf{A1}$} & \kk & - & & - \sP(s\kk\sbar) \\
    \hypertarget{SM21}{21} &  \hyperlink{A21}{$\zS$} & \hyperlink{ALSA37}{$\mathsf{S37}$} & \hyperlink{ALA1}{$\mathsf{A1}$} & & -& & - \sP(s\kk\sbar)  \\
    \hypertarget{SM22}{22} &  \hyperlink{A21}{$\zS$} & \hyperlink{ALSA38}{$\mathsf{S38}$} & \hyperlink{ALA1}{$\mathsf{A1}$} & & - & & |s|^2 H  \\
    \hypertarget{SM23}{23} & \hyperlink{A21}{$\zS$} & \hyperlink{ALSA39}{$\mathsf{S39}$} & \hyperlink{ALA1}{$\mathsf{A1}$} & & - & & |s|^2 H - \sP(s\kk\sbar)  \\
    \hypertarget{SM24}{24$_{\lambda\in\RR}$} &  \hyperlink{A22}{$\zTS$} & \hyperlink{ALSA40}{$\mathsf{S40}_\lambda$} & \hyperlink{ALA2}{$\mathsf{A2}$} & \tfrac12(1 + \lambda \kk) & - & & -\sP(s\kk\sbar)  \\
    \hypertarget{SM25}{25} & \hyperlink{A23p}{$\RR\times S^3$} & \hyperlink{ALSA41}{$\mathsf{S41}$} & \hyperlink{ALA3p}{$\mathsf{A3}_+$} & & - & \tfrac12 & |s|^2 H  \\
    \hypertarget{SM26}{26} & \hyperlink{A23p}{$\RR\times S^3$} & \hyperlink{ALSA42}{$\mathsf{S42}$} & \hyperlink{ALA3p}{$\mathsf{A3}_+$} &\kk & - & \tfrac12 & |s|^2 H - \sJ(s\kk\sbar) - \sP(s\kk\sbar)  \\
    \hypertarget{SM27}{27} & \hyperlink{A23m}{$\RR\times H^3$} & \hyperlink{ALSA43}{$\mathsf{S43}$} & \hyperlink{ALA3m}{$\mathsf{A3}_-$} & & - & \tfrac12\ii &  \sJ(s\jj\sbar) - \sP(s\kk\sbar)  \\
    \bottomrule
  \end{tabular}
  \caption*{The first column is our identifier for the superspace,
    whereas the second column is the underlying homogeneous spacetime
    it superises.  The next two columns are the isomorphism classes of
    kinematical Lie superalgebra and kinematical Lie algebra,
    respectively.  The next columns specify the brackets
    of $\s$ not of the form $[\J,-]$ in a basis where $\h$ is spanned
    by $\J$ and $\B$.  As explained in Section~\ref{sec:quat-form},
    supercharges $\sQ(s)$ are parametrised by $s \in \HH$, whereas
    $\sJ(\omega)$, $\sB(\beta)$ and $\sP(\pi)$ are parametrised by
    $\omega,\beta,\pi \in \Im\HH$. The brackets are given by
    $[H,\sQ(s)] = \sQ(s\hh)$, $[\sB(\beta),\sQ(s)]=\sQ(\beta s \bb)$
    and $[\sP(\pi),\sQ(s)] = \sQ(\pi s \pp)$, for some
    $\hh,\bb,\pp\in\HH$. The table is divided into five sections from
    top to bottom: lorentzian, galilean, carrollian, aristotelian with
    R-symmetries and aristotelian.}
\end{table}
 
\subsection{Low-rank invariants}
\label{sec:low-rank-invariants}

In this section, we exhibit the low-rank invariants of the homogeneous
superspaces in Table~\ref{tab:superspaces}, all of which are
reductive.  Indeed, a homogeneous supermanifold with super Lie pair
$(\s,\h)$, where $\h \subset \k = \s_{\bar 0}$, is reductive if and
only if so is the underlying homogeneous manifold $(\k, \h)$.  This is
because if $\k = \h \oplus \m$ is a reductive split, then so is
$\s = \h \oplus (\m \oplus S)$, with $S = \s_{\bar 1}$: the bracket
$[\h,\m] \subset \m$ because $(\k,\h)$ is reductive and the bracket
$[\h, S] \subset S$ because $\h \in \s_{\bar 0}$ and
$S = \s_{\bar 1}$. In \cite{Figueroa-OFarrill:2018ilb} it is shown
that all the homogeneous spacetimes in Table~\ref{tab:spacetimes} are
reductive with the exception of the carrollian light-cone $\zLC$,
which in any case does not admit any $(4|4)$-dimensional
superisation. Hence all the superspaces in Table~\ref{tab:superspaces}
are reductive.

Let $(\s,\h)$ be the super Lie pair associated with one of the
homogeneous superspaces in Table~\ref{tab:superspaces}.  We will write
$\s = \h \oplus \m$, where we have promoted $\m$ to a vector
superspace $\m = \m_{\bar 0} \oplus \m_{\bar 1}$, with $\k = \h
\oplus \m_{\bar 0}$ a reductive split and $\m_{\bar 1} = \s_{\bar 1} =
S$.

Invariant tensors on the simply-connected superspace with super Lie
pair $(\s,\h)$ are in one-to-one correspondence with $\h$-invariant
tensors on $\m$.  Since $\h$ contains the rotational subalgebra $\r
\cong \so(3)$, $\h$-invariant tensors are in particular also
rotationally invariant.  It is not difficult to write down the
rotationally invariant tensors of low order.

As an $\r$-module, $\m = \RR \oplus V \oplus S$, where $\RR$ is the
trivial one-dimensional representation, $V$ is the vector
three-dimensional representation and $S$ is the spinor
four-dimensional representation.  Under the isomorphism $\r = \sp(1) =
\Im \HH$, $\m = \RR \oplus \Im \HH \oplus 
\HH$, where the integrated action of a unit-norm quaternion $u \in
\Sp(1)$ on $(h, p, s) \in \m$ is given by
\begin{equation}
  u \cdot (h, p, s) = (h, u p \bar u, u s).
\end{equation}

Let $H, P_i, Q_a$ denote a basis for $\m$, where $P_i$ and $Q_a$ have
been defined in equation \eqref{eq:quat-basis-s}.  We let $\eta,
\pi^i, \theta^a$ denote the canonically dual basis for $\m^*$.
There is a rotationally invariant line in $\m$: namely, the span of
$H$, which lives in $\m_{\bar 0}$.  Dually, there is a rotationally invariant
line in $\m^*$, which is the span of $\eta$.  These are all the
rotationally invariant tensors of rank $1$.

Let us now consider rank $2$.  As a representation of $\Sp(1)$, $\m
\otimes \m$ has the following invariants.  First of all, we have
$H^2$, which is the only invariant featuring $H$.  Another invariant
is $P^2 := \sum_i P_i \otimes P_i$, which corresponds to the
$\Sp(1)$-invariant inner product  $\left<-,-\right> : \Im\HH \times
\Im\HH \to \RR$ given by $\left<\alpha,\beta\right> =
\Re(\alpha \bar\beta) = - \Re(\alpha\beta)$.  If $q \in \HH$ is any
quaternion, the real bilinear form
\begin{equation}
  \omega_q : \HH \to \HH \to \RR \qquad\text{defined by}\qquad
  \omega_q(s_1,s_2) = \Re(s_1 q \sbar_2)
\end{equation}
is $\Sp(1)$-invariant: symmetric if $q$ is real and symplectic if $q$
is imaginary (and nonzero).  This gives rise to four
$\Sp(1)$-invariants quadratic in $\Q$: $\sum_a Q_a \otimes Q_a$ and
the triplet $\sum_{a,b} I_{ab} Q_a \otimes Q_b$,
$\sum_{a,b} J_{ab} Q_a \otimes Q_b$ and
$\sum_{a,b} K_{ab} Q_a \otimes Q_b$, where $I,J,K$ are the matrices
representing right-multiplication by the quaternions $\ii$, $\jj$,
$\kk$; that is,
\begin{equation}
  \sQ(s\ii) = \sum_{a,b=1}^4 Q_a I_{ab} s_b, \qquad   \sQ(s\jj) =
  \sum_{a,b=1}^4 Q_a J_{ab} s_b \qquad\text{and}\qquad
  \sQ(s\kk) = \sum_{a,b=1}^4 Q_a K_{ab} s_b.
\end{equation}

Similarly there are several rotational invariants in
$\m^* \otimes \m^*$: $\eta^2$ and, in addition, the symmetric tensors
$\pi^2$ and $\theta^2$, and the triplet of symplectic forms
$\omega_I$, $\omega_J$ and $\omega_K$, defined as follows:
\begin{equation}
  \begin{split}
    \pi^2(\sP(\alpha'),\sP(\alpha)) &= \Re(\alpha' \bar\alpha) = - \Re(\alpha'\alpha)\\
    \theta^2(\sQ(s'), \sQ(s)) &= \Re(s'\sbar)\\
    \omega_I(\sQ(s'), \sQ(s)) &= \Re(s'\ii\sbar)\\
    \omega_J(\sQ(s'), \sQ(s)) &= \Re(s'\jj\sbar)\\
    \omega_K(\sQ(s'), \sQ(s)) &= \Re(s'\kk\sbar).
  \end{split}
\end{equation}

To investigate the invariant tensors on $(\s,\h)$ we need to
investigate the action of $\B$ on the tensors.  For the classical
invariants (i.e., those not involving $Q_a$ or $\theta^a$), we may
consult \cite{Figueroa-OFarrill:2018ilb}: the lorentzian metric (and
the corresponding cometric) are invariant for the lorentzian
spacetimes, the clock one-form and spatial cometric for the galilean
spacetimes, the carrollian vector and the spatial metric for the
carrollian spacetimes.  The generators $\B$ act trivially on
aristotelian spacetimes, so the rotationally invariant tensors are the
invariant tensors.  For the invariants involving $Q_a$ or $\theta^a$,
we need to examine how $\B$ acts on $S$.

As can be gleaned from Table~\ref{tab:superspaces}, $\B$ acts
trivially on $\Q$ in most cases.  The exceptions are Minkowski and AdS
superspaces and the aristotelian superspaces where $\B$ acts via
R-symmetries. Hence in all other superspaces, the four rotational
invariants in $\m_{\bar 1} \otimes \m_{\bar 1}$ defined above and
$\theta^2$, $\omega_I$, $\omega_J$ and $\omega_K$ in
$\m^*_{\bar 1} \otimes \m^*_{\bar 1}$ are $\h$-invariant.  This
situation continues to hold for the aristotelian superspaces with
R-symmetry, namely
\hyperlink{SM14}{$\mathsf{SM14}$}--\hyperlink{SM19}{$\mathsf{SM19}$}.
Indeed, one can show that all the rotational invariants which are
quadratic in $\Q$ or in the $\theta^a$ are also R-symmetry invariant.
Indeed, the R-symmetry generator $B_i$ acts on $\m_{\bar 1}$ in the
same way as the infinitesimal rotation generator $J_i$.

Hence it is only for \hyperlink{SM1}{Minkowski} and
\hyperlink{SM2}{$\zAdS$} superspaces that the $\h$-invariants do not agree
with the $\r$-invariants.  For both of these superspaces, $\h \cong
\so(3,1)$, acting in the same way on the spinors:
\begin{equation}
  [\sB(\beta), \sQ(s)] = \sQ(\tfrac12 \beta s \kk).
\end{equation}
It is a simple calculation to see that the following are
$\h$-invariant: 
$\sum_{a,b} I_{ab} Q_a \otimes Q_b$,
$\sum_{a,b} J_{ab} Q_a \otimes Q_b$, $\omega_I$ and $\omega_J$.

Since $\h$ is isomorphic to the Lorentz subalgebra,
we recover the well-known fact that there are two independent
Lorentz-invariant symplectic structures on the Majorana spinors.  This
does not contradict the fact that the Majorana spinor representation
$S$ of $\so(3,1)$ is irreducible as a \emph{real} representation,
since its complexification (the Dirac spinor representation)
decomposes as a direct sum of the two Weyl spinor representations,
each one having a Lorentz-invariant symplectic structure.

\section{Limits between superspaces}
\label{sec:limits-betw-supersp}

In this section, we exhibit some limits between the superspaces in
Table~\ref{tab:superspaces} and interpret them in terms of
contractions of the underlying Lie superalgebras.

As we will show, a limit between two superspaces induces a limit of
the underlying homogeneous spacetimes.  These were determined in
\cite{Figueroa-OFarrill:2018ilb}.  Our discussion will closely follow 
that in \cite[§5]{Figueroa-OFarrill:2018ilb}.  There contractions of a
Lie algebra $\g = (V, \phi)$, where $V$ is a finite-dimensional real
vector space and $\phi: \wedge^2 V \to V$ is a linear map satisfying
the Jacobi identity, were defined as limits of curves in the space of
Lie brackets.  If $g: (0, 1] \to GL(V)$, mapping $t \mapsto g_t$, is a
continuous curve with $g_1 = \id_V$, we can define a curve of
isomorphic Lie algebras $(V,\phi_t)$, where
\begin{equation}
  \phi_t(X,Y) := \left(g^{-1}_t\cdot\phi \right)(X,Y) = g^{-1}_t \left(\phi(g_t X,
  g_t Y)\right).
\end{equation}
If the limit $\phi_0 = \lim_{t\to 0} \phi$ exists, it defines a Lie algebra
$\g_0 = (V, \phi_0)$ which is then a contraction of $\g=(V,\phi_1)$.

In the current case, we will contract Lie superalgebras $\s = (V,
\phi)$, where $V$ is now a real finite-dimensional super vector space
and $\phi: \wedge^2 V \to V$ is a linear map, where
$\wedge^2$ is defined in the super sense, satisfying the super-Jacobi
identity.  We will define contractions of $\s$ in a completely
analogous manner.

\subsection{Contractions of the AdS superalgebra}
\label{sec:ads-limits}

We begin with the superalgebra for the AdS superspace
\hyperlink{SM2}{$\mathsf{SM2}$}, whose generators $\J$, $\B$, $\P$,
$H$ and $\Q$ satisfy the following brackets (in shorthand notation):
\begin{equation}
  \begin{aligned}[m]
    [\J, \J] &= \J \\
    [\J, \B] &= \B \\
    [\J, \P] &= \P \\
    [\J, \Q] &= \Q
  \end{aligned}
  \qquad\qquad
  \begin{aligned}[m]
    [H, \B] &= -\P \\
    [H, \P] &= \B \\
    [\B, \P] &= H \\
    [\B, \B] &= -\J \\
    [\P, \P] &= -\J
  \end{aligned} \qquad\qquad
  \begin{aligned}[m]
    [H, \Q] &= \Q \\
    [\B, \Q] &= \Q \\
    [\P, \Q] &= \Q \\
    [\Q, \Q] &= H + \J + \B - \P.
  \end{aligned}
\end{equation}
Consider the following three-parameter family of
linear transformations $g_{\kappa, c, \tau}$ defined by
\begin{equation}
g_{\kappa, c, \tau}\cdot \J = \J, \qquad g_{\kappa, c, \tau}
\cdot\B = \tfrac{\tau}{c} \B, \qquad g_{\kappa, c, \tau}\cdot
\P = \tfrac{\kappa}{c} \P, \qquad g_{\kappa, c, \tau}\cdot H =
\tau\kappa H, \qquad g_{\kappa, c, \tau}\cdot\Q = \tfrac{\kappa\tau}{c}\Q. 
\end{equation}
The action on the even generators is as in
\cite[§5]{Figueroa-OFarrill:2018ilb} and the action on $\Q$ is chosen to ensure that the
bracket $[\Q, \Q]$ has well-defined limits as $\kappa \to 0$, $c \to
\infty$ or $\tau\to 0$.

The brackets involving $\J$ remain unchanged for 
the above transformations and the remaining brackets
become
\begin{equation}
  \begin{aligned}[m]
    [H, \B] &= -\tau^2 \P \\
    [H, \P] &= \kappa^2 \B \\
    [\B, \P] &= \tfrac{1}{c^2} H
  \end{aligned} \qquad\qquad
  \begin{aligned}[m]
    [\B, \B] &= -\tfrac{\tau^2}{c^2}\J \\
    [\P, \P] &= -\tfrac{\kappa^2}{c^2}\J\\
    [H, \Q] &= \kappa\tau\Q
  \end{aligned} \qquad\qquad
  \begin{aligned}[m]
    [\B, \Q] &= \tfrac{\tau}{c}\Q \\
    [\P, \Q] &= \tfrac{\kappa}{c}\Q \\
    [\Q, \Q] &= \tfrac{1}{c} H + \tfrac{\kappa\tau}{c}\J + \kappa\B - \tau\P.
  \end{aligned}
\end{equation}
We now want to take the limits $\kappa \to 0$,
$c \to \infty$, and $\tau \to 0$ in turn,
corresponding to the flat, non-relativistic, and ultra-relativistic
limits, respectively. Notice that the limits of the brackets between
the even generators will produce the same Lie algebra contractions as
in \cite{Figueroa-OFarrill:2018ilb}. Thus we cannot have a limit from
one superspace to another unless there exists a limit between their
underlying homogeneous spacetimes.

Taking the flat limit $\kappa \to 0$, we are left with
\begin{equation}
[H, \B] = -\tau^2 \P, \quad [\B, \P] = \tfrac{1}{c^2} H, 
\quad [\B, \B] = -\tfrac{\tau^2}{c^2}\J, \quad
[\B, \Q] = \tfrac{\tau}{c}\Q \quad \text{and} \quad
 [\Q, \Q] = \tfrac{1}{c} H - \tau\P.
\end{equation}
For $\tfrac{\tau}{c}\neq 0$, this is the Poincaré superalgebra
(\hyperlink{KLSA14}{$\mathsf{S14}$}).  Thus, we obtain the limit
$\hyperlink{SM2}{\mathsf{SM2}} \to \hyperlink{SM1}{\mathsf{SM1}}$.
Subsequently taking the non-relativistic limit $c \to \infty$,
the brackets reduce to
\begin{equation}
[H, \B] = -\tau^2 \P  \qquad\text{and}\qquad [\Q, \Q] = - \tau\P.
\end{equation}
For $\tau \neq 0$, this shows us that we have the limit
$\hyperlink{SM1}{\mathsf{SM1}} \to
\hyperlink{SM4}{\mathsf{SM4}}$.

Alternatively, we could have taken the ultra-relativistic limit 
$\tau \to 0$, which, for $c \neq 0$, gives us the
Carroll superalgebra (\hyperlink{KLSA13}{$\mathsf{S13}$}):
\begin{equation}
[\B, \P] = \tfrac{1}{c^2} H  \qquad\text{and}\qquad [\Q, \Q] =
\tfrac{1}{c} H.
\end{equation}
Thus, we have $\hyperlink{SM1}{\mathsf{SM1}}\to
\hyperlink{SM12}{\mathsf{SM12}}$.

Returning to the $\zAdS$ superalgebra
(\hyperlink{KLSA15}{$\mathsf{S15}$}) and taking the non-relativistic
limit $c \to \infty$, we find
\begin{equation}
[H, \B] = -\tau^2 \P, \qquad [H, \P] = \kappa^2 \B,
\qquad [H, \Q] = \kappa\tau\Q \qquad\text{and}\qquad
[\Q, \Q] = \kappa\B - \tau\P.
\end{equation}
For $\tau\kappa \neq 0$, this is $\hyperlink{KLSA11}{\mathsf{S11}_0}$
(under a suitable basis change).  Therefore, we have 
$\hyperlink{SM2}{\mathsf{SM2}} \to \hyperlink{SM10}{\mathsf{SM10}}$.
Because these limits commute, we may now take
the flat limit to arrive at \hyperlink{SM4}{$\mathsf{SM4}$}.  

Finally, we may take the ultra-relativistic limit of
$\zAdS$ (\hyperlink{KLSA15}{$\mathsf{S15}$}).  This limit leaves the
brackets
\begin{equation}
[H, \P] = \kappa^2 \B, \quad [\B, \P] = \tfrac{1}{c^2} H,
\quad [\P, \P] = -\tfrac{\kappa^2}{c^2}\J, \quad
[\P, \Q] = \tfrac{\kappa}{c}\Q \quad\text{and}\quad
[\Q, \Q] = \tfrac{1}{c} H + \kappa\B,
\end{equation}
for $\tfrac{\kappa}{c} \neq 0$. Thus, we arrive at
\hyperlink{SM13}{$\mathsf{SM13}$}. Subsequently taking the flat limit,
we find \hyperlink{SM12}{$\mathsf{SM12}$}, as expected.

We can also take limits from the superspaces discussed above to
non-effective super Lie pairs, which will have associated 
aristotelian superspaces.  Since all of the above superspaces
have either \hyperlink{SM4}{$\mathsf{SM4}$} or \hyperlink{SM12}{$\mathsf{SM12}$}
as a limit, we will only show the limits to aristotelian 
superspaces coming form these two cases.  Beginning 
with \hyperlink{SM4}{$\mathsf{SM4}$}, we can use the 
transformation
\begin{equation}
g_t\cdot\B = t\B, \qquad g_t\cdot H = H, \qquad g_t\cdot \P = \P
\qquad\text{and}\qquad g_t\cdot\Q = \Q
\end{equation}
and the limit $t\to 0$ to obtain \hyperlink{SM21}{$\mathsf{SM21}$}.
Using the same transformation and limit, we can also start with 
\hyperlink{SM12}{$\mathsf{SM12}$} and find \hyperlink{SM22}{$\mathsf{SM22}$}.

\subsection{Remaining galilean superspaces}
\label{sec:lim-galilean}

We have shown that we obtain the other lorentzian and 
two carrollian superspaces as limits of the $\zAdS$ superspace
\hyperlink{SM2}{$\mathsf{SM2}$}: namely, Minkowski
(\hyperlink{SM1}{$\mathsf{SM1}$}), Carroll
(\hyperlink{SM12}{$\mathsf{SM12}$}) and carrollian anti de Sitter
(\hyperlink{SM13}{$\mathsf{SM13}$}) superspaces.  In addition, we also
obtain two superisations of galilean spacetimes: a superisation
\hyperlink{SM4}{$\mathsf{SM4}$} of the flat galilean spacetime and the
superisation \hyperlink{SM10}{$\mathsf{SM10}$} of galilean anti de
Sitter spacetime.  But what about the superisations of other galilean
spacetimes?

\subsubsection{Flat galilean superspaces}
\label{sec:lim-g}

From \hyperlink{SM2}{$\mathsf{SM2}$} we obtained the galilean
superspace \hyperlink{SM4}{$\mathsf{SM4}$}.  There is a second
superisation \hyperlink{SM3}{$\mathsf{SM3}$} of the flat galilean
homogeneous spacetime, from which we can also reach
\hyperlink{SM4}{$\mathsf{SM4}$}.  Indeed, using the transformations
\begin{equation}
  g_t\cdot\B = t\B, \qquad g_t\cdot H = t H, 
  \qquad g_t\cdot \P = t\P \qquad\text{and}\qquad g_t\cdot \Q =
  \sqrt{t} \Q,
\end{equation}
on the Lie superalgebra for \hyperlink{SM3}{$\mathsf{SM3}$}, and taking the
limit $t\to 0$, we find the Lie superalgebra for 
\hyperlink{SM4}{$\mathsf{SM4}$}.  Thus, we have $\hyperlink{SM3}{\mathsf{SM3}}
\to \hyperlink{SM4}{\mathsf{SM4}}$.

Beginning with \hyperlink{SM3}{$\mathsf{SM3}$}, we may also 
consider the transformation
\begin{equation}
  g_t\cdot\B = t\B, \qquad g_t\cdot H =  H, \qquad g_t\cdot \P = t\P
  \qquad\text{and}\qquad g_t\cdot \Q = \sqrt{t} \Q,
\end{equation}
and the limit $t\to 0$.  This procedure will give us a non-effective
super Lie pair corresponding to \hyperlink{SM20}{$\mathsf{SM20}$}.

\subsubsection{Galilean de Sitter superspaces}
\label{sec:lim-dsg}

The superspaces \hyperlink{SM5}{$\mathsf{SM5}_\lambda$} and
\hyperlink{SM6}{$\mathsf{SM6}_\lambda$} arise as the $\gamma \to -1$
limit of \hyperlink{SM7}{$\mathsf{SM7}_{\gamma, \lambda}$} and
\hyperlink{SM8}{$\mathsf{SM8}_{\gamma, \lambda}$}, respectively.  This
fact has already been noted in Section~\ref{sec:super-dsg}.
Section~\ref{sec:super-tdsg} demonstrated that
\hyperlink{SM9}{$\mathsf{SM9}_\lambda$} is the $\gamma \to 1$ limit of
\hyperlink{SM7}{$\mathsf{SM7}_{\gamma, \lambda}$} and
\hyperlink{SM8}{$\mathsf{SM8}_{\gamma, \lambda}$}.

The superalgebras associated with these five superspaces take the 
general form
\begin{equation} \label{eq:galilean-de-sitter-brackets}
  \begin{aligned}[m]
    [H, \sB(\beta)] &= - \sP(\beta) \\
    [H, \sP(\pi)] &= \gamma \sB(\pi) + (1+\gamma) \sP(\pi)
  \end{aligned} \qquad
  \begin{aligned}[m]
    [H, \sQ(s)] &= \tfrac{1}{2} \sQ(s(\eta + \lambda\kk)) \\
    [\sQ(s), \sQ(s)] &= \rho \sB(s\kk\sbar) + \sigma \sP(s\kk\sbar)
  \end{aligned} 
\end{equation}
for some $\eta, \rho, \sigma \in \mathbb{R}$, where $\gamma \in
[-1,1]$ and $\lambda \in \RR$ are the parameters of the Lie
superalgebras.  Using the transformations
\begin{equation}
g_t\cdot \B = \B, \qquad g_t\cdot H = tH, \qquad g_t\cdot \P = t \P
\qquad\text{and}\qquad g_t\cdot \Q = \sqrt{\omega t} \Q,
\end{equation}
where $\omega \in \mathbb{R}$, and taking the limit $t \to 0$, 
the above brackets become
\begin{equation}
[H, \sB(\beta)] = -\sP(\beta) \qquad\text{and}\qquad [\sQ(s), \sQ(s)]
= \omega\sigma \sP(s\kk\sbar).
\end{equation}
Therefore, by choosing $\omega = -\sigma^{-1}$, we can always
recover \hyperlink{SM4}{$\mathsf{SM4}$}.

There is a second superisation of the flat galilean homogeneous
spacetime, namely \hyperlink{SM3}{$\mathsf{SM3}$}. There does not
seem to be any Lie-superalgebra contraction that gives
\hyperlink{SM3}{$\mathsf{SM3}$}, but as we will see below, there are 
non-contracting limits (involving taking $\lambda \to \pm \infty$)
which take the superspaces \hyperlink{SM5}{$\mathsf{SM5}_\lambda$},
\hyperlink{SM6}{$\mathsf{SM6}_\lambda$},
\hyperlink{SM7}{$\mathsf{SM7}_{\gamma, \lambda}$},
\hyperlink{SM8}{$\mathsf{SM8}_{\gamma, \lambda}$} and
\hyperlink{SM9}{$\mathsf{SM9}_\lambda$} to
\hyperlink{SM3}{$\mathsf{SM3}$}.

\subsubsection{Galilean anti de Sitter superspaces}

The superspace \hyperlink{SM10}{$\mathsf{SM10}$} is, by definition, 
the $\chi \to 0$ limit of \hyperlink{SM11}{$\mathsf{SM11}_\chi$}.
These algebras take the form
\begin{equation}
  \begin{aligned}[m]
    [H, \sB(\beta)] &= - \sP(\beta) \\
    [H, \sP(\pi)] &= (1+\chi^2) \sB(\pi) + \chi \sP(\pi)
  \end{aligned} \qquad\qquad
  \begin{aligned}[m]
    [H, \sQ(s)] &= \tfrac{1}{2} \sQ(s(\chi + \jj)) \\
    [\sQ(s), \sQ(s)] &= - \sB(s\ii\sbar) - \sP(s\kk\sbar),
  \end{aligned}
\end{equation}
where $\chi \geq 0$ is the parameter of the Lie superalgebra.  Using
the same transformations as in the galilean de Sitter case, but with
$\omega = 1$, we find
\begin{equation}
  [H, \sB(\beta)] = \sP(\beta) \qquad\text{and}\qquad [\sQ(s), \sQ(s)] = -
  \sP(s\kk\sbar).
\end{equation}
Thus, we find \hyperlink{SM4}{$\mathsf{SM4}$} as a limit of both
\hyperlink{SM10}{$\mathsf{SM10}$} and \hyperlink{SM11}{$\mathsf{SM11}_\chi$}.

We cannot obtain \hyperlink{SM3}{$\mathsf{SM3}$} as a limit
of these superspaces as \hyperlink{SM3}{$\mathsf{SM3}$} has collinear
$\hh$ and $\bc_3$, whereas \hyperlink{SM10}{$\mathsf{SM10}$} and 
\hyperlink{SM11}{$\mathsf{SM11}_\chi$} have orthogonal $\hh$ and $\bc_3$.

\subsubsection{Non-contracting limits}
\label{sec:gal-non-contracting-lim}

In \cite{Figueroa-OFarrill:2018ilb} it was shown that
$\lim_{\chi\to\infty} \hyperlink{S11}{\zAdSG_\chi} =
\hyperlink{S9}{\ztdSG_1}$, but this limit is not induced by a Lie algebra
contraction since the Lie algebras are non-isomorphic for different
values of $\chi$.  Does this limit extend to the superspaces?

Beginning with \hyperlink{SM11}{$\mathsf{SM11}_\chi$}, change basis such
that
\begin{equation}
  H' = \chi^{-1} H, \qquad \B' = \B, \qquad \P' = \chi^{-1} \P \qquad
  \text{and} \qquad
  \Q' = \chi^{-1/2} \Q,
\end{equation}
under which the brackets become
\begin{equation}
  \begin{aligned}[m]
    [H', \sB'(\beta)] &= -\sP'(\beta) \\
    [H', \sP'(\pi)] &= 2 \sP'(\pi) + (1+\chi^{-2})\sB'(\pi)
  \end{aligned} \qquad\qquad
  \begin{aligned}[m]
    [H', \sQ'(s)] &= \tfrac{1}{2\chi} \sQ'(s(\chi + \jj)) \\
    [\sQ'(s), \sQ'(s)] &= - \chi^{-1} \sB'(s\ii\sbar) + \sB'(s\kk\sbar) +
    \sP(s\kk\sbar).
  \end{aligned}
\end{equation}
Taking the limit $\chi\to\infty$, we find
\begin{equation}
  \begin{aligned}[m]
    [H', \sB'(\beta)] &= -\sP'(\beta) \\
    [H', \sP'(\pi)] &= 2 \sP'(\pi) + \sB'(\pi)
  \end{aligned} \qquad\qquad
  \begin{aligned}[m]
    [H', \sQ'(s)] &= \tfrac{1}{2} \sQ'(s) \\
    [\sQ'(s), \sQ'(s)] &= - \sB'(s\kk\sbar) +
    \sP(s\kk\sbar).
  \end{aligned}
\end{equation}
This Lie superalgebra is precisely that for \hyperlink{SM9}{$\mathsf{SM9_0}$}.
Thus, we inherit this limit from the underlying homogeneous 
spacetimes.

The superspaces \hyperlink{SM5}{$\mathsf{SM5}_\lambda$},
\hyperlink{SM6}{$\mathsf{SM6}_\lambda$},
\hyperlink{SM7}{$\mathsf{SM7}_{\gamma, \lambda}$},
\hyperlink{SM8}{$\mathsf{SM8}_{\gamma, \lambda}$} and
\hyperlink{SM9}{$\mathsf{SM9}_\lambda$} all have an additional
parameter $\lambda$ and we can ask what happens if we take the limit
$\lambda \to \pm \infty$ in these cases.  This is again a
non-contracting limit, since the Lie superalgebras with different
values of $\lambda \in \RR$ are not isomorphic.

Using the general form of the brackets stated
in~\eqref{eq:galilean-de-sitter-brackets} above, consider a change of
basis
\begin{equation}
  \B'=\B, \qquad H' = 2 \lambda^{-1} H, \qquad \P' = 2 \lambda^{-1}
  \P \qquad\text{and}\qquad \Q' = \lambda^{-\tfrac{1}{2}} \Q.
\end{equation}
In our new basis, the brackets become
\begin{equation} 
  \begin{aligned}[m]
    [H', \sB'(\beta)] &= - \sP'(\beta) \\
    [H', \sP'(\pi)] &= 4\lambda^{-2}\gamma\sB'(\pi) + 2\lambda^{-1} (1+\gamma) \sP'(\pi)
  \end{aligned} \qquad
  \begin{aligned}[m]
    [H', \sQ'(s)] &= \sQ'(s(\lambda^{-1}\eta + \kk)) \\
    [\sQ'(s), \sQ'(s)] &= \lambda^{-1} \rho \sB'(s\kk\sbar) + \tfrac{\sigma}{2} \sP'(s\kk\sbar).
  \end{aligned} 
\end{equation}
Taking either $\lambda \to \infty$ or $\lambda \to -\infty$, we find
\begin{equation}
[H', \sB'(\beta)] = - \sP'(\beta), \qquad [H', \sQ'(s)] = \sQ'(s\kk), 
\qquad [\sQ'(s), \sQ'(s)] = \tfrac{\sigma}{2} \sP'(s\kk\sbar).
\end{equation}
Rescaling both $\B'$ and $\P'$ by $\tfrac{\sigma}{2}$, we recover the Lie
superalgebra for \hyperlink{SM3}{$\mathsf{SM3}$}.

Figure~\ref{fig:super-limits} below illustrates the different
superspaces and the limits between them.  The families
\hyperlink{SM5}{$\mathsf{SM5}_\lambda$},
\hyperlink{SM6}{$\mathsf{SM6}_\lambda$},
\hyperlink{SM7}{$\mathsf{SM7}_{\gamma, \lambda}$},
\hyperlink{SM8}{$\mathsf{SM8}_{\gamma, \lambda}$} and
\hyperlink{SM9}{$\mathsf{SM9}_\lambda$} fit together into a
two-dimensional space which also includes
\hyperlink{SM3}{$\mathsf{SM3}$} as their common limits
$\lambda \to \pm \infty$ and which can be described as follows.  If we
fix $\lambda \in \RR$, then
\begin{equation}
  \lim_{\gamma\to 1} \hyperlink{SM7}{\mathsf{SM7}_{\gamma, \lambda}}
  =\hyperlink{SM9}{\mathsf{SM9}_\lambda} \qquad\text{whereas}\qquad
  \lim_{\gamma \to -1} \hyperlink{SM7}{\mathsf{SM7}_{\gamma,
      \lambda}} = \hyperlink{SM5}{\mathsf{SM5}_\lambda}. 
\end{equation}
Similarly, again fixing $\lambda \in \RR$, we have
\begin{equation}
  \lim_{\gamma\to 1} \hyperlink{SM8}{\mathsf{SM8}_{\gamma, \lambda}}
  =\hyperlink{SM9}{\mathsf{SM9}_\lambda} \qquad\text{whereas}\qquad
  \lim_{\gamma \to -1} \hyperlink{SM8}{\mathsf{SM8}_{\gamma,
      \lambda}} = \hyperlink{SM6}{\mathsf{SM6}_\lambda}.
\end{equation}
This gives rise to the following two-dimensional parameter spaces:
\begin{center}
  \begin{tikzpicture}[>=latex, x=1.0cm,y=1.0cm,scale=0.7]
    %
    %
    %
    %
    \coordinate (bl1) at (-2,0);
    \coordinate (br1) at (2,0);
    \coordinate (tl1) at (-2,4);
    \coordinate (tr1) at (2,4);
    \coordinate (bl2) at (4,0);
    \coordinate (br2) at (8,0);
    \coordinate (tl2) at (4,4);
    \coordinate (tr2) at (8,4);
    %
    %
    \fill [color=green!30!white] (bl1) -- (tl1) -- (tr1) -- (br1) -- (bl1);
    \fill [color=green!30!white] (bl2) -- (tl2) -- (tr2) -- (br2) -- (bl2);
    %
    %
    \node at (0,2) {$\hyperlink{SM7}{\mathsf{7}_{\gamma,\lambda}}$};
    \node at (6,2) {$\hyperlink{SM8}{\mathsf{8}_{\gamma,\lambda}}$};
    %
    %
    \draw [<->,line width=1.5pt,color=green!70!black] (bl1) -- (tl1) node [midway,left] {$\hyperlink{SM5}{\mathsf{5}_\lambda}$};
    \draw [<->,line width=1.5pt,color=green!70!black] (br1) -- (tr1) node [midway,right] {$\hyperlink{SM9}{\mathsf{9}_\lambda}$};
    \draw [<->,line width=1.5pt,color=green!70!black] (bl2) -- (tl2) node [midway,left] {$\hyperlink{SM6}{\mathsf{5}_\lambda}$};
    \draw [<->,line width=1.5pt, color=green!70!black] (br2) -- (tr2) node [midway,right] {$\hyperlink{SM9}{\mathsf{9}_\lambda}$};
    \draw [-, line width=2pt, color=green!70!black] (bl1) -- (br1) node [midway,below] {$\hyperlink{SM3}{\mathsf{3}}$}; 
    \draw [-, line width=2pt, color=green!70!black] (bl2) -- (br2) node [midway,below] {$\hyperlink{SM3}{\mathsf{3}}$}; 
    \draw [-, line width=2pt, color=green!70!black] (tl1) -- (tr1) node [midway,above] {$\hyperlink{SM3}{\mathsf{3}}$}; 
    \draw [-, line width=2pt, color=green!70!black] (tl2) -- (tr2) node [midway,above] {$\hyperlink{SM3}{\mathsf{3}}$}; 
  \end{tikzpicture}
\end{center}

We then flip the square on the right horizontally and glue the two
squares along their common $\hyperlink{SM9}{\mathsf{9}_\lambda}$ edge
to obtain the following picture
\begin{center}
  \begin{tikzpicture}[>=latex,  x=1.0cm,y=1.0cm,scale=0.7]
    %
    %
    %
    %
    \coordinate (bl) at (-2,0);
    \coordinate (br) at (6,0);
    \coordinate (bm) at (2,0);    
    \coordinate (tl) at (-2,4);
    \coordinate (tr) at (6,4);
    \coordinate (tm) at (2,4);
    %
    %
    \fill [color=green!30!white] (bl) -- (tl) -- (tr) -- (br) -- (bl);
    %
    %
    \node at (0,2) {$\hyperlink{SM7}{\mathsf{7}_{\gamma,\lambda}}$};
    \node at (4,2) {$\hyperlink{SM8}{\mathsf{8}_{-\gamma,\lambda}}$};
    %
    %
    \draw [<->,line width=1.5pt,color=green!70!black] (bl) -- (tl) node [midway,left] {$\hyperlink{SM5}{\mathsf{5}_\lambda}$};
    \draw [<->,line width=1.5pt,color=green!70!black] (br) -- (tr) node [midway,right] {$\hyperlink{SM6}{\mathsf{6}_\lambda}$};
    \draw [<->,line width=1.5pt,color=green!70!black] (bm) -- (tm) node [midway,left] {$\hyperlink{SM9}{\mathsf{9}_\lambda}$};
    \draw [-, line width=2pt, color=green!70!black] (bl) -- (br) node [midway,below] {$\hyperlink{SM3}{\mathsf{3}}$}; 
    \draw [-, line width=2pt, color=green!70!black] (tl) -- (tr) node [midway,above] {$\hyperlink{SM3}{\mathsf{3}}$}; 
  \end{tikzpicture}
\end{center}

We now glue the top and bottom edges to arrive at the following
cylinder:

\begin{center}
  \begin{tikzpicture}[>=stealth, aspect=1.5,x=1.0cm,y=1.0cm,scale=0.7,color=green!70!black,line width=1.5pt]
    %
    %
    %
    %
    \node [name=cyl1, draw, cylinder, cylinder uses custom fill, 
    cylinder body fill=green!30!white, minimum height=3cm, minimum
    width=2cm,opacity=0.5] {};
    \node [name=cyl2, draw, cylinder, cylinder uses custom fill, cylinder end fill=green!50!white,
    cylinder body fill=green!30!white, minimum height=3cm, minimum width=2cm,above=0pt of cyl1.before top, anchor=after bottom,opacity=0.5] {};
    %
    %
    \draw [color=blue!50!black, line width=1.5pt] (cyl1.before bottom)
    -- (cyl2.after top) node [midway, below]{\hyperlink{SM3}{$\mathsf{3}$}};
    %
    %
    \coordinate [label=left:{\hyperlink{SM5}{$\mathsf{5}_\lambda$}}] (5) at (cyl1.bottom);
    \coordinate [label=right:{\hyperlink{SM6}{$\mathsf{6}_\lambda$}}] (6) at (cyl2.top);
    \coordinate [label=above:{\hyperlink{SM9}{$\mathsf{9}_0$}}] (90) at (cyl1.before top); 
    \coordinate [label=left:{\hyperlink{SM9}{$\mathsf{9}_\lambda$}}] (9) at (cyl2.bottom); 
    \coordinate [label=:{$\hyperlink{SM7}{\mathsf{7}_{\gamma,\lambda}}$}] (7) at (cyl1.center);
    \coordinate [label=:{$\hyperlink{SM8}{\mathsf{8}_{-\gamma,\lambda}}$}] (8) at (cyl2.center);    
    %
    %
    \foreach \point in {90}
    \filldraw [color=green!70!black,fill=green!70!black] (\point) circle (1.5pt);
  \end{tikzpicture}
\end{center}

Finally, we collapse the ``edge'' labelled
\hyperlink{SM3}{$\mathsf{3}$} to a point, arriving at the object in
Figure~\ref{fig:super-limits}.

\subsection{Aristotelian limits}
\label{sec:aristo-lim}

There are two kinds of superisations of aristotelian spacetimes: the
ones where $\B$ acts as R-symmetries and the ones where $\B$ acts
trivially.  We treat them in turn.

\subsubsection{Aristotelian superspaces with R-symmetry}
\label{sec:r-sym-lim}
 
The homogeneous spacetimes \hyperlink{A23m}{$\RR\times H^3$} and \hyperlink{A23p}{$\RR\times S^3$} 
underlying the homogeneous superspaces \hyperlink{SM14}{$\mathsf{SM14}$}
-  \hyperlink{SM17}{$\mathsf{SM17}$} have \hyperlink{A21}{$\zS$} as their limit.  
Therefore, we could expect \hyperlink{SM14}{$\mathsf{SM14}$} - 
\hyperlink{SM17}{$\mathsf{SM17}$} to have either \hyperlink{SM18}{$\mathsf{SM18}$} or
\hyperlink{SM19}{$\mathsf{SM19}$} as limits.  The relevant contraction uses
the transformation
\begin{equation}
g_t\cdot\B = \B, \qquad g_t\cdot H = H \qquad\text{and} \qquad
g_t\cdot\P = t\P.
\end{equation}
Taking the limit $t \to 0$, the $[\P, \P]$ bracket vanishes
leaving all other brackets unchanged. Thus, we find
$\hyperlink{SM14}{\mathsf{SM14}}
\to\hyperlink{SM18}{\mathsf{SM18}}$,
$\hyperlink{SM16}{\mathsf{SM16}} \to
\hyperlink{SM18}{\mathsf{SM18}}$, $\hyperlink{SM15}{\mathsf{SM15}}
\to \hyperlink{SM19}{\mathsf{SM19}}$ and
$\hyperlink{SM17}{\mathsf{SM17}} \to \hyperlink{SM19}{\mathsf{SM19}}$.

Taking into account the form of $\hh$, and the $[\Q, \Q]$ bracket
for each of these superspaces, we notice that each homogeneous
spacetime has two superspaces associated with it.  One for which
\begin{equation}
\bb = \tfrac{1}{2} \qquad\text{and}\qquad [\sQ(s), \sQ(s)] = |s|^2 H,
\end{equation}
and one for which
\begin{equation}
\bb = \tfrac{1}{2}, \qquad \hh = \tfrac{1}{2}\kk \qquad\text{and}\qquad
[\sQ(s), \sQ(s)] = |s|^2 H - \sB(s\kk\sbar).
\end{equation}
Using transformations which act as
\begin{equation}
g_t\cdot H = tH, \qquad g_t\cdot\Q = \sqrt{t}\Q
\end{equation}
and trivially on $\J, \B,$ and $\P$, we find the brackets of the 
latter superspaces described by
\begin{equation}
\bb = \tfrac{1}{2}, \qquad \hh = \tfrac{t}{2} \kk, \qquad \text{and}
\qquad [\sQ(s), \sQ(s)] = |s|^2 H - t\sB(s\kk\sbar).
\end{equation}
Therefore, taking the limit $t\to 0$, we find the former
superspaces.  Thus, we get the limits $\hyperlink{SM15}{\mathsf{SM15}}
\to \hyperlink{SM14}{\mathsf{SM14}}$, $\hyperlink{SM17}{\mathsf{SM17}}
\to\hyperlink{SM16}{\mathsf{SM16}}$ and $\hyperlink{SM19}{\mathsf{SM19}}
\to\hyperlink{SM18}{\mathsf{SM18}}$.

All of the above superspaces have \hyperlink{SM18}{$\mathsf{SM18}$} 
as a limit.  Therefore, we will only consider the limits of this 
superspace to those aristotelian superspaces without R-symmetry.
Letting
\begin{equation}
g_t\cdot\B = t\B, \qquad  g_t\cdot H = H, \qquad g_t\cdot\P = \P,
\qquad g_t\cdot\Q = \Q,
\end{equation}
and taking the limit $t\to 0$, we arrive at a non-effective super 
Lie pair corresponding to \hyperlink{SM22}{$\mathsf{SM22}$}.

 \subsubsection{Aristotelian superspaces without R-symmetry}
\label{sec:w-o-r-sym-lim}

The aristotelian homogeneous spacetimes \hyperlink{A23p}{$\RR\times S^3$},
\hyperlink{A23m}{$\RR\times H^3$}, and \hyperlink{A22}{$\zTS$} have \hyperlink{A21}{$\zS$}
as their limit; therefore, we would expect their superisations to have 
have one or more of \hyperlink{SM20}{$\mathsf{SM20}$}-\hyperlink{SM23}{$\mathsf{SM23}$} 
as limits.  For \hyperlink{A22}{$\zTS$} to have \hyperlink{A21}{$\zS$} as its 
limit, we require the transformation
\begin{equation}
g_t\cdot \B = \B, \qquad g_t\cdot H = tH \qquad\text{and}\qquad g_t\cdot \P = \P. 
\end{equation}
Wanting to ensure $[\Q, \Q] \neq 0$, and that the limit $t\to 0$
is well-defined, we need $g_t\cdot \Q = \sqrt{t} \Q$.  Taking this limit,
we find $\hyperlink{SM24}{\mathsf{SM24}_\lambda}\to\hyperlink{SM21}{\mathsf{SM21}}$.

To get \hyperlink{A21}{$\zS$}  from \hyperlink{A23p}{$\RR\times S^3$},
we need the transformation
\begin{equation}
g_t\cdot \B = \B, \qquad g_t\cdot H = H \qquad\text{and}\qquad g_t\cdot \P = t\P. 
\end{equation}
Using this transformation and taking the limit $t\to 0$, we
find $\hyperlink{SM25}{\mathsf{SM25}}\to\hyperlink{SM22}{\mathsf{SM22}}$.
However, the limit is not well-defined for \hyperlink{SM26}{$\mathsf{SM26}$}
due to $\P$ in the expression for $[\Q, \Q]$.  In this case, we additionally
require $g_t\cdot\Q = \sqrt{t} \Q$.  Then $\hyperlink{SM26}{\mathsf{SM26}}
\to \hyperlink{SM20}{\mathsf{SM20}}$.  Another choice of 
transformation,
\begin{equation}
g_t\cdot \B = \B, \qquad g_t\cdot H = tH, \qquad g_t\cdot \P = t\P
\qquad\text{and}\qquad g_t\Q = \sqrt{t}\Q,
\end{equation}
for \hyperlink{SM26}{$\mathsf{SM26}$}, gives \hyperlink{SM23}{$\mathsf{SM23}$}  in the 
limit $t\to 0$.  Thus, we also have $\hyperlink{SM26}{\mathsf{SM26}}
\to\hyperlink{SM23}{\mathsf{SM23}}$.

Finally, to get \hyperlink{A21}{$\zS$}  from \hyperlink{A23m}{$\RR\times H^3$},
we use the transformation
\begin{equation}
g_t\cdot \B = \B, \qquad g_t\cdot H = H, \qquad g_t\cdot \P = t\P. 
\end{equation}
To ensure the limit $t\to 0$ is well-defined, we subsequently need
$g_t\cdot\Q = \sqrt{t} \Q$.  This transformation with the limit gives
$\hyperlink{SM27}{\mathsf{SM27}}\to\hyperlink{SM21}{\mathsf{SM21}}$.

There are only two underlying aristotelian homogeneous spacetimes
which have more than one superisation.  These are
\hyperlink{A21}{$\zS$} and \hyperlink{A23p}{$\RR\times S^3$}.
In the latter case, we find the superisation
\hyperlink{SM25}{$\mathsf{SM25}$} as the limit of \hyperlink{SM26}{$\mathsf{SM26}$}
using the transformation
\begin{equation}
g_t\cdot \B = \B, \qquad g_t\cdot H = tH, \qquad g_t\cdot \P = \P
\qquad\text{and}\qquad g_t\cdot \Q = \sqrt{t} \Q,
\end{equation}
and taking $t \to 0$.  In the former case, the superisations
\hyperlink{SM22}{$\mathsf{SM22}$} and \hyperlink{SM21}{$\mathsf{SM21}$} can be
found as limits of \hyperlink{SM23}{$\mathsf{SM23}$} using the 
transformations
\begin{equation}
g_t\cdot \B = \B, \qquad g_t\cdot H = tH, \qquad g_t\cdot \P = \P 
\qquad\text{and}\qquad g_t\cdot \Q = \sqrt{t} \Q,
\end{equation}
and
\begin{equation}
g_t\cdot \B = \B, \qquad g_t\cdot H = H, \qquad g_t\cdot \P = t\P
\qquad\text{and}\qquad g_t\cdot \Q = \sqrt{t} \Q,
\end{equation}
respectively.  We also have 
\begin{equation}
g_t\cdot \B = \B, \qquad g_t\cdot H = tH, \qquad g_t\cdot \P = \P
\qquad\text{and}\qquad g_t\cdot \Q = \Q,
\end{equation}
giving the limit $\hyperlink{SM20}{\mathsf{SM20}}\to\hyperlink{SM21}{\mathsf{SM21}}$.

\subsubsection{A non-contracting limit}
\label{sec:aristo-non-contracting-lim}

Use the following change of basis on the Lie superalgebra for
\hyperlink{SM24}{$\mathsf{SM24}_\lambda$},
\begin{equation}
\B'=\B, \qquad H' = 2 \lambda^{-1} H, \qquad \P' = \P, 
\qquad \Q' = \Q.
\end{equation}
The brackets then become
\begin{equation}
[H', \sP(\pi)'] = 2 \lambda^{-1} \sP(\pi)', \qquad 
[H', \sQ'(s)] = \sQ'(s(\lambda^{-1} + \kk)), \qquad 
[\sQ'(s), \sQ'(s)] = -\sP'(s\kk\sbar).
\end{equation}
Taking the limits $\lambda \to \pm \infty$, we find the superspace
\hyperlink{SM20}{$\mathsf{SM20}$}.  Therefore, the line of superspaces
\hyperlink{SM24}{$\mathsf{SM24}_\lambda$} compactifies to a circle
with \hyperlink{SM20}{$\mathsf{SM20}$} as the point at infinity.

\subsection{Summary}
\label{sec:limit-summary}

The picture resulting from the above discussion is given in
Figure~\ref{fig:super-limits}.  Except for
$\hyperlink{SM4}{\mathsf{SM3}} \to \hyperlink{SM4}{\mathsf{SM4}}$, the
limits from the families \hyperlink{SM5}{$\mathsf{SM5}_\lambda$},
\hyperlink{SM6}{$\mathsf{SM6}_\lambda$},
\hyperlink{SM7}{$\mathsf{SM7}_{\gamma, \lambda}$},
\hyperlink{SM8}{$\mathsf{SM8}_{\gamma, \lambda}$},
\hyperlink{SM9}{$\mathsf{SM9}_\lambda$} and
\hyperlink{SM11}{$\mathsf{SM11}_\chi$} to
\hyperlink{SM4}{$\mathsf{SM4}$} are not shown explicitly in order to
improve readability.  Neither is the limit between
\hyperlink{SM24}{$\mathsf{SM24}_\lambda$} and
\hyperlink{SM21}{$\mathsf{SM21}$} shown.

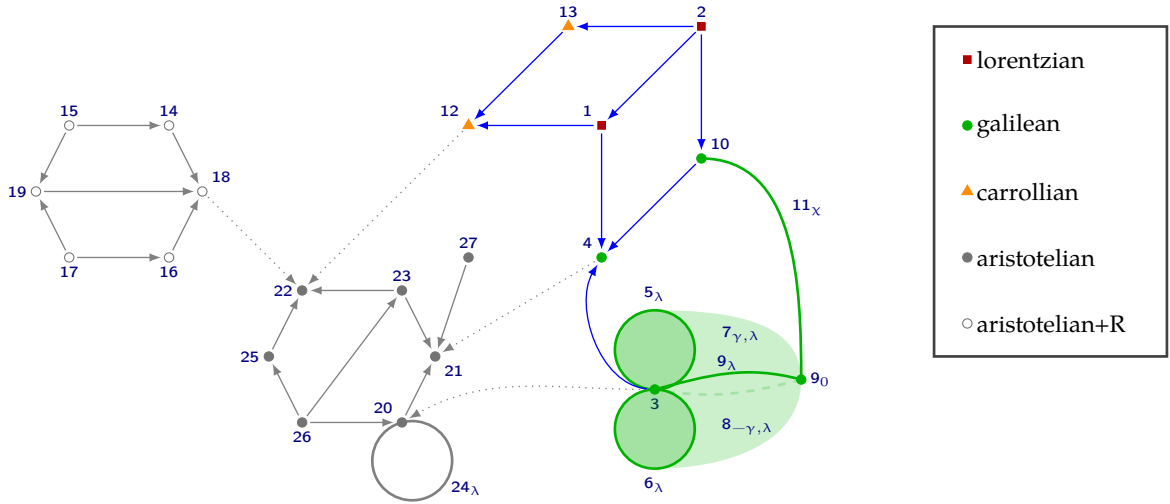
\begin{figure}[h!]
  \centering
  \begin{tikzpicture}[scale=1.75,>=latex, shorten >=3pt, shorten <=3pt, x=1.0cm,y=1.0cm]
    %
    %
    %
    %
    %
    \coordinate [label=above left:{\hyperlink{SM1}{\tiny $\mathsf{1}$}}] (1) at (6.75, 0.75);
    \coordinate [label=above:{\hyperlink{SM2}{\tiny $\mathsf{2}$}}] (2) at (7.5, 1.5);
    \coordinate [label=above left:{\hyperlink{SM4}{\tiny $\mathsf{4}$}}] (4) at  (6.75, -0.25);
    \coordinate [label=above right:{\hyperlink{SM10}{\tiny $\mathsf{10}$}}] (10) at (7.5, 0.5); 
    \coordinate [label=above left:{\hyperlink{SM12}{\tiny $\mathsf{12}$}}] (12) at (5.75,0.75);
    \coordinate [label=above:{\hyperlink{SM13}{\tiny $\mathsf{13}$}}]  (13) at (6.5,1.5);
    \coordinate [label=below:{\hyperlink{SM3}{\tiny $\mathsf{3}$}}] (3) at  (7.15, -1.25); 
    \coordinate [label=above:{\hyperlink{SM5}{\tiny $\mathsf{5}_{\lambda}$}}] (5) at (7.15, -.65);
    \coordinate [label=below:{\hyperlink{SM6}{\tiny $\mathsf{6}_{\lambda}$}}] (6) at (7.15, -1.85);
    \coordinate [label=right:{\hyperlink{SM9}{\tiny $\mathsf{9}_{0}$}}] (9) at (8.25, -1.175);
    \coordinate [label=above:{\hyperlink{SM14}{\tiny $\mathsf{14}$}}]  (14) at (3.5,0.75);
    \coordinate [label=above:{\hyperlink{SM15}{\tiny $\mathsf{15}$}}]  (15) at (2.75,0.75);    
    \coordinate [label=below:{\hyperlink{SM16}{\tiny $\mathsf{16}$}}]  (16) at (3.5,-0.25);
    \coordinate [label=below:{\hyperlink{SM17}{\tiny $\mathsf{17}$}}]  (17) at (2.75,-0.25);
    \coordinate [label=above right:{\hyperlink{SM18}{\tiny $\mathsf{18}$}}]  (18) at (3.75,.25);    
    \coordinate [label=left:{\hyperlink{SM19}{\tiny $\mathsf{19}$}}]  (19) at (2.5,.25);    
    \coordinate [label=above left:{\hyperlink{SM20}{\tiny $\mathsf{20}$}}] (20) at (5.25, -1.5);
    \coordinate [label=below right:{\hyperlink{SM21}{\tiny $\mathsf{21}$}}] (21) at (5.5, -1);
    \coordinate [label=left:{\hyperlink{SM22}{\tiny $\mathsf{22}$}}] (22) at (4.5, -.5);
    \coordinate [label=above:{\hyperlink{SM23}{\tiny $\mathsf{23}$}}] (23) at (5.25, -.5);
    \coordinate [label=left:{\hyperlink{SM25}{\tiny $\mathsf{25}$}}]  (25) at (4.25, -1); 
    \coordinate [label=below:{\hyperlink{SM26}{\tiny $\mathsf{26}$}}]  (26) at (4.5, -1.5);
    \coordinate [label=right:{\hyperlink{SM24}{\tiny $\mathsf{24}_{\lambda}$}}]  (24) at (5.55,-2); 
    \coordinate [label=above:{\hyperlink{SM27}{\tiny $\mathsf{27}$}}]  (27) at (5.75, -0.25);
    \coordinate [label=above:{\hyperlink{SM11}{\tiny $\mathsf{11}_\chi$}}] (11) at (8.3, 0); 
    %
    %
    \path [fill=green!70!black,opacity=0.2, line width=.1mm] (9) to [in=0,out=90] (5) arc (90:270:0.3) to  [opacity=0,in=165, out=15] (9);
    \path [fill=green!70!black,opacity=0.2, line width=.1mm] (9) to [in=0,out=270] (6) arc (270:90:0.3) to  [opacity=0,in=165, out=15] (9);
    \draw[>=latex, shorten >=0pt, shorten <=0pt, line width=1pt, color=green!70!black, fill=green!70!black,fill opacity=.2] (3) arc (90:450:0.3);
    \draw[>=latex, shorten >=0pt, shorten <=0pt, line width=1pt, color=green!70!black, fill=green!70!black,fill opacity=.2] (3) arc (-90:270:0.3); 
    \draw [>=latex, shorten >=0pt, shorten <=0pt, line width=1pt, color=green!70!black] (10) to [in=90,out=0] (9);
    \draw [>=latex, shorten >=0pt, shorten <=0pt, line width=1pt, color=green!70!black] (9) to [in=15,out=165] (3);
    \draw [>=latex, shorten >=0pt, shorten <=0pt, line width=1pt, dashed, opacity = 0.2, color=green!70!black] (9) to [in=350,out=195] (3);
    \draw[>=latex, shorten >=0pt, shorten <=0pt, line width=1pt, color=gray,rotate=-30] (20) arc (-225:135:0.3); 
    %
    %
    \draw [->,line width=0.5pt,dotted,color=gray] (18) -- (22);
    \draw [->,line width=0.5pt,dotted,color=gray] (12) -- (22);
    \draw [->,line width=0.5pt,dotted,color=gray] (4) -- (21);
    \draw [->,line width=0.5pt,dotted,color=gray] (3) to [in=30, out=180] (20);
    %
    %
    \draw [->,line width=0.5pt,color=blue] (2) -- (13);
    \draw [->,line width=0.5pt,color=blue] (2) -- (1);
    \draw [->,line width=0.5pt,color=blue] (2) -- (10);
    \draw [->,line width=0.5pt,color=blue] (13) -- (12);
    \draw [->,line width=0.5pt,color=blue] (1) -- (12);
    \draw [->,line width=0.5pt,color=blue] (1) -- (4);
    \draw [->,line width=0.5pt,color=blue] (10) -- (4); 
    \draw [->,line width=0.5pt,color=blue] (3) to [in=-120,out=175] (4);
    %
    %
    \draw [->,line width=0.5pt,color=gray] (15) to (19); 
    \draw [->,line width=0.5pt,color=gray] (17) to (19); 
    \draw [->,line width=0.5pt,color=gray] (14) to (18); 
    \draw [->,line width=0.5pt,color=gray] (16) to (18); 
    \draw [->,line width=0.5pt,color=gray] (15) to (14); 
    \draw [->,line width=0.5pt,color=gray] (19) to (18); 
    \draw [->,line width=0.5pt,color=gray] (17) to (16); 
    %
    %
    \draw [->,line width=0.5pt,color=gray] (27) to (21); 
    \draw [->,line width=0.5pt,color=gray] (20) to (21);    
    \draw [->,line width=0.5pt,color=gray] (26) to (20); 
    \draw [->,line width=0.5pt,color=gray] (26) to (25); 
    \draw [->,line width=0.5pt,color=gray] (25) to (22); 
    \draw [->,line width=0.5pt,color=gray] (23) to (22); 
    \draw [->,line width=0.5pt,color=gray] (23) to (21); 
    \draw [->,line width=0.5pt,color=gray] (26) to (23);  
    %
    %
    \coordinate [label=below:{\hyperlink{SM7}{\tiny $\mathsf{7}_{\gamma, \lambda}$}}] (7) at (7.8, -0.7);
    \coordinate [label=below:{\hyperlink{SM8}{\tiny $\mathsf{8}_{-\gamma, \lambda}$}}] (8) at (7.85,-1.4);
    \coordinate [label=above:{\hyperlink{SM9}{\tiny $\mathsf{9}_{\lambda}$}}] (9_2) at (7.7,-1.17);
    %
    %
    \foreach \point in {3, 4, 9, 10}
    \filldraw [color=green!70!black,fill=green!70!black] (\point) circle (1pt);
    \foreach \point in {1, 2}
   \node[lorentzian] at (\point) {};
    \foreach \point in {12,13}
    \node[carrollian] at (\point) {}; 
    \foreach \point in {14, 15, 16, 17, 18, 19}
    \draw [color=gray!90!black] (\point) circle (1pt);
    \foreach \point in {20, 21, 22, 23, 25, 26, 27}
    \draw [color=gray!90!black,fill=gray!90!black] (\point) circle (1pt);
    %
    %
    \begin{scope}[xshift=-1.5cm]
    \draw [line width=1pt,color=gray!50!black] (10.75,-1) rectangle (12.5,1.5);
    \node[lorentzian] at (11,1.25) {};
    \draw (11,1.25) node[color=black,anchor=west] {\small lorentzian};
    \filldraw [color=green!70!black,fill=green!70!black] (11,0.75) circle (1pt) node[color=black,anchor=west] {\small galilean};
    \node[carrollian] at (11,0.25) {};
    \draw (11,0.25) node[color=black,anchor=west] {\small carrollian};
    \draw [color=gray!90!black,fill=gray!90!black] (11,-0.25) circle (1pt) node[color=black,anchor=west] {\small aristotelian};       
    \draw [color=gray!90!black] (11, -0.75) circle (1pt) node[color=black,anchor=west] {\small aristotelian+R}; 
    \end{scope}
  \end{tikzpicture}
  \caption{Homogeneous superspaces and their limits.\\
    (Numbers are hyperlinked to the corresponding superspaces in
    Table~\ref{tab:superspaces}.)}
  \label{fig:super-limits}
\end{figure}

For comparison, we extract from
\cite[Fig.3]{Figueroa-OFarrill:2018ilb} the subgraph corresponding to
spacetimes which admit superisations and show it in
Figure~\ref{fig:sub-limits}.  There are arrows between these two
pictures: taking a superspace to its corresponding spacetime, but
making this explicit seems beyond our combined artistic abilities.

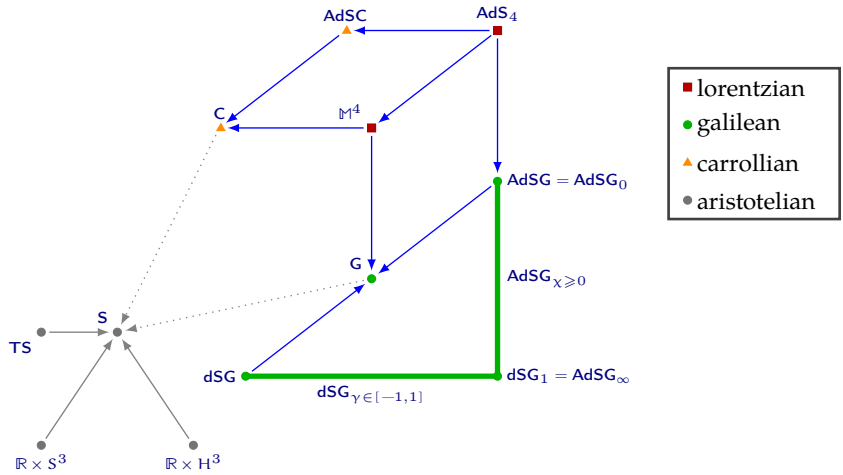
\begin{figure}[h!]
  \centering
  \begin{tikzpicture}[scale=1,>=latex, shorten >=3pt, shorten <=3pt, x=1.0cm,y=1.0cm]
    %
    %
    %
    %
    \coordinate [label=left:{\hyperlink{S8}{\tiny $\zdSG$}}] (dsg) at (5.688048519056286, -2.5838170592960186);
    \coordinate [label=right:{\hyperlink{S9}{\tiny $\ztdSG_1 =\ztAdSG_\infty$}}] (dsgone) at (9, -2.5838170592960186);
    \coordinate [label=above:{\hyperlink{S13}{\tiny $\zC$}}] (c) at (5.344024259528143, 0.7080914703519907);
    \coordinate [label=above left:{\hyperlink{S1}{\tiny $\MM^4$}}] (m) at (7.344024259528143,0.7080914703519907);
    \coordinate [label=above left:{\hyperlink{S7}{\tiny $\zG$}}] (g) at  (7.344024259528143, -1.2919085296480093);
    \coordinate [label=above:{\hyperlink{S3}{\tiny $\zAdS_4$}}] (ads) at (9,2);
    \coordinate [label=right:{\hyperlink{S10}{\tiny $\zAdSG =\ztAdSG_0$}}] (adsg) at (9,0);
    \coordinate [label=above:{\hyperlink{S15}{\tiny $\zAdSC$}}]  (adsc) at (7,2);
    \coordinate [label=above left:{\hyperlink{A21}{\tiny $\zS$}}] (s) at (4, -2);
    \coordinate [label=below left:{\hyperlink{A22}{\tiny $\zTS$}}]  (ts) at (3, -2); 
    \coordinate [label=below:{\hyperlink{A23p}{\tiny $\RR\times S^3$}}]  (esu) at (3, -3.5); 
    \coordinate [label=below:{\hyperlink{A23m}{\tiny $\RR\times H^3$}}]  (hesu) at (5, -3.5); 
    %
    %
    \coordinate [label=below:{\hyperlink{S9}{\tiny $\ztdSG_{\gamma\in[-1,1]}$}}] (tdsg) at (7.344024259528143, -2.5838170592960186);
    \coordinate [label=right:{\hyperlink{S11}{\tiny $\ztAdSG_{\chi\geq0}$}}] (tadsg) at (9, -1.2919085296480093);
    %
    %
    \draw [->,line width=0.5pt,dotted,color=gray] (c) -- (s);
    \draw [->,line width=0.5pt,dotted,color=gray] (g) -- (s);
    %
    %
    \draw [->,line width=0.5pt,color=blue] (adsc) -- (c);
    \draw [->,line width=0.5pt,color=blue] (ads) -- (m);
    \draw [->,line width=0.5pt,color=blue] (adsg) -- (g);
    \draw [->,line width=0.5pt,color=blue] (dsg) -- (g);
    \draw [->,line width=0.5pt,color=blue] (m) -- (c);
    \draw [->,line width=0.5pt,color=blue] (m) -- (g);
    \draw [->,line width=0.5pt,color=blue] (ads) -- (adsc);
    \draw [->,line width=0.5pt,color=blue] (ads) -- (adsg);
    %
    %
    \draw [->,line width=0.5pt,color=gray] (ts) to (s); 
    \draw [->,line width=0.5pt,color=gray] (esu) to (s); 
    \draw [->,line width=0.5pt,color=gray] (hesu) to (s); 
    %
    %
    \begin{scope}[>=latex, shorten >=0pt, shorten <=0pt, line width=2pt, color=green!70!black]
      \draw (adsg) --(dsgone);
      \draw (dsg) -- (dsgone);
    \end{scope}
    \foreach \point in {g,adsg,dsg,dsgone}
    \filldraw [color=green!70!black,fill=green!70!black] (\point) circle (1.5pt);
    \foreach \point in {ads,m}
    \filldraw [color=red!70!black,fill=red!70!black] (\point) ++(-1.5pt,-1.5pt) rectangle ++(3pt,3pt);
    \foreach \point in {adsc,c}
    \filldraw [color=DarkOrange,fill=DarkOrange] (\point) ++(-1pt,-1pt) -- ++(3pt,0pt) -- ++(-1.5pt,2.6pt) -- cycle;
    \foreach \point in {s,ts,esu,hesu}
    \filldraw [color=gray!90!black] (\point) circle (1.5pt);
    %
    %
    \begin{scope}[xshift=0.5cm]
    \draw [line width=1pt,color=gray!50!black] (10.75,-0.5) rectangle (13,1.5);
    \filldraw [color=red!70!black,fill=red!70!black] (11,1.25) ++(-1.5pt,-1.5pt) rectangle ++(3pt,3pt) ; 
    \draw (11,1.25) node[color=black,anchor=west] {\small lorentzian}; 
    \filldraw [color=green!70!black,fill=green!70!black] (11,0.75) circle (1.5pt) node[color=black,anchor=west] {\small galilean};
    \filldraw [color=DarkOrange,fill=DarkOrange] (11,0.25) ++(-1.5pt,-1pt) -- ++(3pt,0pt) -- ++(-1.5pt,2.6pt) -- cycle;
    \draw (11,0.25) node[color=black,anchor=west] {\small carrollian};
    \filldraw [color=gray!90!black] (11,-0.25) circle (1.5pt) node[color=black,anchor=west] {\small aristotelian};       
    \end{scope}
  \end{tikzpicture}
  \caption{Limits between superisable spacetimes}
  \label{fig:sub-limits}
\end{figure}

Nevertheless, interpreting Figures~\ref{fig:super-limits} and
\ref{fig:sub-limits} as posets, with arrows defining the partial
order, the map taking a superspace to its underlying spacetime is
surjective by construction (we consider only superisable spacetimes)
and order preserving, as shown at the start of this section.  As can
be gleaned from Table~\ref{tab:superspaces}, the fibres of this map
are often quite involved, clearly showing the additional ``internal''
structure in the superspace which allows for more than one possible
superisation of a spacetime.

We should mention that despite appearances, superspaces
$\hyperlink{SM3}{\mathsf{SM3}}$ and $\hyperlink{SM4}{\mathsf{SM4}}$
share the same underlying spacetime: namely, the galilean spacetime
$\hyperlink{S7}{\zG}$.  Notice that superspaces
$\hyperlink{SM21}{\mathsf{SM21}}$ and
$\hyperlink{SM22}{\mathsf{SM22}}$, which are ``terminal'' in the
partial order, correspond to the static aristotelian spacetime
$\hyperlink{A21}{\mathsf{\zS}}$.  With the exception of
$\lim_{\chi\to \infty} \hyperlink{SM11}{\mathsf{SM11}_\chi} =
\hyperlink{SM9}{\mathsf{SM9}_0}$, all other non-contracting limits
between superspaces induce limits between the underlying spacetimes
which arise from contractions of the kinematical Lie algebras: the
limits $|\lambda|\to \infty$ of
\hyperlink{SM5}{$\mathsf{SM5}_\lambda$} and
\hyperlink{SM6}{$\mathsf{SM6}_\lambda$} induce the contraction
$\hyperlink{S8}{\zdSG} \to \hyperlink{S7}{\zG}$, whereas the limits
$|\lambda|\to \infty$ of
\hyperlink{SM7}{$\mathsf{SM7}_{\gamma, \lambda}$},
\hyperlink{SM8}{$\mathsf{SM8}_{\gamma, \lambda}$} and
\hyperlink{SM9}{$\mathsf{SM9}_\lambda$} induce the contractions
$\hyperlink{S9}{\zdSG_\gamma} \to \hyperlink{S7}{\zG}$, where
$\gamma = 1$ for \hyperlink{SM9}{$\mathsf{SM9}_\lambda$}.

\section{Conclusions}
\label{sec:conclusions}

In this paper, we have answered the question: \emph{What are
  the possible super-kinematics?} by classifying ($N{=}1$ $d{=}4$)
kinematical Lie superalgebras and their corresponding superspaces.

The Lie superalgebras were classified by solving the Jacobi identities
in a quaternionic reformulation, which made the computations no harder
than multiplying quaternions and paying close attention to the action
of automorphisms in order to ensure that there is no repetition in our
list.  Since we are interested in supersymmetry, we focussed on Lie
superalgebras where the supercharges were not abelian: i.e., we demand
that $[\Q,\Q] \neq 0$ and, subject to that condition, we classified
Lie superalgebras which extend either kinematical or aristotelian Lie
algebras.  The results are contained in Tables~\ref{tab:klsa} and
\ref{tab:alsa}, respectively.

There are two salient features of these classifications.  Firstly, not
every kinematical Lie algebra admits a supersymmetric extension: in
some cases because of our requirement that $[\Q,\Q] \neq 0$, but in
other cases (e.g., $\so(5)$, $\so(4,1)$,...) because the
four-dimensional spinor representation of $\so(3)$ does not extend to
a representation of these Lie algebras.

Secondly, some kinematical Lie algebras admit more than one
non-isomorphic supersymmetric extension. For example, the galilean Lie
algebra admits two supersymmetric extensions, but only one of them
($\hyperlink{KLSA8}{\mathsf{S8}}$) can be obtained as a contraction
of $\osp(1|4)$.  By far most of the Lie superalgebras in our
classification cannot be so obtained and hence are not listed in
previous classifications.  Nevertheless, our ``moduli space'' of Lie
superalgebras is connected, if not always by contractions.  For
example, the other supersymmetric extension of the galilean algebra
($\hyperlink{KLSA7}{\mathsf{S7}}$) can be obtained as a
non-contracting limit of some of the multi-parametric families of Lie
superalgebras in the limit as one of the parameters goes to
$\pm \infty$, in effect compactifying one of the directions in the
parameter space into a circle.

We classified the corresponding superspaces via their super Lie pairs
$(\s,\h)$, where $\s$ is a kinematical Lie superalgebra and $\h$ an
admissible subalgebra.  Every such pair ``superises'' a pair
$(\k,\h)$, where $\k = \s_{\bar 0}$ is a kinematical Lie algebra.  As
shown in \cite{Figueroa-OFarrill:2018ilb}, effective and geometrically
realisable pairs $(\k,\h)$ are in bijective correspondence with
simply-connected homogeneous spacetimes, and hence the super Lie pairs 
$(\s,\h)$ are in bijective correspondence with superisations of such
spacetimes.  These are listed in Table~\ref{tab:superspaces}.

There are several salient features of that table.  Firstly, many
spacetimes admit more than one inequivalent superisation.  Whereas
Minkowski and AdS spacetimes admit a unique ($N{=}1$) superisation,
and so too do the (superisable) carrollian spacetimes, many of the
galilean spacetimes admit more than one and in some cases even a
circle of superisations.

Secondly, there are effective super Lie pairs $(\s,\h)$ for which the
underlying pair $(\k,\h)$ is not effective.  This means that the
``boosts'' act trivially on the underlying spacetime, but nontrivially
in the superspace: in other words, the ``boosts'' are actually
R-symmetries.  Since $(\k,\h)$ is not effective, this means that it
describes an aristotelian spacetime and this gives rise to the class
of aristotelian superspaces with R-symmetry.

Thirdly, there are three superspaces in our list which also appear in
\cite{deMedeiros:2016srz}: namely, Minkowski
($\hyperlink{SM1}{\mathsf{SM1}}$) and AdS
($\hyperlink{SM2}{\mathsf{SM2}}$) superspaces, but also the
aristotelian superspace $\hyperlink{SM26}{\mathsf{SM26}}$, whose
underlying manifold appears in \cite{deMedeiros:2016srz} as the
lorentzian Lie group $\RR \times \SU(2)$ with a bi-invariant
metric.

Lastly, just like Minkowski ($\hyperlink{S1}{\MM^4}$) and carrollian
AdS ($\hyperlink{S15}{\zAdSC}$) spacetimes are homogeneous under the
Poincaré group, their (unique) superisations
($\hyperlink{SM1}{\mathsf{SM1}}$ and
$\hyperlink{SM13}{\mathsf{SM13}}$, respectively) are homogeneous under
the Poincaré supergroup.  This suggests a sort of correspondence or
duality, which we hope to explore in future work.

There are a number of natural extensions to the results in this paper,
which we list in no particular order.  It would be interesting to
classify extended $N{>}1$ superalgebras and superspaces in four
dimensions and also kinematical/aristotelian superalgebras and
superspaces in other dimensions: particularly in three-dimensions due
to their use in Chern--Simons theories (see, e.g.,
\cite{Matulich:2019cdo}).  In the three-dimensional case, it
would be important to determine the possible central charges and also
the existence of invariant inner products.  It would also be
interesting to classify superconformal algebras along the lines of
\cite{Figueroa-OFarrill:2018ygf}, which at least in four dimensions
would be amenable to the quaternionic formalism employed in this
paper.  There has been a great deal of work on Schrödinger
superalgebras, departing from the pioneering work in
\cite{Duval:1993hs}.

As shown in Tables \ref{tab:klsa} and \ref{tab:alsa}, many of these
Lie superalgebras are graded and hence can serve as the starting
ingredient to explore its filtered deformations, as advocated in
\cite{deMedeiros:2016srz,deMedeiros:2018ooy}; perhaps allowing
us to go from the homogeneous models classified in this paper to more
general superspaces.

The underlying spacetimes of the superspaces in
Table~\ref{tab:superspaces} are reductive and hence possess a
canonical invariant connection.  It is a natural question to ask
whether the kinematical superalgebras admit an interpretation as
Killing superalgebras in the spacetimes; that is, whether they are
generated by ``spinor'' fields relative to some connection modifying
the canonical invariant connection.  In fact, as proved in
\cite[§5]{MR2640006} in the context of spin manifolds, this is indeed
the case (see Definition~5.3 in \cite{MR2640006} for the notion of a
generalised Killing spinor).

Finally, along the lines of \cite[§4]{MR2640006}, we could investigate
the invariant connections in the superspaces in
Table~\ref{tab:superspaces}, by determining the space of Nomizu
maps, as was done in \cite{Figueroa-OFarrill:2019sex} for the
homogeneous spacetimes.

\section*{Acknowledgments}

We are very grateful to Andrea Santi for answering many of our
questions about homogeneous supermanifolds and Stefan Prohazka for
comments on a previous version of this paper and which we hope to have
incorporated in a way that has hopefully improved the exposition.  We
are also grateful to an anonymous referee for pointing out that we had
missed some explanation concerning Table~\ref{tab:spacetimes}.  JMF
was visiting the University of Stavanger during the final stages of
writing this paper and he's grateful to Paul de Medeiros for lending
his ear.  He's also grateful to Paul and Sigbjørn Hervik for the
invitation to visit and the hospitality.

\appendix

\section{Lorentzian superspaces}
\label{app:lorentzian}

In this appendix we give the definitions of the lorentzian
superspaces, in a way that is as agnostic as possible about
conventions.  These are precisely the superspaces which also appear in
\cite{deMedeiros:2016srz}, since their supersymmetry algebras are
filtered deformations of subalgebras of the Poincaré superalgebra.

\subsection{Minkowski superspace}
\label{sec:minkowski-superspace}

The ur-example is, of course, Minkowski superspace
(\hyperlink{SM1}{$\mathsf{SM1}$}), which is a homogeneous space of the
Poincaré supergroup and can be described by a pair $(\s,\h)$ as
follows. The kinematical Lie superalgebra $\s$ is the $N{=}1$ Poincaré
superalgebra, which is defined as follows. Let $(V,\eta)$ be a
lorentzian (``mostly minus'') four-dimensional vector space and let
$\so(V)$ denote the skew-symmetric endomorphisms of $V$; that is,
linear maps $\varphi: V \to V$ such that
$\eta(\varphi(v),w) = - \eta(v,\varphi(w))$ for all $v,w\in V$. Let
$\Cl(V)$ denote the corresponding Clifford algebra, with Clifford
relation $v \cdot v = -\eta(v,v) \id$, for all $v \in V$. As a real
associative algebra, $\Cl(V) \cong \End(S)$, where $S$ is a real
four-dimensional irreducible Clifford module. It is also an
irreducible representation (``Majorana spinors'') of
$\so(V) \subset \Cl(V)$; although its complexification (``Dirac
spinors'') decomposes into positive- and negative-chirality
irreducible representations (``Weyl spinors''). On $S$ there is a
symplectic inner product $\left<-,-\right>$ satisfying
\begin{equation}\label{eq:spinorIP}
  \left<v \cdot s_1, s_2\right> = - \left<s_1, v \cdot s_2\right>,
\end{equation}
for all $s_1,s_2 \in S$ and $v \in V$, where $\cdot$ denotes the
Clifford action.  This implies that $\left<-,-\right>$ is
$\so(V)$-invariant.  We define a $\ZZ$-graded vector space
$\s = \s_0 \oplus \s_{-1} \oplus \s_{-2}$, with $\s_0 = \so(V)$,
$\s_{-1} = S$ and $\s_{-2} = V$.  Let
$\s_{\bar 0} = \s_0 \oplus \s_{-2}$ and $\s_{\bar 1} = \s_1$ and we
define on the vector superspace $\s = \s_{\bar 0} \oplus \s_{\bar 1}$
the structure of a Lie superalgebra as follows.  The Lie algebra
structure on $\s_{\bar 0}$ is the Poincaré algebra:
\begin{equation}
  [(A,v), (B,w)] = (AB - BA, A(w) - B(v)),
\end{equation}
or equivalently,
\begin{equation}
  [A, B] = AB - BA, \quad [A,v] = A(v) = - [v,A]\qquad\text{and}\qquad
  [v,w] = 0,
\end{equation}
for $A,B \in \so(V)$ and $v,w\in V$.  We make $\s_{\bar 1}$ into an
$\s_{\bar 0}$-module by declaring $\so(V)$ to act via the spinor
representation and $V$ to act trivially.  Finally, if $s_1,s_2\in
\s_{\bar1}$, their bracket $[s_1,s_2] \in V$ is defined to be the
vector such that, for all $v \in V$,
\begin{equation}
  \eta([s_1,s_2], v) = \left<s_1, v  \cdot s_2\right>,
\end{equation}
which is symmetric by equation~\eqref{eq:spinorIP} and the fact that
$\left<-,-\right>$ is symplectic.  The bracket defines a symmetric
bilinear map $\s_{\bar 1} \times \s_{\bar 1} \to \s_{\bar 0}$ or,
equivalently,  a linear map $\bigodot^2 \s_{\bar 1} \to V \subset
\s_{\bar 0}$ from the symmetric tensor square of $\s_{\bar 1}$.  This
map is surjective and, moreover, $\so(V)$-equivariant
because $\eta$, $\left<-,-\right>$ are $\so(V)$-invariant and Clifford
action is $\so(V)$-equivariant.  The Jacobi identity $[[s,s],s] = 0$
is trivially satisfied because $[s,s] \in V$ and $V$ acts trivially on
$S$.  This defines the Poincaré superalgebra $\s$.  The admissible
subalgebra $\h = \so(V)$ is the Lie subalgebra of Lorentz
transformations, and Minkowski superspace is described by the pair
$(\s,\h)$.  The pair $(\s_{\bar 0},\h)$ defines a homogeneous
spacetime, which is none other than Minkowski spacetime
$\hyperlink{S1}{\MM^4}$.

\subsection{Anti de Sitter superspace}
\label{sec:anti-de-sitter}

The second well-known example is anti de Sitter superspace
(\hyperlink{SM2}{$\mathsf{SM2}$}), whose associated kinematical Lie
superalgebra is isomorphic to $\osp(1|4)$ and whose construction we
now review.  The spin representation of $\so(3,2)$ defines an
isomorphism $\so(3,2) \to \sp(4,\RR)$.  This means that the spinor
representation $S$ is real, symplectic and four-dimensional.  Let
$\left<-,-\right>$ denote the $\so(3,2)$-invariant symplectic inner
product on $S$:
$\left<X \cdot s_1, s_2\right> = - \left<s_1, X \cdot s_2\right>$ for
all $s_1,s_2 \in S$ and $X \in \so(3,2)$.  Let $\kappa$ denote the
Killing form on $\so(3,2)$, which is nondegenerate because $\so(3,2)$
is simple.  Define a bilinear map $[-,-]: S \times S \to \so(3,2)$ by
declaring $[s_1,s_2] \in \so(3,2)$ to be the unique element whose
inner product (relative to the Killing form) with any $X \in \so(3,2)$
is given by
\begin{equation}
  \kappa([s_1,s_2],X) = \left<s_1, X\cdot s_2\right>,
\end{equation}
which is symmetric by the $\so(3,2)$-invariance of the symplectic
structure.  Define a vector superspace $\s = s_{\bar 0}
\oplus \s_{\bar 1}$, with $\s_{\bar 0} = \so(3,2)$ and $\s_{\bar 1} =
S$ and an even bracket on $\s$ by taking it to be the Lie bracket on
$\s_{\bar 0}$, the action of $\so(3,2)$ on $S$  and the above map
$\bigodot^2 S \to \so(3,2)$.  The Jacobi identity follows from the
fact that $\s_{\bar 0}$ is a Lie algebra, $\s_{\bar 1}$ is an
$\s_{\bar 0}$-module, the bracket $\bigodot^2 \s_{\bar 1} \to \s_{\bar
  0}$ is $\s_{\bar 0}$-equivariant (since $\kappa$ and
$\left<-,-\right>$ are $\s_{\bar 0}$-invariant) and because the only
$\so(3,2)$-equivariant linear map $\bigodot^3 S \to S$ is the zero map.
Notice that $[S,S]$ is a nonzero ideal of $\so(3,2)$, but since
$\so(3,2)$ is simple, this is all of $\so(3,2)$.  The resulting simple
Lie superalgebra is isomorphic to $\osp(1|4)$.  We may take for the admissible
subalgebra $\h$ the stabiliser in $\so(3,2)$ of any timelike vector in
$\RR^{3,2}$, which is isomorphic to $\so(3,1) \subset \so(3,2)$.  The
pair $(\s,\h) = (\osp(1|4), \so(3,1))$ defines a homogeneous
superspace whose underlying homogeneous spacetime $(\s_{\bar 0},\h) =
(\so(3,2),\so(3,1))$ is of course anti de Sitter spacetime
$\hyperlink{S3}{\zAdS_4}$.

\subsection{Einstein static superspace}
\label{sec:einst-stat-supersp}

The third and final example of a lorentzian superspace in our
classification is the aristotelian superspace
\hyperlink{SM26}{$\mathsf{SM26}$}, which is one of the superisations
of the Einstein static universe \hyperlink{A23p}{$\RR \times S^3$}.
We shall be brief and refer to \cite[Thm.~14]{deMedeiros:2016srz} for
the details, particularly equation~(98) in that paper, except that
what we call $\h$ in that paper is not the admissible subalgebra as in
this paper, but actually the rotational subalgebra $\r$.  The notation
is as in the case of the Minkowski superspace treated above:
$(V,\eta)$ a ``mostly minus'' lorentzian four-dimensional vector space
and $S$ the real four-dimensional irreducible $\Cl(V)$-module.  Pick a
nonzero timelike vector $\varphi \in V$, whose stabiliser in $\so(V)$
is the rotational subalgebra $\r$.  Define $\widetilde\psi : V \to \r$
by $\widetilde\psi(v) = 2 \imath_v \imath_\varphi \vol\in \wedge^2 V
\cong \so(V)$.  Since for $w \in V$, $\widetilde\psi(v) w = 2 \imath_w
\imath_v \imath_\varphi \vol$, we see that $\widetilde\psi(v) \varphi
= 0$ and hence $\widetilde\psi(v) \in \r$ for all $v \in V$ as
claimed.  Now let $A,B \in \r$, $v,w \in V$ and $s \in S$.  The Lie
brackets $[A,B]$, $[A,s]$, $[A,v]$ and $[s,s]$ are exactly as in the
Poincaré superalgebra, whereas
\begin{equation}
  [v,w] = \widetilde\psi(v)w - \widetilde\psi(w)v
  \qquad\text{and}\qquad [v,s] = -\tfrac12 (v \cdot \varphi + 3
  \varphi \cdot v) \cdot s + \widetilde\psi(v) s.
\end{equation}
Let us choose a pseudo-orthonormal basis $(\be_0, \be_1, \be_2,
\be_3)$ for $V$ and take $\varphi = \be_0$.  Then we have that
$\widetilde\psi(\be_0) = 0$ and so $[\be_0,\be_i]=0$ for all
$i=1,2,3$.  It follows that $\widetilde\psi(\be_i) = -2 \epsilon_{ijk}
\be_j \wedge \be_k$ and hence $[\be_i, \be_j] = 4 \epsilon_{ijk}
\be_k$.  Acting on $s \in S$, $[\be_0, s] = 2 \vol s$.
Calculating from the above formula, $[\be_i, s] = - \epsilon_{ijk}
(\be_j \wedge \be_k) \cdot s$.  Letting $P_i = \tfrac12 \be_i - J_i$,
we find that $[P_i,P_j] = \epsilon_{ijk} J_k$ and that $P_i$ and $J_i$
act in the same way on $S$.  Choosing $\be_0 = -2 H$ and $\be_i = 2
(J_i + P_i)$, we find that (rescaling $s$) the $[s,s]$ bracket is
precisely the one in \hyperlink{SM26}{$\mathsf{SM26}$}.

\providecommand{\href}[2]{#2}\begingroup\raggedright\endgroup


\end{document}